\newcommand\pgfmathsinandcos[3]{%
  \pgfmathsetmacro#1{sin(#3)}%
  \pgfmathsetmacro#2{cos(#3)}%
}
\newcommand\LongitudePlane[3][current plane]{%
  \pgfmathsinandcos\sinEl\cosEl{#2} % elevation
  \pgfmathsinandcos\sint\cost{#3} % azimuth
  \tikzset{#1/.style={cm={\cost,\sint*\sinEl,0,\cosEl,(0,0)}}}
}
\newcommand\LatitudePlane[3][current plane]{%
  \pgfmathsinandcos\sinEl\cosEl{#2} % elevation
  \pgfmathsinandcos\sint\cost{#3} % latitude
  \pgfmathsetmacro\yshift{\RadiusSphere*\cosEl*\sint}
  \tikzset{#1/.style={cm={\cost,0,0,\cost*\sinEl,(0,\yshift)}}} %
}
\newcommand\DrawLongitudeArc[4][black]{
  \LongitudePlane{\angEl}{#2}
  \tikzset{current plane/.prefix style={scale=1}}
  \pgfmathsetmacro\angVis{atan(sin(#2)*cos(\angEl)/sin(\angEl))} %
  \pgfmathsetmacro\angA{mod(max(\angVis,#3),360)} %
  \pgfmathsetmacro\angB{mod(min(\angVis+180,#4),360} %
  \draw[current plane,#1]  (\angA:\RadiusSphere) arc (\angA:\angB:\RadiusSphere);
}%
\newcommand\DrawLatitudeCircle[2][1]{
  \LatitudePlane{\angEl}{#2}
  \tikzset{current plane/.prefix style={scale=1}}
  \pgfmathsetmacro\sinVis{sin(#2)/cos(#2)*sin(\angEl)/cos(\angEl)}
  % angle of "visibility"
  \pgfmathsetmacro\angVis{asin(min(1,max(\sinVis,-1)))}
  \draw[current plane] (\angVis:\RadiusSphere) arc (\angVis:-\angVis-180:\RadiusSphere);
}
\newcommand\DrawLatitudeArc[4][black]{
  \LatitudePlane{\angEl}{#2}
  \tikzset{current plane/.prefix style={scale=1}}
  \pgfmathsetmacro\sinVis{sin(#2)/cos(#2)*sin(\angEl)/cos(\angEl)}
  % angle of "visibility"
  \pgfmathsetmacro\angVis{asin(min(1,max(\sinVis,-1)))}
  \pgfmathsetmacro\angA{max(min(\angVis,#3),-\angVis-180)} %
  \pgfmathsetmacro\angB{min(\angVis,#4)} %

  \draw[current plane,#1] (\angA:\RadiusSphere) arc (\angA:\angB:\RadiusSphere);
}
\newtheorem{thm}{Theorem}[section]
\newtheorem{prp}[thm]{Proposition}
\newtheorem{lem}[thm]{Lemma}
\theoremstyle{remark}
\newtheorem*{rems}{Remarks}
\renewcommand{\theequation}{\thesection.\arabic{equation}}
\numberwithin{equation}{section}
\newcommand{\cercle}[4]{
\node[circle,inner sep=0,minimum size={2*#2}](a) at (#1) {};
\draw (a.#3) arc (#3:{#3+#4}:#2);}
\newcommand{\boldcercle}[4]{
\node[circle,inner sep=0,minimum size={2*#2}](a) at (#1) {};
\draw[very thick] (a.#3) arc (#3:{#3+#4}:#2);}
\DeclareMathOperator{\lcm}{lcm}
\newcommand{\dbtilde}[1]{\tilde{\raisebox{0pt}[0.85\height]{$\tilde{#1}$}}}
\newcommand{\fock}{\mathcal{F}}		% fock space symbol
\newcommand{\di}{{\textnormal{d}}}		% differential (for integrals)
\newcommand{\rfrak}{\mathfrak{r}}
\newcommand{\Tbb}{\mathbb{T}}
\newcommand{\W}{W}
\newcommand{\Wt}{\tilde W}
\newcommand{\D}{D}
\newcommand{\Ecal}{\mathcal{E}}
\newcommand{\Hbb}{\mathbb{H}}
\newcommand{\Ncal}{\mathcal{N}}		% calligraphic N
\newcommand{\Vcal}{\mathcal{V}}		% calligraphic V
\newcommand{\Hcal}{\mathcal{H}}		% calligraphic H
\newcommand{\Ical}{\mathcal{I}}
\newcommand{\Ikp}{\Ical_{k}^{+}}
\newcommand{\Ikm}{\Ical_{k}^{-}}
\newcommand{\Ik}{\Ical_{k}}
\newcommand{\Il}{\Ical_{l}}
\newcommand{\ik}{I}
\newcommand{\lebesgue}{\mu}
\newcommand{\Sbb}{\mathbb{S}}
\newcommand{\Ocal}{\mathcal{O}}		% big-O, order-of
\newcommand{\hc}{\textnormal{h.c.}}		%hermitian conjugate
\newcommand{\cc}[1]{\overline{#1}}	% complex conjugate
\newcommand{\Rbb}{\mathbb{R}}		% real numbers
\newcommand{\Cbb}{\mathbb{C}}		% complex numbers
\newcommand{\Nbb}{\mathbb{N}}		% natural numbers
\newcommand{\Zbb}{\mathbb{Z}}
\newcommand{\Zbbn}{\Zbb^3 \setminus \{0\}}
\renewcommand{\Re}{\operatorname{Re}} 	%RealPart
\newcommand{\id}{\mathbb{I}}
\newcommand{\norm}[1]{\lVert#1\rVert}	%Norm
\newcommand{\tr}{\operatorname{tr}}
\newcommand{\HS}{_{\textnormal{HS}}}
\newcommand{\TR}{_{\textnormal{tr}}}
\newcommand{\OP}{_{\textnormal{op}}}
\newcommand{\sgn}{\operatorname{sgn}}
\newcommand{\tagg}[1]{ \stepcounter{equation} \tag{\theequation} \label{eq:#1} } % add tag and label in align*-environments
\newcommand{\Efrak}{\mathfrak{E}}
\newcommand{\efrak}{\mathfrak{e}}
\newcommand{\nfrak}{\mathfrak{n}}
\newcommand{\north}{\Gamma^{\textnormal{nor}}}
\newcommand{\diag}{\operatorname{diag}}
\newcommand{\diam}{\operatorname{diam}}
\newcommand{\supp}{\operatorname{supp}}
\newcommand{\patch}[3]{{\substack{#1\in B_\textnormal{F}^c\cap B_#3\\#2 \in B_\textnormal{F}\cap B_#3}}}
\title{Optimal Upper Bound for the Correlation Energy\\of a Fermi Gas in the Mean-Field Regime}
\author[1,*]{Niels Benedikter}
\author[2]{Phan Th\`anh Nam}
\author[3]{Marcello Porta} 
\author[4]{Benjamin Schlein}
\author[1]{Robert Seiringer}
\affil[1]{IST Austria, Am Campus 1, 3400 Klosterneuburg, Austria}
\affil[2]{LMU Munich, Department of Mathematics, Theresienstra{\ss}e 39, 80333 M\"unchen, Germany}
\affil[3]{University of T\"ubingen, Department of Mathematics, Auf der Morgenstelle 10, 72076 T\"ubingen, Germany}
\affil[4]{Institute of Mathematics, University of Zurich, Winterthurerstrasse 190, 8057 Zurich, Switzerland}
\affil[*]{corresponding author, \href{mailto:niels.benedikter@ist.ac.at}{niels.benedikter@ist.ac.at}}
\begin{document}

\maketitle

\begin{abstract} While Hartree--Fock theory is well established as a fundamental approximation for interacting fermions, it has been unclear how to describe corrections to it due to many-body correlations. In this paper we start from the Hartree--Fock state given by plane waves and introduce collective particle--hole pair excitations. These pairs can be approximately described by a bosonic quadratic Hamiltonian. We use Bogoliubov theory to construct a trial state yielding a rigorous Gell-Mann--Brueckner--type upper bound to the ground state energy. Our result justifies the random-phase approximation in the mean-field scaling regime, for repulsive, regular interaction potentials.
 \end{abstract}

\tableofcontents

\section{Introduction}
While Hartree--Fock theory describes some aspects of interacting fermionic systems very well, it utterly fails at others. The best known example is that Hartree--Fock theory predicts a vanishing density of states at the Fermi momentum, which is incompatible with measurements of the conductivity and specific heat in metals \cite{GV05}. It is therefore important to develop a rigorous understanding of many-body corrections to Hartree--Fock theory. The simplest theory of many-body correlations is the \emph{random-phase approximation (RPA)}.

In this paper we show that the RPA is mathematically rigorous, insofar as the RPA correlation energy provides an upper bound on the ground state energy of interacting fermions in the mean-field scaling regime. Our approach also sheds some light on the emergence of bosonic collective modes in the Fermi gas, described by an effective quadratic Hamiltonian.

\medskip

We consider a system of $N\gg 1$ fermionic particles with mass $m >0$ in the torus $\Tbb^3 = \Rbb^3/(2\pi\Zbb^3)$, interacting via a two-body potential $V$, in the mean-field scaling regime. Setting
\[\hbar = N^{-1/3}\,,\]
the Hamiltonian is defined as
\[
H_N := -\frac{\hbar^2}{2m} \sum_{i=1}^N \Delta_{x_i} + \frac{1}{N}\sum_{1\leq i<j \leq N} V(x_i - x_j)\;,
\]
and acts on the Hilbert space $L^2_a\left((\Tbb^3)^N\right)$ consisting of square-integrable functions that are anti-symmetric under permutations of the $N$ arguments. For simplicity we consider only the spinless case\footnote{For the analogous model of fermions with spin we can repeat our construction of an upper bound for the correlation energy treating the spin states as independent. In general of course spin gives rise to many intricate phenomena such as formation of spin density waves, in fact already on the level of Hartree--Fock theory \cite{GL18}.}. The choice of $\hbar = N^{-1/3}$ and coupling constant $1/N$ defines the fermionic mean-field regime: it guarantees that both kinetic and potential energies are of order $N$, as $N \to \infty$ (see \cite{BPS16} for a detailed introduction).

The ground state energy of the system is defined as
\begin{equation}
\label{eq:gse-HN}E_N := \inf_{\substack{\psi \in L^2_a\left((\Tbb^3)^N\right)\\\norm{\psi} = 1}} \langle \psi, H_N \psi \rangle\;.\end{equation}
In Hartree--Fock theory, one restricts the attention to Slater determinants
\[
\psi_\text{Slater} (x_1, \dots , x_N) = \frac{1}{\sqrt{N!}} \sum_{\sigma \in S_N} \sgn(\sigma) f_1 (x_{\sigma(1)}) f_2 (x_{\sigma(2)}) \dots f_N (x_{\sigma(N)})
\]
with  $\{f_j\}_{j=1}^N$ an orthonormal set in $L^2 (\Rbb^3)$. Slater determinants are an example of quasi-free states: all reduced density matrices can be expressed in terms of the one-particle reduced density matrix $\omega := N\tr_{2,\ldots, N} |\psi\rangle \langle \psi |$. For a Slater determinant, one has $\omega = \sum_{j=1}^N \lvert f_j \rangle \langle f_j\rvert$. In particular, the energy of a Slater determinant is given by the Hartree--Fock energy functional, depending only on $\omega$:
\[
\begin{split}
& \mathcal{E}_\text{HF}(\omega) := \langle \psi_\text{Slater}, H_N \psi_\text{Slater} \rangle \\ & = \tr \, \left(\frac{- \hbar^2}{2m} \Delta  \omega \right)+ \frac{1}{2N} \int \di x \di y V(x-y) \omega (x,x) \omega (y,y) 
- \frac{1}{2N} \int \di x \di y V(x-y) \lvert \omega (x,y) \rvert^2\,.\end{split}
\]
(The first two summands are typically of order $N$ and called the kinetic and direct term, respectively; the third summand is typically of order $1$ and called the exchange term.)
Thus, minimizing $\mathcal{E}_\text{HF}(\omega)$ over all orthogonal projections $\omega$ with $\tr \, \omega = N$ gives an upper bound to the ground state energy $E_N$. Actually, it turns out that Hartree--Fock theory provides more than an upper bound for the ground state energy: the method developed in  \cite{Bac92, Bac93, GS94} for the jellium model can also be applied to show that in the present mean-field scaling the Hartree--Fock minimum agrees with the many-body ground state energy up to an error of size $o(1)$ for $N \to \infty$. Moreover, by projection of the time-dependent Schr\"odinger equation onto the manifold of quasi-free states  one obtains the time-dependent Hartree--Fock equation \cite{BSS18}, which was proven to effectively approximate the many-body evolution of mean-field fermionic systems \cite{BJP+16, BPS14a, BPS14, BPS14b, PRSS17, Saf18}. 

For $N$ \emph{non-interacting} particles on the torus, the ground state is given by the Slater determinant constructed from plane waves
\begin{equation}\label{eq:plane-wave} f_k (x) = (2\pi)^{-3/2} e^{i k \cdot x}, \quad k \in \Zbb^3,
\end{equation}
where the momenta $k_{1}, \ldots, k_{N}\in \Zbb^{3}$ are chosen to minimize the kinetic energy in a way compatible with the Pauli principle; i.\,e., by filling the Fermi ball, up to the Fermi momentum $k_\textnormal{F}$. The energy $E_\textnormal{F} := k_\textnormal{F}^2/(2m)$ is called the Fermi energy, and the sphere $k_F \Sbb^2$ of radius $k_F$ is called the Fermi surface. (We assume that $N$ is chosen so that this state is unique, no modes in the Fermi ball being left empty.) We shall denote by $\omega_{\textnormal{pw}}$ the reduced one-particle density matrix of this state, 
\[
\omega_{\textnormal{pw}} = \sum_{i=1}^{N} |f_{k_{i}} \rangle \langle f_{k_{i}}|\;.
\]
It turns out that this simple state is a stationary state of the Hartree--Fock energy functional even with interactions, and in our setting provides a good approximation to the minimum of the Hartree--Fock functional. The focus of the present paper is to quantify the effect of correlations in the true many-body ground state: in particular, we shall be interested in the \emph{correlation energy}, defined as the difference of the ground state energy and the Hartree--Fock energy of the plane wave state\footnote{This is the definition used by Gell-Mann and Brueckner \cite{GB57}. Some authors define the correlation energy with respect to the minimum of the Hartree--Fock functional instead. For the present translation invariant setting, it was recently proved \cite{GHL18} that the energy of the plane wave state and the minimal Hartree--Fock energy differ only by an exponentially small amount as $N\to\infty$. However, for systems that are not translation invariant, the ground state of non-interacting fermions will not even be a stationary point of the interacting Hartree--Fock functional. In this case it is important to take the true Hartree--Fock minimizer as reference point.}, $E_{N} - \mathcal{E}_\textnormal{HF}(\omega_\textnormal{pw})$.

The quest of calculating the correlation energy has been a driving force in the early development of theoretical condensed matter physics. Let us discuss the case of the jellium model: that is, fermions interacting via Coulomb repulsion, exposed to a neutralizing background charge on the torus, in the large volume limit. Let us consider the ground state energy per volume of the system, in the high density regime. As noticed already by Wigner \cite{Wig34} and Heisenberg \cite{Hei47}, the computation of the correlation energy is an intricate matter because perturbation theory with respect to the Coulomb potential becomes more and more infrared divergent at higher orders. It was however quickly understood that these divergences are an artefact of perturbation theory \cite{Mac50}; a partial resummation of the perturbative expansion allows to capture the effect of screening, that ultimately trades the infrared divergence for a $\rho \log \rho$ contribution ($\rho$ being the density) to the ground state energy. 

In their seminal work \cite{Pin53, BP53} Bohm and Pines related the screening of the Coulomb potential to an auxiliary bosonic mode called the plasmon, and coined the name ``random-phase approximation''; see also \cite{Gas61} for a reformulation of their result using Jastrow--type states. Gell-Mann and Brueckner showed that the RPA can be seen as a systematic resummation of the most divergent diagrams of perturbation theory \cite{GB57}, which has become the most popular point of view for physicists. Another interpretation of the RPA was given by Sawada et al.\ in \cite{Saw57,SBFB57} as an effective theory of approximately bosonic particle--hole pairs. A systematic mapping of particle--hole pairs to bosonic operators was introduced by Usui in \cite{Usu60} but does not lead to a quadratic Hamiltonian. (In Usui's approach there are parallels to bosonization in the Heisenberg model \cite{Dys56,Dys56a,HP40,CGS15}, which also gives rise to interesting problems in the calculation of higher order corrections to the free energy \cite{Ben17}.) Sawada's approach has been systematically related to perturbation theory in \cite{AP75}.  Sawada's effective Hamiltonian has proved useful for further investigations into diamagnetism and the Meissner effect \cite{Wen57}. While Sawada's concept of bosonic pairs is very elegant, it remained unclear which parameter makes the error of the bosonic approximation small. This was clarified many years later, highlighting the role of collective excitations delocalized over many particle--hole pairs \cite{CF94,CF95,FG97,FGM95,FGM95a,Hal94,HKMS94,HM93,KC96,KHS95,KS96,Lut79,Ng00}; the main idea being that collective excitations of pairs of fermions do not experience the Pauli exclusion principle if they involve many fermionic modes of which only few are occupied. 

Concerning rigorous works for the jellium model, the only available result for the correlation energy is the work of Graf and Solovej \cite{GS94}, which provided an upper and lower bound proportional to $\rho^{4/3 - \delta}$ for some $\delta > 0$. This bound has been obtained using correlation inequalities for the many-body interaction together with semiclassical methods. Unfortunately, this is still far from the expected $\rho \log \rho$ behavior: to improve on \cite{GS94}, new ideas are needed.

In the context of interacting fermions in the mean-field regime, the first rigorous result on the correlation energy has been recently obtained in \cite{HPR18}, for small interaction potentials, via upper and lower bounds matching at leading order. One has: 
\begin{equation}\label{eq:HPR18}
\lim_{N \to \infty} \frac{E_N-\Ecal_\textnormal{HF}(\omega_\textnormal{pw})}{\hbar} = - m \pi (1-\log(2)) \sum_{k \in \Zbb^3} \lvert k\rvert \hat{V}(k)^2 \big(1 + \Ocal(\hat V(k)) \big)\;.
\end{equation}
The strategy of \cite{HPR18} is based on a rigorous formulation of second order perturbation theory following \cite{HS02, Hai03, CH04, HHS05}, combined with methods developed in the context of many-body quantum dynamics \cite{BPS14a, BPS14b, BJP+16, PRSS17}. For larger interaction potentials however, this method is limited to a lower bound of the right order in $\hbar$ and $N$ but not capturing the precise value.
 
Here we shall provide a rigorous upper bound on the correlation energy, without any smallness assumption on the size of the potential. It improves on the upper bound of \cite{HPR18}, to which it reduces in the limit of small interactions. The method of the proof is inspired by a mapping of the particle--hole excitations around the Fermi surface to emergent bosonic degrees of freedom: this allows to estimate the correlation energy in terms of the ground state energy of a \emph{quadratic, bosonic Hamiltonian}. The expression we obtain, if formally extrapolated to the infinite volume limit, agrees with the Gell-Mann--Brueckner formula for the jellium model. 

Our method can be seen as a rigorous version of the \emph{Haldane--Luther bosonization} for interacting Fermi gases, a nonperturbative technique widely used in condensed matter physics; see \cite{Kop97} for a review. To our knowledge, this is the first time that this method is formulated in a mathematically rigorous setting. We believe that this method, possibly combined with \cite{HPR18}, will be crucial to rigorously understand the correlation energy for a large class of high density Fermi gases, including the jellium model.

\medskip

Correlation corrections to the ground state energy of interacting Bose gases have been studied to a much larger extent. Upper and lower bounds have been proven for the mean-field scaling regime in \cite{Sei11,GS13,DN14,LNSS15,Piz15,Piz15a}, for the jellium model in \cite{LS01,LS04,Sol06}, for the Gross--Pitaevskii scaling regime in \cite{BBCS18}, and in an intermediate scaling regime in \cite{GS09,BBCS17,BS19}. The Lee--Huang--Yang formula for the low-density limit has been proven as an upper bound in \cite{ESY08} for small potential and in \cite{YY09} for general potential, and only very recently as a lower bound \cite{FS19}.

\section{Main Result}
In this section we present our main result, Theorem \ref{thm:main}. Our theorem provides an upper bound for the ground state energy, which is consistent with the Gell-Mann--Brueckner formula for the correlation energy. 

Notice that for the interaction potential we normalize the Fourier transform such that $\hat{V}(k) = (2\pi)^{-3} \int \di x\, e^{-ik\cdot x} V(x)$, whereas for wave functions we choose it unitary in $L^2$.
\begin{thm}[Upper Bound for the Ground State Energy]\label{thm:main}
Let $\hat{V}: \Zbb^3 \to \mathbb{R}$ be non-negative and compactly supported.
Let $k_F > 0$ be the Fermi momentum and $N := \lvert\{ k \in \Zbb^3 : \lvert k\rvert \leq k_F\}\rvert$ the number of particles; recall that $\hbar = N^{-1/3}$. Let 
$\omega_\textnormal{pw} := \sum_{k \in \Zbb^3 : \lvert k\rvert \leq k_F} \lvert f_k \rangle \langle f_k \rvert$ be the projection on the filled Fermi ball. Then, asymptotically for  $k_F \to \infty$, the ground state energy \eqref{eq:gse-HN} satisfies the upper bound
\begin{align} E_N &\leq  \mathcal{E}_\textnormal{HF}(\omega_\textnormal{pw}) \nonumber\\
& \quad + \frac{\hbar \kappa_0}{2m} \sum_{k \in \Zbb^3} \lvert k\rvert \left[ \frac{1}{\pi} \int_0^\infty \log \left(1+4\pi\hat{V}(k) m\kappa_0 \left( 1 - \lambda \arctan \frac{1}{\lambda} \right)\right) \di \lambda - \hat{V}(k)m\kappa_0\pi \right]\nonumber\\
& \quad + \hbar\, \Ocal(N^{-1/27})\;,\label{eq:mainresult}\end{align}
where $\kappa_0 = (\frac{3}{4\pi})^{1/3}$.
% (which is related to the Fermi energy by $E_\textnormal{F} = \kappa_0^2/(2m) + \mathcal{O}(N^{-1/3})$.)
\end{thm}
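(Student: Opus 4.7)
The plan is to construct an explicit trial state capturing the collective particle--hole modes near the Fermi surface, evaluate its energy, and match the right-hand side of \eqref{eq:mainresult}. First I would conjugate $H_N$ with the unitary particle--hole transformation $R$ sending the plane-wave Slater determinant to the Fock vacuum $\Omega$. Standard normal-ordering yields $R^* H_N R = \mathcal{E}_{\textnormal{HF}}(\omega_{\textnormal{pw}}) + \mathbb{H}$ with $\mathbb{H}$ containing quadratic excitation terms and residual quartic interaction, so the theorem reduces to producing a unit vector $T\Omega$ with $\langle T\Omega, \mathbb{H}\, T\Omega\rangle$ equal to the Gell-Mann--Brueckner integral up to an error $\hbar\cdot O(N^{-1/27})$.

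The trial state will be built through an effective quasi-bosonic theory of particle--hole pairs. I would partition the Fermi sphere into $M = N^\delta$ spherical caps $\{B_\alpha\}_{\alpha=1}^M$ of diameter $\sim k_{\textnormal{F}}/\sqrt{M}$ centered at unit vectors $\hat\omega_\alpha$, for a small exponent $\delta > 0$ to be optimized. For each $k\in\mathrm{supp}\,\hat V$ and each patch $\alpha$ with $k\cdot\hat\omega_\alpha > 0$ I define collective pair operators
\[
b^*_\alpha(k) := n_\alpha(k)^{-1/2}\sum_{\substack{p\in B_{\textnormal{F}}^c\cap B_\alpha\\ h\in B_{\textnormal{F}}\cap B_\alpha\\ p-h=k}} a^*_p\, a^*_h,
\]
where $n_\alpha(k)$ counts admissible pairs and scales like $k_{\textnormal{F}}^2/M$. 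A direct computation yields $[b_\alpha(k), b^*_\beta(\ell)] = \delta_{\alpha\beta}\delta_{k\ell} + \mathcal{E}_{\alpha\beta}(k,\ell)$ where the Pauli error $\mathcal{E}_{\alpha\beta}$ is formally of size $n_\alpha(k)^{-1}$ times a fermionic number operator, so the $b_\alpha(k)$ act as approximate canonical bosons on low-energy states.

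Next I would rewrite the leading part of $\mathbb{H}$ in these modes. Linearizing the kinetic energy near the Fermi surface yields $[\mathbb{H}_{\textnormal{kin}}, b^*_\alpha(k)] \approx \hbar v_{\textnormal{F}}(k\cdot\hat\omega_\alpha)\,b^*_\alpha(k)$ with $v_{\textnormal{F}} = \kappa_0/m$, the curvature error being of order $\hbar k_{\textnormal{F}}/M$ per pair. The direct interaction takes the bosonic form
\[
\frac{\hat V(k)}{2N}\sum_{\alpha,\beta}\sqrt{n_\alpha(k)n_\beta(k)}\bigl(b^*_\alpha(k) + b_\alpha(-k)\bigr)\bigl(b_\beta(k) + b^*_\beta(-k)\bigr),
\]
while exchange and genuinely quartic residuals are of lower order. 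The resulting effective Hamiltonian is quadratic in the quasi-bosons and block diagonal in $k$; I would diagonalize each block by a Bogoliubov rotation $T_k = \exp\!\bigl(\tfrac12\sum_{\alpha,\beta}K_{\alpha\beta}(k)\,b^*_\alpha(k)b^*_\beta(-k) - \textnormal{h.c.}\bigr)$ and set $\psi_{\textnormal{trial}} := R\,T\,\Omega$ with $T = \prod_k T_k$. The ground-state energy of each block has the matrix form $\tfrac12\bigl[\mathrm{tr}\sqrt{h(k)(h(k) + 2\hat V(k)P(k))} - \mathrm{tr}\, h(k)\bigr]$ with $h(k) = \mathrm{diag}(2\hbar v_{\textnormal{F}}|k\cdot\hat\omega_\alpha|)$ and $P(k)$ a rank-one matrix with entries $\sqrt{n_\alpha(k)n_\beta(k)}/N$. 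As $M\to\infty$ the patch sum becomes an integral over $\mathbb{S}^2$; a contour-integral representation of $\mathrm{tr}\sqrt{\cdot}$ combined with the substitution $\lambda = \cot\theta$ (with $\theta$ the angle between $k$ and $\hat\omega$) reproduces $\frac{1}{\pi}\int_0^\infty \log\bigl(1+4\pi\hat V(k) m\kappa_0(1-\lambda\arctan(1/\lambda))\bigr)\di\lambda$, and the subtracted $-\hat V(k)m\kappa_0\pi$ arises from removing the term linear in $\hat V$ (the tail of the logarithm at large $\lambda$) which is already counted in $\mathcal{E}_{\textnormal{HF}}(\omega_{\textnormal{pw}})$.

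The hard part is to control all errors simultaneously. Three sources compete as $\delta$ varies: (a) the Pauli correction contributing at most $M/n_\alpha(k) \sim M^2/N^{2/3}$ times a second moment of the quasi-boson number operator on $T\Omega$; (b) the kinetic linearization error of size $\hbar k_{\textnormal{F}}/M$ per excitation; and (c) the Riemann-sum error from discretizing the spherical integral. Since $T$ creates of order $M$ quasi-bosonic excitations, balancing (a)--(c) forces a specific polynomial choice of $M$ producing the exponent $N^{-1/27}$. The technically most delicate step is to show that the non-bosonizable quartic residuals, together with the error operators $\mathcal{E}_{\alpha\beta}$ inserted inside the kinetic and interaction expressions, remain subleading on $T\Omega$ in expectation; this relies on sharp bounds on moments of the excitation-number operator under the Bogoliubov rotation, obtained from Gronwall-type estimates for the conjugated number operator $T^*\mathcal{N}T$ using the explicit structure of the kernel $K$.
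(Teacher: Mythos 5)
Your overall strategy coincides with the paper's: particle--hole conjugation to a correlation Hamiltonian, decomposition of the Fermi sphere into $M$ patches, collective pair operators that are approximately bosonic, a quasi-bosonic Bogoliubov trial state, evaluation of the resulting energy as a trace formula, and Riemann-sum convergence to the Gell-Mann--Brueckner integral, with Gr\"onwall control of $T^*\Ncal T$. The architecture is right.

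There is, however, a genuine gap in your error analysis, and as stated it would prevent the proof from closing. You estimate the Pauli correction at roughly $M/n_\alpha(k)\sim M^2/N^{2/3}$ and the kinetic-linearization error at $\hbar k_{\textnormal{F}}/M\sim 1/M$ per excitation, which relative to the correlation energy $\sim\hbar$ is $N^{1/3}/M$. These two requirements are incompatible: making $M^2/N^{2/3}$ small forces $M\ll N^{1/3}$, while making $N^{1/3}/M$ small forces $M\gg N^{1/3}$. Your suggestion that $M=N^\delta$ with ``small $\delta$'' points in the wrong direction; in fact one must take $N^{1/3}\ll M\ll N^{2/3}$ (the paper takes $M=N^{1/3+\epsilon}$ with $\epsilon=1/27$). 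The resolution is a sharper control of the Pauli errors: the total error is \emph{linear} in $M$, not quadratic. The key input is Lemma~\ref{lem:inverses} of the paper, the bound $\|K(k)\|_{\textnormal{HS}}\leq\|K(k)\|_{\textnormal{tr}}\leq C\hat V(k)$ uniformly in $N$ and $M$, which holds because the matrices $D(k)+W(k)\pm\widetilde W(k)$ in the block diagonalization differ by a rank-one perturbation $2g(k)\lvert v\rangle\langle v\rvert$ of a diagonal matrix. This trace-class bound on the Bogoliubov kernel, combined with the $\ell^2$-structure of the almost-bosonic operator bounds (Lemma~\ref{lem:bosopbound}) and the approximate-Bogoliubov propagation estimate (Proposition~\ref{lem:bog1}), is what saves the would-be factor of $M$ you are losing by summing $1/n_{\alpha,k}$ over patches. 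You gesture at ``the explicit structure of the kernel $K$,'' but without identifying that $\|K(k)\|_{\textnormal{HS}}$ is bounded \emph{uniformly in $M$} -- and without the corresponding Cauchy--Schwarz/Hilbert--Schmidt organization of the error terms -- your balance of (a) and (b) cannot be made to work and the $N^{-1/27}$ exponent is not reachable.

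Two smaller remarks. First, the patches cannot literally be spherical caps centered at the $\hat\omega_\alpha$ (caps do not tile $\Sbb^2$); one needs a diameter-bounded equal-area partition in the style of Leopardi, with explicit corridors of width $\gtrsim R$ so that particle--hole pairs do not straddle patch boundaries. Second, the subtracted term $-\hat V(k)m\kappa_0\pi$ is not something ``already counted in $\mathcal E_\textnormal{HF}$''; it is the $-\tfrac12\tr W(k)$ part of the Bogoliubov ground-state formula $\tfrac12\tr E(k)-\tfrac12\tr(D(k)+W(k))$, which after the patch sum converges to $-g(k)\pi$. The $\lambda$ in the final formula originates as the resolvent parameter in the integral representation $\sqrt A=\tfrac2\pi\int_0^\infty(\id-\lambda^2(A+\lambda^2)^{-1})\di\lambda$, after which the spherical Riemann sum over $\hat\omega\cdot\hat k$ produces the $1-\lambda\arctan(1/\lambda)$ factor; it is not a change of variables $\lambda=\cot\theta$ in the angular integral.
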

\begin{rems}
\ 
\begin{enumerate}
\item We conjecture that there is actually equality in \eqref{eq:mainresult}; i.\,e., a corresponding lower bound, possibly with different error exponent, should hold.
\item Recall that the Hartree--Fock energy $\mathcal{E}_\textnormal{HF}(\omega_\textnormal{pw})$ consists of kinetic energy (order $N$), direct interaction energy (order $N$), and exchange interaction energy (order $1$). Our many-body correction is of order $\hbar = N^{-1/3}$. As expected, it is negative, so that it improves over $\mathcal{E}_\textnormal{HF}(\omega_\textnormal{pw})$.

\item Notice that already with regular interaction potential the correlation correction at order $\hbar$ involves arbitrarily high powers of the interaction potential.

\item If we formally extrapolate our formula to the jellium model it agrees with the correlation energy first obtained by Gell-Mann and Brueckner \cite[Equation~(19)]{GB57} as a power series; see also \cite[Equation~(37)]{SBFB57} for the first appearance of the explicit expression. Gell-Mann and Brueckner also obtain a contribution from a second order exchange-type term denoted ${\epsilon_b}^{(2)}$; for us, in mean-field scaling and with compactly supported $\hat{V}$, this term is only of order $\hbar^2$. However, since our trial state captures the second order direct-type term correctly and can be expanded in powers of $\hat{V}$, we expect that it would also capture the second order exchange-type term in models where it has a bigger contribution.

\item For small interaction potentials $\hat V$, we can expand
\[
\begin{split}
  & \frac{1}{\pi} \int_0^\infty \log \left(1+4\pi\hat{V}(k) m\kappa_0 \left( 1 - \lambda \arctan \frac{1}{\lambda} \right)\right) \di \lambda - \hat{V}(k)m\kappa_0\pi  \\
  &\hspace{5cm}  = - \frac{8 \pi^2}{3} \hat{V}(k)^2 m^2 \kappa^2_0  \left(1-\log(2)\right)  + \mathcal{O}\left(\hat{V}(k)^3\right).
\end{split}
\]
Therefore
\[
\frac{E_N - \Ecal_\textnormal{HF}(\omega_\textnormal{pw})}{\hbar} \leq - m \pi (1-\log(2)) \sum_{k \in \Zbb^3} \lvert k\rvert \hat{V}(k)^2 (1 + \Ocal(\hat V(k))) + \Ocal(N^{-1/27}).
\]
This is consistent with \cite{HPR18}, see \eqref{eq:HPR18} (notice that \cite{HPR18} considered the Fermi gas in $[0,1]^3$ instead of $[0,2\pi]^3$). Whereas \cite{HPR18} uses rigorous second-order perturbation theory, here we use a non-perturbative bosonization method which directly yields a resummation of the dominant contributions of the perturbation series both of the ground state and the ground state energy to all orders in the potential.
\item The assumption of $\hat{V}$ being compactly supported is mainly used to control the number of particle--hole pairs that may be lost near the boundaries of patches (see Section \ref{sec:continuumapprox}) and to avoid interaction between different patches across the separating corridors (see Figure \ref{fig:nontouching}). A sufficiently fast power law decay of $\hat{V}(k)$ for large $k$ should also be sufficient to control such error terms but to keep the presentation readable we do not follow up on this question here.
\end{enumerate}
\end{rems}
In the remaining part of the paper we prove Theorem \ref{thm:main}. Our proof is based on a reorganization of the particle--hole excitations around the Fermi surface in terms of approximately bosonic collective degrees of freedom, which we will introduce in the next section.
Notice that $1/m$ can be factored out from the Hamiltonian, replacing the potential $V$ by $mV$, so we consider only $m=1$ and the dependence on $m$ is easily restored at the end.

\section{Collective Particle-Hole Pairs} 
In this section we represent the correlation energy in terms of particle--hole excitations around the Fermi surface. These excitations will be described by quadratic fermionic operators on the Fock space, that behave as almost bosonic operators. The advantage of this rewriting is that the correlation energy can thus be related to the ground state energy of a \emph{quadratic} almost-bosonic Hamiltonian.

\subsection{The Correlation Hamiltonian}
Here we shall introduce a Fock space representation of the model. We shall follow the notations of \cite[Chapter 6]{BPS16}, to which we refer for more details. Let $\mathcal{F}:= \mathcal{F}(L^{2}(\Tbb^{3}))$ be the fermionic Fock space built on the single-particle space $L^{2}(\Tbb^{3})$. Let us denote by $\mathcal{H}_{N}$ the second quantization of $H_{N}$. We have
\[\mathcal{H}_N = \frac{\hbar^2}{2} \int \di x \nabla_x a^*_x \nabla_x a_x + \frac{1}{2N} \int \di x \di y\, V(x-y) a^*_x a^*_y a_y a_x\,,\]
where $a^{*}_{x}$, $a_{x}$ are the creation and annihilation operators (more precisely, operator-valued distributions), creating or annihilating a fermionic particle at $x \in \Tbb^3$. They satisfy the usual canonical anticommutation relations (CAR)
\begin{equation}\label{eq:car}\{a_x,a_y\} = 0 = \{a^*_x,a^*_y\}, \quad \{a_x,a^*_y\} = \delta(x-y)\,.\end{equation}
Given a function $f \in L^{2}(\Tbb^{3})$ we also define $a(f) := \int \di x\, a_{x} \overline{f(x)}$ and $a^{*}(f) = \left(a(f)\right)^{*}$. 

\medskip

Let us define the Fermi ball
\[
B_{\textnormal{F}} := \{ k\in \Zbb^{3} : |k|\leq k_{\textnormal{F}} \}\;,
\]
where $k_{\textnormal{F}}$ is the Fermi momentum. Let $N$ be the number of points in the Fermi ball, $N := \lvert B_{\textnormal{F}} \rvert$. Then, by Gauss' classical counting argument, 
\begin{equation}\label{eq:gauss}
\begin{split}
k_{\textnormal{F}} = \kappa N^{1/3}\;,\qquad \kappa = \kappa(N) & =  (3/4\pi)^{1/3} + \Ocal(N^{-1/3})\\
& =: \kappa_0 + \Ocal(N^{-1/3}) \;.
\end{split}
\end{equation}
We also introduce the complement of the Fermi ball,
\[
B_{\textnormal{F}}^{c} = \Zbb^{3} \setminus B_{\textnormal{F}}.
\]

The filled Fermi ball is obtained by considering the Slater determinant $\psi_\textnormal{pw}$ built from the plane waves $f_{k_{i}}(x) = (2\pi)^{-3/2} e^{ik_{i}\cdot x}$, associated to the points $k_{i} \in B_{\textnormal{F}}$, $i = 1,\ldots, N$. Let $\omega_{\textnormal{pw}}$ be the reduced one-particle density matrix associated to such states, $\omega_{\textnormal{pw}} = \sum_{i=1}^{N} \lvert f_{k_{i}} \rangle \langle f_{k_{i}} \rvert$.  With the plane waves $f_k$ defined in \eqref{eq:plane-wave}, we define the unitary\footnote{It is an amusing exercise to check that $R_{\omega_\textnormal{pw}}$ is invertible; in fact $R_{\omega_\textnormal{pw}} = R^{-1}_{\omega_\textnormal{pw}}$. Furthermore, $R_{\omega_\textnormal{pw}}$ is clearly isometric, and thus unitary.} particle--hole transformation $R_{\omega_\textnormal{pw}}: \fock \to \fock$ by setting 
\[R_{\omega_\textnormal{pw}} a (f_k) R^*_{{\omega_\textnormal{pw}}} := \left\{ \begin{array}{cc} a(f_k) & \text{for }k \in B_\textnormal{F}^c\\ a^*(\cc{f_k}) & \text{for }k \in B_\textnormal{F}\end{array} \right. \qquad \text{and} \qquad R_{\omega_\textnormal{pw}} \Omega := \psi_\textnormal{pw}\, . \]
Here we introduced the vacuum vector $\Omega = (1,0,0,\ldots) \in \fock$. Particle-hole transformations are a particular kind of fermionic Bogoliubov transformation. 
In fact, formally writing $a_x = a(\delta(\cdot-x))$ and $\delta(y-x) = \sum_{k \in \Zbb^3} f_k(y) \cc{f_k(x)}$ one can rewrite the previous relation in position space,
\begin{equation}\label{eq:phtrafo}R_{\omega_\textnormal{pw}} a_x R^*_{\omega_\textnormal{pw}} = a(u_x) + a^*(\cc{v}_x)\,, \qquad R_{\omega_\textnormal{pw}} a^*_x R^*_{\omega_\textnormal{pw}} = a^*(u_x) + a(\cc{v}_x)\,,\end{equation}
where $u= \id - \omega_\textnormal{pw}$, $v = \sum_{k \in B_\textnormal{F}} \lvert \cc{f_k}\rangle \langle f_k \rvert$ and where we also introduced the short-hand notation $v_x(\cdot) = v(\cdot,x) = \sum_{k \in B_\textnormal{F}} \cc{f_k}(\cdot) \cc{f_k}(x)$ and $u_x(\cdot) = u(\cdot,x) = \delta(\cdot-x)-\sum_{k \in B_\textnormal{F}} {f_k}(\cdot) \cc{f_k}(x)$.

The state $R_{\omega_\textnormal{pw}} \Omega$ plays the role of the new vacuum for the model, on which the new fermionic operators $R_{\omega_\textnormal{pw}} a (f_k) R^*_{{\omega_\textnormal{pw}}}$ act. We call momenta in $B_\textnormal{F}$ \emph{hole modes}, and momenta in $B^{c}_\textnormal{F}$ \emph{particle modes}. We will use the notation $a^*_k := a^*(f_k)$. If we want to emphasize that the index is outside the Fermi ball we write $a^*_p$, $p \in B_\textnormal{F}^c$ (``p'' like ``particle'') and say that $a^*_p$ creates a particle. Similarly we use $a^*_h$, $h \in B_\textnormal{F}$ (``h'' like ``hole'') and say that $a^*_h$ creates a hole in the Fermi ball. We call $\Ncal_\textnormal{p} := \sum_{p \in B_\textnormal{F}^c} a^*_p a_p$ the number-of-particles operator and $\Ncal_\textnormal{h} := \sum_{h \in B_\textnormal{F}} a^*_h a_h$ the number-of-holes operator. If we do not want to distinguish between particles and holes we use the word ``fermion'', for example calling $\Ncal = \Ncal_\textnormal{p} + \Ncal_\textnormal{h}$ the number-of-fermions operator. 

\medskip

Let us consider the conjugated Hamiltonian $R^*_{{\omega_\textnormal{pw}}} \mathcal{H}_N R_{\omega_\textnormal{pw}}$. Using \eqref{eq:phtrafo}, and rewriting the result into a sum of normal-ordered contributions one gets (see \cite[Chapter 6]{BPS16} for a similar computation in the context of many-body quantum dynamics):
\begin{align}
R^*_{{\omega_\textnormal{pw}}} \mathcal{H}_N R_{\omega_\textnormal{pw}} &= \mathcal{E}_\textnormal{HF}(\omega_\textnormal{pw}) + \di\Gamma(uhu-\cc{v}\cc{h}v) + Q_N \label{eq:cLN}
\end{align}
with $d\Gamma(A)$ the second quantization\footnote{The second quantization of the one-particle operator $A$ is defined on the $n$-particle sector of $\mathcal{F}$ as $\di\Gamma (A) := \sum_{j=1}^n A_j$, where $A_j := \id^{\otimes j-1} \otimes A \otimes \id^{\otimes n-j}$ acts non-trivially only on the $j$-th particle. If $A$ has an integral kernel $A(x,y)$, its second quantization can be written as $\di\Gamma (A) = \int A(x,y) a_x^* a_y \di x \di y$.} of a one-particle operator $A$.
The operator $h$ is the one-particle Hartree--Fock Hamiltonian, given by 
\begin{equation}\label{eq:h-def} h = -\frac{\hbar^2 \Delta}{2} + (2\pi)^3 \hat{V} (0) + X \,\end{equation} 
where $X$ is the exchange operator, defined by its integral kernel $X(x,y) = -N^{-1} V(x-y) \omega_\textnormal{pw}(x,y)$. As for the operator $Q_N$ on the r.\,h.\,s.\ of \eqref{eq:cLN}, it contains all contributions that are quartic in creation and annihilation operators. It is given by 
\[\begin{split}Q_N & = \frac{1}{2N} \int_{\Tbb^3 \times \Tbb^3} \di x\di y\, V(x-y) \bigg(\mathcal{E}_1(x,y)  + 2 a^*(u_x) a^*(\cc{v}_x) a(\cc{v}_y) a(u_y)\\
& \hspace{4cm} + \Big[ a^*(u_x) a^*(\cc{v}_x) a^*(u_y) a^*(\cc{v}_y)  + \mathcal{E}_2(x,y)  + \hc\Big]
\bigg)\end{split}\]
where
\begin{equation}\label{eq:cE1} \mathcal{E}_1(x,y) = a^*(u_x)a^*(u_y) a(u_y)a(u_x)- 2 a^*(u_x) a^*(\cc{v}_y) a(\cc{v}_y) a(u_x) + a^*(\cc{v}_y)a^*(\cc{v}_x) a(\cc{v}_x) a(\cc{v}_y)\end{equation}
and
\begin{equation}\label{eq:cE2} \mathcal{E}_2(x,y) = - 2a^*(u_x) a^*(u_y)a^*(\cc{v}_x) a(u_y) + 2 a^*(u_x) a^*(\cc{v}_y) a^*(\cc{v}_x) a(\cc{v}_y).\end{equation}
As we shall see, both $\Ecal_{1}$ and $\Ecal_{2}$ will provide subleading corrections to the correlation energy, as $N\to \infty$. The operator $R^*_{{\omega_\textnormal{pw}}} \mathcal{H}_N R_{\omega_\textnormal{pw}} - \mathcal{E}_\textnormal{HF}(\omega_\textnormal{pw})$ is called the \emph{correlation Hamiltonian},
\begin{equation}
\label{eq:Hcorr}
\mathcal{H}_{\textnormal{corr}} := \di\Gamma(uhu-\cc{v}\cc{h}v) + Q_N\;.
\end{equation}

\medskip

Let $\psi\in \mathcal{F}$ be a normalized $N$-particle state in the fermionic Fock space, that is $\psi = (0, 0, \ldots, 0, \psi^{(N)}, 0, \ldots)$. By the variational principle, we have
\[
E_{N} \leq \langle \psi, \mathcal{H}_{N} \psi\rangle = \mathcal{E}_\textnormal{HF}(\omega_\textnormal{pw}) + \langle \xi , \mathcal{H}_{\textnormal{corr}} \xi \rangle\;,
\]
where $\xi = R_{\omega_\textnormal{pw}}^* \psi$. The last step follows from the identity \eqref{eq:cLN}.

\medskip

We are going to construct an $N$-particle state $\psi_\textnormal{trial} = R_{\omega_\textnormal{pw}} \xi$ such that $\langle \xi, \mathcal{H}_\textnormal{corr} \xi\rangle$ is given by the Gell-Mann--Brueckner formula
\[
% \label{eq:main}
\frac{\hbar \kappa_0}{2} \sum_{k \in \Zbb^3} \lvert k\rvert \left[ \frac{1}{\pi} \int_0^\infty \log \left(1+4\pi\hat{V}(k) \kappa_0 \left( 1 - \lambda \arctan \frac{1}{\lambda} \right)\right) \di \lambda - \hat{V}(k)\kappa_0\pi \right]\,,
\]
up to errors that are of smaller order as $N \to \infty$. To construct this state, we shall represent $\mathcal{H}_{\textnormal{corr}}$ in terms of suitable almost-bosonic operators, obtained by combining fermionic particle--hole excitations. As we shall see, the resulting expression will be quadratic in terms of these new operators; the state $\xi$ will be chosen to minimize the bosonic energy.

\subsection{Particle-Hole Excitations}
We start by rewriting the quartic contribution to the correlation Hamiltonian as
\begin{equation}
\begin{split}\label{eq:QN0-def} Q_N & = Q_N^{\textnormal{B}} + \frac{1}{2N} \int_{\Tbb^3 \times \Tbb^3} \di x\di y V(x-y) \big(\mathcal{E}_1(x,y) + \left[ \mathcal{E}_2(x,y) +\hc\right] \big)\;,\\
Q_{N}^{\textnormal{B}} &=   \frac{1}{2N} \sum_{k \in \Zbb^3} \hat{V}(k) \int_{\Tbb^3 \times \Tbb^3} \di x \di y \bigg( 2 a^*(u_x)e^{ikx} a^*(\cc{v}_x) a(\cc{v}_y) e^{-iky} a(u_y)\\
&  \hspace{4.5cm}+ \left[a^*(u_x)e^{ikx} a^*(\cc{v}_x) a^*(u_y) e^{-iky} a^*(\cc{v}_y)  + \hc\right]\bigg).
\end{split}
\end{equation}
The main contribution to $Q_N$ is $Q_{N}^{\textnormal{B}}$, which, as we shall see, can be represented as a quadratic operator in terms of collective particle--hole pair operators. These operators behave approximately like bosonic creation and annihilation operators.

\medskip

Let us define the (unnormalized) particle--hole operator as
\[\tilde b^*_k := \int_{\Tbb^3} \di x\, a^*(u_x) e^{ikx} a^*(\cc{v}_x)\,.
\]
Notice that $\tilde{b}^*_0 = 0$ since $u\cc{v} =0$.
Writing this operator in momentum representation,
\begin{equation}
\label{eq:bosonicpair}\tilde b^*_k = \sum_{\substack{p\in B_\textnormal{F}^c\\h \in B_\textnormal{F}}} a^*_p a^*_h \delta_{p-h,k}\,,\end{equation}
we can think of it as creating a particle--hole pair of momentum $k$, \emph{delocalized} over all the Fermi surface. In terms of these operators
\[
Q_N^{\textnormal{B}} = \frac{1}{2N} \sum_{k\in \Zbbn} \hat{V}(k) \left( 2 \tilde b^*_k \tilde b_k + \tilde b^*_k \tilde b^*_{-k} + \tilde b_{-k} \tilde b_k \right).
\]
Recall that $\hat{V}$ has compact support by assumption, so there exists
\[
R > 0 \text{ such that } \hat{V} (k) = 0 \text{ for all } \lvert k\rvert > R\;.
\]
It is convenient to group together $k$ and $-k$ modes, as follows.
Define  
\begin{equation}\label{eq:north}\north \subset \Zbb^3\end{equation}
as the set of all $k \in \Zbb^3 \cap B_R(0)$ with $k_3 >0$ and additionally half of the $k$-vectors with $k_3 =0$, such that for every $k \in \north$ we have $-k \not\in \north$. We then rewrite $Q_N^{\textnormal{B}}$ as
\begin{equation}
\label{eq:QN0-1}Q_N^{\textnormal{B}} = \frac{1}{2N} \sum_{k \in \north} \hat{V}(k) \left( 2 \tilde b^*_k \tilde b_k + \tilde b^*_k \tilde b^*_{-k} + \tilde b_{-k} \tilde b_k + 2 \tilde b^*_{-k} \tilde b_{-k} + \tilde b^*_{-k} \tilde b^*_{k} + \tilde b_{k} \tilde b_{-k}\right).
\end{equation}
It turns out that the operators $\tilde b_k$ behave as approximate bosonic operators, whenever acting on vectors of $\mathcal{F}$ with only a few particles; the Pauli principle is relaxed by summing over a large number of momenta of which typically only few are occupied.

The main problem, however, is that the term $\di\Gamma(uhu-\cc{v}\cc{h}v)$ in \eqref{eq:cLN} \emph{cannot} be represented as a quadratic operator in terms of $\tilde b_k$ and $\tilde b_k^{*}$. To circumvent this issue we shall split the operators $\tilde{b}_{k}$, $\tilde{b}^{*}_{k}$ into \emph{partially localized} particle--hole operators $\tilde{b}_{\alpha,k}$, $\tilde{b}^*_{\alpha,k}$ involving only modes of one patch of a decomposition (indexed by $\alpha$) of the Fermi surface. This allows us to linearize the kinetic energy around the centers of patches, so that states of the form
\[\tilde{b}^*_{\alpha_1,k_1} \tilde{b}^*_{\alpha_2,k_2} \cdots \tilde{b}^*_{\alpha_m,k_m} \Omega\]
become approximate eigenvectors of $\di\Gamma(uhu-\cc{v}\cc{h}v)$.

The non-trivial question is whether we can localize \eqref{eq:bosonicpair} sufficiently to control the linearization of the kinetic energy, while at the same time keeping it sufficiently delocalized so that $\tilde{b}^*_{\alpha,k}$ involves many fermionic modes, thus relaxing the Pauli principle---complete localization would of course destroy the bosonic behavior since $(a^*_p a^*_h)^2 =0$. We are going to find that this can be achieved by decomposing the Fermi sphere into $M = M(N)$ diameter-bounded equal-area patches if $N^{1/3} \ll M \ll N^{2/3}$.

\paragraph{Patch Decomposition of the Fermi Sphere.} We construct a partition of the Fermi sphere $k_F \Sbb^2$ into $M$ diameter-bounded equal-area patches following \cite{Leo06}, see Figure~\ref{fig:blub}. Let
\[M = M(N) := N^{1/3 + \epsilon} \quad\text{for an } 0 < \epsilon < 1/3\,,\]
 or more precisely, this number rounded to the nearest even integer. Our goal is to first decompose the unit sphere $\mathbb{S}_{2}$ as
\[ \mathbb{S}_2 = \left( \bigcup_{\alpha=1}^{M} p_\alpha \right) \cup p_\textnormal{corri}\,,\]
where $p_{\alpha}$ are suitable pairwise disjoint sets, to be defined below, and $p_\textnormal{corri}$ has small surface measure, $\sigma(p_\textnormal{corri}) = \mathcal{O}(M^{1/2} N^{-1/3}) \to 0$ as $N \to \infty$. The error $p_\textnormal{corri}$ is due to the introduction of a positive distance (``corridors'') separating neighboring patches. The important properties to be ensured in the construction are that all patches $p_\alpha$ have area of order $1/M$ and that they do not degenerate into very long, thin shapes as $M$ becomes large.
\begin{figure}\centering
\begin{tikzpicture}[scale=0.8]
% \clip(-5,-2) rectangle (5,5);
\def\RadiusSphere{4} % sphere radius
\def\angEl{20} % elevation angle
\def\angAz{-20} % azimuth angle

\filldraw[ball color = white] (0,0) circle (\RadiusSphere);

\DrawLatitudeCircle[\RadiusSphere]{75+2}
\foreach \t in {0,-50,...,-250} {
  \DrawLatitudeArc{75}{(\t+50-4)*sin(62)}{\t*sin(62)}
 \DrawLongitudeArc{\t*sin(62)}{50+2}{75}
 \DrawLongitudeArc{(\t-4)*sin(62)}{50+2}{75}
  \DrawLatitudeArc{50+2}{(\t+50-4)*sin(62)}{\t*sin(62)}
 }
 \foreach \t in {0,-50,...,-300} {
   \DrawLatitudeArc{50}{(\t+50-4)*sin(37)}{\t*sin(37)}
 \DrawLongitudeArc{\t*sin(37)}{25+2}{50}
  \DrawLongitudeArc{(\t-4)*sin(37)}{25+2}{50}
   \DrawLatitudeArc{25+2}{(\t+50-4)*sin(37)}{\t*sin(37)}
 }
 \DrawLatitudeArc{50}{(-300-4)*sin(37)}{-330*sin(37)}
 \foreach \t in {0,-50,...,-450} {
    \DrawLatitudeArc{25}{(\t+50-4)*sin(23)}{\t*sin(23)}
 \DrawLongitudeArc{\t*sin(23)}{00+2}{25}
 \DrawLongitudeArc{(\t-4)*sin(23)}{00+2}{25}
 \DrawLatitudeArc{00+2}{(\t+50-4)*sin(23)}{\t*sin(23)}
 }
     \DrawLatitudeArc{25}{(-450-4)*sin(23)}{-500*sin(23)}

%center of polar cap
\fill[black] (0,3.75) circle (.075cm);

% first row from the pole
\fill[black] (1.72,3.08) circle (.075cm);
\fill[black] (.76,2.73) circle (.075cm);
\fill[black] (-.66,2.73) circle (.075cm);
\fill[black] (-1.73,3.04) circle (.075cm);
% \fill[black] (-1.81,3.54) circle (.075cm);

% second row from the pole
\fill[black] (2.25,1.5) circle (.075cm);
\fill[black] (.8,1.2) circle (.075cm);
\fill[black] (-.85,1.22) circle (.075cm);
\fill[black] (-2.27,1.5) circle (.075cm);
\fill[black] (-3.09,1.97) circle (.075cm);
\fill[black] (3.09,1.97) circle (.075cm);

% third row from the pole
\fill[black] (2.57,-.15) circle (.075cm);
\fill[black] (1.43,-.37) circle (.075cm);
\fill[black] (.155,-.48) circle (.075cm);
\fill[black] (-1.17,-.41) circle (.075cm);
\fill[black] (-2.35,-.2) circle (.075cm);
\fill[black] (-3.26,0.1) circle (.075cm);
\fill[black] (-3.79,.55) circle (.075cm);
\fill[black] (3.37,.18) circle (.075cm);
\fill[black] (3.85,.57) circle (.075cm);

%%%%%%%%%%%%%%%%%%%%%%%%%%
\end{tikzpicture}
 \caption{Diameter-bounded partition of the northern half sphere following \cite{Leo06}: a spherical cap is placed at the pole; then collars along the latitudes are introduced and split into patches, separated by corridors. The vectors $\hat{\omega}_\alpha$ are picked as centers of the patches, marked in black. The patches will be reflected by the origin to cover also the southern half sphere.}\label{fig:blub}
\end{figure}
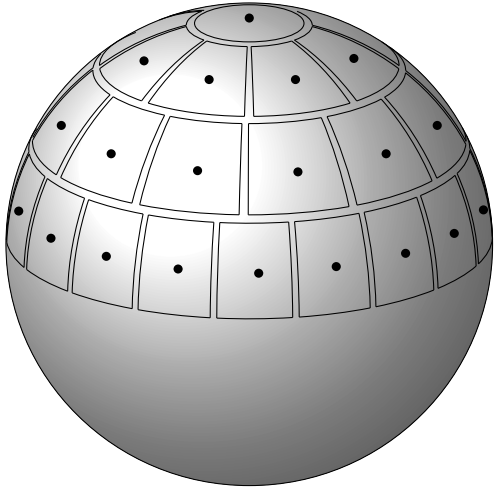
\medskip

We use standard spherical coordinates: for $\hat\omega \in \Sbb^2$, denote by $\theta$ the inclination angle (measured between $\hat\omega$ and $e_3 = (0,0,1)$) and by $\varphi$ the azimuth angle (measured between $e_1 = (1 , 0 , 0)$ and the projection of $\hat{\omega}$ onto the plane orthogonal to $e_3$). We write $\hat{\omega}(\theta,\varphi)$ to specify a vector on the unit sphere in terms of its inclination and azimuth angles.

The construction starts by placing a spherical cap centered at $e_3$, with opening angle $\Delta\theta_0 := D /\sqrt{M}$, with $D \in \Rbb$ chosen so that the area of the spherical cap equals $4\pi / M$. Next, we decompose  the remaining part of the half sphere, i.\,e., the set of all $\hat{\omega} (\theta, \varphi)$ with $D/{\sqrt{M}} \leq \theta \leq \pi/2$, into $\sqrt{M}/2$ (rounded to the next integer) collars; the $i$-th collar consists of all $\hat{\omega}(\theta,\varphi)$ with $\theta \in [\theta_i - \Delta\theta_i,\theta_i+\Delta\theta_i)$ and arbitrary azimuth $\varphi$. The inclination of every collar will extend over a range $\Delta\theta_i \sim 1/{\sqrt{M}}$; the proportionality constant is adjusted so that the number of collars on the half sphere is an integer.

Observe that the circle $\left\{ \hat\omega(\theta_i,\varphi): \varphi \in [0,2\pi) \right\}$ 
has circumference proportional to $\sin(\theta_i)$; therefore we split the $i$-th collar into $\sqrt{M}\sin(\theta_i)$ (rounded to the next integer) patches. This implies that the $j$-th patch in the $i$-th collar covers an azimuth angles $\varphi \in [\varphi_{i,j}-\Delta\varphi_{i,j},\varphi_{i,j} + \Delta\varphi_{i,j})$, where \[\Delta\varphi_{i,j} \sim \frac{1}{{\sin(\theta_i)\sqrt{M}}}\,.\]
We fix the proportionality constants by demanding that all patches have area $4\pi/M$ (this is not necessary though, it would be sufficient that all patches have area of order $1/M$).

The last step is to define $\Delta\widetilde{\theta}_i := \Delta\theta_i - \tilde{D} R N^{-1/3}$ and  $\Delta\widetilde{\varphi}_{i,j} := \Delta\varphi_{i,j} - \tilde{D} R N^{-1/3}/\sin(\theta_i)$, with $\tilde{D} >0$ to be fixed below. We then define $p_1$ as the spherical cap centered at $e_3$ with opening angle $\Delta\widetilde{\theta}_0$ and the other $M/2 - 1$ patches as
\[
p_{i,j} := \big\{ \hat{\omega}(\theta,\varphi): \theta \in [\theta_i-\Delta\widetilde{\theta}_i,\theta_i+\Delta\widetilde{\theta}_i) \text{ and } \varphi \in [\varphi_{i,j}-\Delta\widetilde{\varphi}_{i,j},\varphi_{i,j} + \Delta\widetilde{\varphi}_{i,j})\big\}\,.
\]
The constant $\tilde{D}$ is chosen such that, when patches are scaled up to the Fermi sphere there are corridors of width at least $2R$ between adjacent patches (i.\,e., $\tilde{D}$ has to be slightly larger than $\kappa_0^{-1}$). Having concluded the construction on the northern half sphere, we define the patches on the southern half sphere through reflection by the origin, $k \mapsto -k$. Finally we switch from enumeration by $i$ and $j$ to enumeration with a single index $\alpha \in \{1, \dots , M\}$. From the construction it is clear that the patches $p_\alpha$ have the following three properties:
\begin{enumerate}
\item The area of every patch is
\[
\sigma(p_\alpha) = \frac{4\pi}{M} + \Ocal\big(N^{-1/3} M^{-1/2}\big)\;.
\]
\item The family of decompositions is diameter bounded, i.\,e., there exists a constant $C_0$ independent of $N$ and $M$ such that, for the decomposition into $M$ patches, the diameter\footnote{The linear dimension of a patch measured by the Euclidean norm of $\Rbb^3$ or measured by the geodesic distance on $\Sbb^2$ are of the same order, so we do not need to worry about this distinction.} of every patch is bounded by $C_0/\sqrt{M}$.
\item Point reflection at the origin maps $p_\alpha$ to $-p_\alpha = p_{\alpha+\frac{M}{2}}$ for all $\alpha = 1,\ldots ,\frac{M}{2}$.
\end{enumerate}

Next, we scale the patches from the unit sphere up to the Fermi surface $k_\textnormal{F}  \Sbb^2$ by setting  
\[P_\alpha := k_\textnormal{F} p_\alpha\]
for all $\alpha = 1, \dots , M$. The patches $P_\alpha$ then have the following properties.
\begin{enumerate}
\item The area of every patch is $\sigma(P_{\alpha}) = \frac{4\pi}{M} k_\textnormal{F}^2+ \Ocal\left(N^{1/3} M^{-1/2}\right)$.
\item There exists a constant $C_1$ independent of $N$ and $M$ such that, for the decomposition in $M$ patches, we have $\diam(P_\alpha) \leq C_1 N^{1/3}/\sqrt{M}$.
\end{enumerate}

Finally, we shall introduce a ``fattening'' of the patch decomposition, which will be used to decompose the operators $b_{k}$ as sums of operators corresponding to particle--hole excitations around the patches. This is motivated by the fact that the only modes affected by the interaction are those in a shell around the Fermi sphere, where the thickness of the shell is given by the radius of the support of $\hat{V}$. Recalling again that $R > 0$ is chosen such that $\hat{V} (k) = 0$ for $\lvert k\rvert > R$, we define the fattened Fermi surface as
\[\partial B_\textnormal{F}^R := \left\{ q \in \Zbb^3 :  k_\textnormal{F} - R \leq \lvert q \rvert \leq k_\textnormal{F} +R  \right\}\,.\]
We lift the partition of the unit sphere to a partition of $\partial B_\textnormal{F}^R$,
\[\partial B_\textnormal{F}^R = \left( \bigcup_{\alpha=1}^M B_\alpha \right) \cup B_\textnormal{corri}\,,\]
by introducing the cones $\mathcal{C}_\alpha := \bigcup_{r \in (0,\infty)} r p_\alpha$ and defining
\[B_\alpha := \partial B_\textnormal{F}^R \cap \mathcal{C}_\alpha\,.\]
(The set $B_\textnormal{corri}$ consist of all the remaining modes in the similarly fattened corridors.)
To every patch $B_\alpha$ we assign a vector $\omega_\alpha \in B_\alpha$ as the center of $P_\alpha$ on the Fermi surface; in particular $\lvert \omega_\alpha\rvert =k_\textnormal{F}$. The vectors $\omega_{\alpha}$ inherit the reflection symmetry of the patches, $\omega_{\alpha + M/2} = -\omega_\alpha$ for all $\alpha = 1, \dots , M/2$. 

\paragraph{Localization on the Fermi Surface.}
We recall \eqref{eq:bosonicpair} in momentum representation,
\begin{equation}\label{eq:kspace}
\tilde b^*_k = \sum_{\substack{p\in B_\textnormal{F}^c\\h \in B_\textnormal{F}}} a^*_p a^*_h \delta_{p-h,k}\,.
\end{equation}
Since $\hat{V} (k) = 0$ if $\lvert k\rvert > R$, we are only interested in the case $\lvert k\rvert \leq R$; hence, the sum in \eqref{eq:kspace} effectively runs only over $p$ and $h$ at most at distance $R$ from the Fermi sphere $k_\textnormal{F} \Sbb^2$. In other words, 
\begin{equation}\label{eq:wtbk}
\tilde b^*_k = \sum_{\substack{p\in B_\textnormal{F}^c \cap \partial B_\textnormal{F}^R\\h \in B_\textnormal{F} \cap \partial B_\textnormal{F}^R}} a^*_p a^*_h \delta_{p-h,k}\,.
\end{equation}
Next, we decompose the sum on the r.\,h.\,s.\ of \eqref{eq:wtbk} into contributions associated with different patches. If $k \cdot \omega_\alpha < 0$, there will be few or no particle--hole pairs $(p,h)$ in the patch $B_\alpha$ satisfying $p-h = k$; geometrically, $k$ is approximately pointing from outside to inside of the Fermi ball, which is incompatible with the requirements $p \in B_\textnormal{F}^c$ and $h \in B_\textnormal{F}$. Also if $k \cdot \omega_\alpha$ is positive but small, there are only few particle--hole pairs $(p,h)$ with $p-h = k$. For this reason, for any $k \in \mathbb{Z}^3$, we define the index set\footnote{We use the notation $\hat{k} := k/ \lvert k\rvert$ for the unit vector in direction of $k$.} 
\[ \Ikp := \big\{\alpha =1,\ldots, M: \hat\omega_\alpha \cdot \hat{k} \geq N^{-\delta}\big\} \]
for a parameter $\delta > 0$ to be chosen later. We then write 
\begin{equation}\label{eq:patchdecomp}\tilde b^*_k = \sum_{\alpha \in \Ikp} \tilde b^*_{\alpha,k} + \rfrak^*_k\;,\end{equation}
where
\[
\tilde{b}^*_{\alpha,k} := \sum_{\substack{p\in B_\textnormal{F}^c\cap B_\alpha\\h \in B_\textnormal{F}\cap B_\alpha}} a^*_p a^*_h \delta_{p-h,k}\;.
\]
The operator $\rfrak^*_k$ contains all particle--hole pairs that are not included in $\sum_{\alpha\in\Ikp} \tilde{b}^*_{\alpha,k}$. This can happen for two reasons: because an index $\alpha$ is not included in $\Ikp$, or because one or both momenta of a pair $(p,h)$ belong to a corridor between patches. As we shall see, this operator can be understood as a small error, due to the fact that the number of pairs $(p, h)$ not included in the first sum is small.

\paragraph{Normalization of Particle-Hole Pair Operators.} We still have to normalize the pair operators so that they can be seen as an approximation of bosonic operators. The normalized operators are defined by
\begin{equation}
 \label{eq:bka}
b^*_{\alpha,k} := \frac{1}{n_{\alpha,k}} \tilde b^*_{\alpha,k}, \qquad n_{\alpha,k} := \norm{\tilde b^*_{\alpha,k} \Omega}.
\end{equation}
We call these operators the \emph{pair creation operators}; their adjoints are called \emph{pair annihilation operators}. The normalization constant can be calculated as follows:
\[
\begin{split}
\norm{\tilde b^*_{\alpha,k}\Omega}^2 & = \langle \Omega, \Bigg[\sum_{\substack{p_1 \in B_\textnormal{F}^c \cap B_\alpha\\h_1 \in B_\textnormal{F} \cap B_\alpha}} a^*_{p_1} a^*_{h_1} \delta_{p_1 - h_1,k}\Bigg]^* \Bigg[\sum_{\substack{p_2 \in B_\textnormal{F}^c \cap B_\alpha\\h_2 \in B_\textnormal{F} \cap B_\alpha}} a^*_{p_2} a^*_{h_2} \delta_{p_2 - h_2,k}\Bigg] \Omega\rangle= \sum_{\substack{p \in B_\textnormal{F}^c \cap B_\alpha\\h \in B_\textnormal{F} \cap B_\alpha}}\! \delta_{p - h,k}\;. 
\end{split}
\]
This shows that $n_{\alpha,k}^2$ is the number of particle--hole pairs with momentum $k = p-h$ that lie in the patch $B_\alpha$. Due to the symmetry of the partition under point reflection at the origin we have $n_{\alpha,k} = n_{\alpha+M/2,-k}$. We define $v_\alpha(k) \geq 0$ by setting
\begin{equation}\label{eq:vkdef}n_{\alpha,k}^2 =: k_\textnormal{F}^2 \lvert k\rvert v_\alpha(k)^2\;.\end{equation}
In the next proposition, whose proof is deferred to Section~\ref{sec:continuumapprox}, we estimate the normalization constants.
\begin{prp}
\label{prp:counting}
Let $k \in \Zbb^3 \backslash \{0 \}$, $M = N^{1/3 + \epsilon}$ for an $0< \epsilon < 1/3$. Then, for $0 <\delta < 1/6- \epsilon/2$ and for all $\alpha \in \Ikp$, we have 
\[v_{\alpha} (k)^2 =  \sigma(p_\alpha) \, | \hat{k} \cdot \hat\omega_\alpha |  \, \left( 1 + \mathcal{O}\left(\sqrt{M}N^{-\frac{1}{3}+\delta}\right) \right)\,,\]
where $\sigma(p_\alpha) = \frac{4\pi}{M} + \Ocal(N^{-1/3}M^{-1/2})$ is the surface area of the patch $p_\alpha$ on the unit sphere.
\end{prp}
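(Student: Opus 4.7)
The plan is to reduce the combinatorial count
\[
n_{\alpha,k}^2 = \#\{h \in \Zbb^3 \cap B_\alpha : \lvert h\rvert \leq k_\textnormal{F} \text{ and } \lvert h+k\rvert > k_\textnormal{F}\}
\]
to a volume computation. Because $\supp\hat V\subset\overline{B_R(0)}$ forces $\lvert k\rvert\leq R$, and the corridors between neighbouring patches have width at least $2R$, the auxiliary condition $h+k\in B_\alpha$ is automatic once $h\in B_\alpha$, with losses only at the inner/outer boundary of the shell $\partial B_\textnormal{F}^R$ which are of smaller order. Thus $n_{\alpha,k}^2$ counts the lattice points in the thin curved slab
\[
\mathcal R_\alpha^k := \{ h \in \mathcal C_\alpha : \lvert h\rvert \leq k_\textnormal{F} < \lvert h+k\rvert\},
\]
sitting on the inside of the Fermi sphere above the patch $p_\alpha$.

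In spherical coordinates $h = r\hat n$ with $\hat n\in p_\alpha$, expanding $\lvert h+k\rvert^2 > k_\textnormal{F}^2$ around $r=k_\textnormal{F}$ shows that $\mathcal R_\alpha^k$ consists of radii $r\in(k_\textnormal{F}-T(\hat n),k_\textnormal{F}]$ with $T(\hat n) = \lvert k\rvert(\hat n\cdot\hat k) + O(\lvert k\rvert^2/k_\textnormal{F})$. Integration over the patch yields
\[
\operatorname{vol}(\mathcal R_\alpha^k) = k_\textnormal{F}^2 \lvert k\rvert \int_{p_\alpha}(\hat n\cdot\hat k)\,d\sigma(\hat n)\bigl(1+O(\lvert k\rvert/k_\textnormal{F})\bigr).
\]
Since the patch has diameter $O(1/\sqrt M)$ around $\hat\omega_\alpha$ and $\hat k\cdot\hat\omega_\alpha\geq N^{-\delta}$ by the hypothesis $\alpha\in\Ikp$, the angular integral equals $\sigma(p_\alpha)(\hat k\cdot\hat\omega_\alpha)\bigl(1+O(M^{-1/2}N^\delta)\bigr)$. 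Combining the two expansions, the relative corrections to the leading term $k_\textnormal{F}^2\lvert k\rvert\sigma(p_\alpha)(\hat k\cdot\hat\omega_\alpha)$ are $O(M^{-1/2}N^\delta)$ and $O(N^{-1/3+\delta})$, both dominated by $O(\sqrt M N^{-1/3+\delta})$ under the standing assumption $\delta<1/6-\epsilon/2$.

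It remains to compare the lattice count with the volume. In local Cartesian coordinates adapted to $\hat\omega_\alpha$, the slab $\mathcal R_\alpha^k$ is approximately a curved box of tangential sides $\sim k_\textnormal{F}/\sqrt M$ and radial thickness $\sim\lvert k\rvert(\hat k\cdot\hat\omega_\alpha)\gtrsim N^{-\delta}$. Decomposing the sum over $h$ by fixing the two tangential integer coordinates and counting along the radial direction yields an $O(1)$ error per tangential site; exploiting the smoothness of the Fermi-sphere boundary on the patch length scale via a one-dimensional Euler--Maclaurin correction then delivers the required matching of $n_{\alpha,k}^2$ with $\operatorname{vol}(\mathcal R_\alpha^k)$ to the claimed relative accuracy, after division by $k_\textnormal{F}^2\lvert k\rvert$.

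The principal technical obstacle lies precisely in this last step: the naive Gauss bound by the surface area of $\mathcal R_\alpha^k$, of order $k_\textnormal{F}^2/M$, would produce a relative error of order $N^\delta$, exceeding the target by a polynomial factor in $N$. One must therefore genuinely exploit the anisotropy of the slab---either via the radial slicing sketched above, or equivalently via a Poisson-summation argument using that the Fourier transform of the slab indicator decays faster in the large tangential directions than in the thin radial one. The secondary issue of pairs for which $h$ or $h+k$ falls into a corridor or out of the shell $\partial B_\textnormal{F}^R$ is settled by the corridor width $\geq 2R$ together with $\lvert k\rvert\leq R$, and contributes only a smaller-order boundary correction that is absorbed into the stated error.
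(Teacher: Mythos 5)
Your reduction to a volume comparison, the expansion of the radial thickness $T(\hat n) = \lvert k\rvert(\hat n\cdot\hat k)+O(\lvert k\rvert^2/k_\textnormal{F})$, the angular Riemann-sum estimate giving the $O(M^{-1/2}N^{\delta})$ factor, and the observation that a naive Gauss bound of order $k_\textnormal{F}^2/M$ is off by a factor $N^\delta$ are all correct and match the leading-order bookkeeping of the paper. The gap is in the step you yourself flag as the ``principal technical obstacle'': neither of the two remedies you sketch actually closes it. Your ``radial slicing'' fixes two coordinates and counts lattice points in a $1$-D interval of (non-integer) length $T_i$; each such count is $T_i$ plus a fractional-part defect in $(-1,1)$, and with $\sim k_\textnormal{F}^2/M$ slices the accumulated defect is again $O(k_\textnormal{F}^2/M)$ unless these fractional parts cancel. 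Euler--Maclaurin does not force such cancellation: the per-slice count is a floor function of the slice offset, hence discontinuous, and its oscillation has no reason to average out to the required $O(k_\textnormal{F}/\sqrt M)$ without an equidistribution or Diophantine input you have not supplied. Moreover, ``radial'' (i.e.\ along $\hat\omega_\alpha$) is not a lattice direction, so ``fixing the two tangential integer coordinates'' is not well defined for $\Zbb^3$; and the Poisson-summation alternative is only a one-sentence gesture with the same problem (a sharp indicator has a Fourier transform decaying only like $1/\lvert m\rvert$, so $\sum_{m\neq 0}\hat\chi(m)$ is not even absolutely convergent before a mollification-and-optimisation step).

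The paper's proof uses a different and crucial idea that your outline misses: slice along the lattice vector $k$ itself. On each lattice line $L_n=\{n+tk\}$ the consecutive lattice points are spaced $\lvert k\rvert/\gcd(k_1,k_2,k_3)$ apart, and the condition $\lvert h\rvert\le k_\textnormal{F}<\lvert h+k\rvert$ picks out an interval along $L_n$ of length \emph{exactly} $\lvert k\rvert$---an integer multiple of the spacing---so each line through the patch contributes \emph{exactly} $\gcd(k_1,k_2,k_3)$ pairs, with no fractional-part defect at all (and the small-angle pathology of a line grazing the sphere is excluded by $\hat\omega_\alpha\cdot\hat k\geq N^{-\delta}\gg M^{-1/2}$). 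This converts the 3-D count into a \emph{two-dimensional} Gauss count of how many lattice lines meet the projection $P_\alpha^k$ of $P_\alpha$ along $k$ onto a coordinate plane, and the 2-D error is governed by the \emph{circumference} $O(k_\textnormal{F}/\sqrt M)$ rather than the surface area. Combined with the density $k_3/\gcd(k_1,k_2,k_3)$ of line-plane intersections per unit square (a short gcd/lcm computation) and the Jacobian $\lvert\det D\Phi\rvert = (k_\textnormal{F}^2/k_3)\lvert k\cdot\hat\omega\rvert\lvert\sin\theta\rvert$, this gives $n_{\alpha,k}^2 = \mu(P_\alpha^k)\,k_3 + O(k_\textnormal{F}/\sqrt M)$, which is precisely the claimed relative accuracy after dividing by $k_\textnormal{F}^2\lvert k\rvert\sigma(p_\alpha)u_\alpha(k)^2\gtrsim k_\textnormal{F}^2 N^{-\delta}/M$. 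You would need to replace your slicing step by this exact argument along $k$ (or supply a genuine equidistribution/Poisson estimate for the anisotropic slab) for the proof to go through.
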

Due to the cutoff $\hat{\omega}_\alpha\cdot \hat{k} \geq N^{-\delta}$ imposed through the index set $\Ikp$, it immediately follows that there exists a constant\footnote{We use the symbol $C$ for positive constants, the value of which may change from line to line. All constants $C$ are independent of patch indices ($\alpha$, $\beta$ etc.), of momenta ($k$, $p$, $h$ etc.) and most importantly of $N$ (and also of $\hbar$, $M$, and $\nfrak$). They may however depend on $R$ and $\sup_k \hat{V}(k)$.} $C$ such that
\begin{equation}\label{eq:nNM} n_{\alpha, k} \geq C \nfrak, \quad \text{where } \nfrak (N,M) := \frac{N^{1/3-\delta/2}}{\sqrt{M}} \;. \end{equation}

\section{Construction of the Trial State}
In this section we shall introduce the trial state that will produce the upper bound in our main result, Theorem \ref{thm:main}. To begin, let us show that the particle--hole operators $b_{\alpha, k}$ defined in \eqref{eq:bka} behave as almost-bosonic operators when acting on Fock space vectors containing only few fermions.

\subsection{Particle-Hole Creation via Almost-Bosonic Operators}
Recall the definition of $\north$ given after \eqref{eq:north}. For $k \in \north$, let
\[ \Ikm := \Ical_{-k}^+ = \big\{ \alpha = 1, \ldots, M : \hat{\omega}_\alpha \cdot \hat{k} \leq - N^{-\delta} \big\}\;. \]
We shall also set $\mathcal{I}_{k} = \mathcal{I}_{k}^{+} \cup \mathcal{I}_{k}^{-}$. To unify notation, we define
\begin{equation}\label{eq:rigorouscoperators}c^*_\alpha(k) := \left\{ \begin{array}{lr} b^*_{\alpha,k} & \text{ for } \alpha \in \Ikp \\ b^*_{\alpha,-k} & \text{ for } \alpha \in 
\Ikm \end{array}\right.\;.\end{equation}

\begin{lem}[Approximate CCR]\label{lem:ccr}
Let $k,l \in\north$. Let $\alpha \in \Ik$ and $\beta \in \Il$. Then
\begin{equation}\begin{split}
[c_\alpha(k),c_\beta(l)] & = 0 = [c^*_\alpha(k),c^*_\beta(l)]\,,\\
[c_\alpha(k),c^*_\beta(l)] & = \delta_{\alpha,\beta}\left( \delta_{k,l} + \Ecal_\alpha(k,l) \right)\,. 
\end{split}\label{eq:commcc}\end{equation}
The operator $\Ecal_\alpha(k,l)$ commutes with $\Ncal$, and satisfies the bound
\begin{equation}
\label{eq:ccrerror}\norm{ \mathcal{E}_\alpha(k,l) \psi} \leq \frac{2}{n_{\alpha,k} n_{\alpha,l}} \norm{\Ncal\psi}\;, \qquad \forall \psi \in \fock\,.
\end{equation}
The same estimate holds for $\Ecal_\alpha^*(k,l) = \Ecal_\alpha(l,k)$.
\end{lem}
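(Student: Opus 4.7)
The plan is to compute the commutators directly from the canonical anti-commutation relations \eqref{eq:car} and then to estimate the resulting error operator via Cauchy--Schwarz on each fixed particle-number sector of $\fock$.

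For $[c_\alpha(k), c_\beta(l)]$ and $[c^*_\alpha(k), c^*_\beta(l)]$, I would unfold \eqref{eq:rigorouscoperators} and \eqref{eq:bka} in momentum representation: each $c^*_\alpha(k)$ is a normalized sum of products $a^*_p a^*_h$ over $p \in B_\textnormal{F}^c\cap B_\alpha$ and $h\in B_\textnormal{F}\cap B_\alpha$ (with momentum transfer $k$ or $-k$, depending on whether $\alpha\in\Ikp$ or $\alpha\in\Ikm$). Two such products of four creation operators (or four annihilation operators) can be interchanged by four applications of \eqref{eq:car}, each contributing a minus sign, so they always commute. This proves the first line of \eqref{eq:commcc} uniformly in the case distinctions on $\alpha$ and $\beta$.

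For the mixed commutator, the key computation, obtained by applying \eqref{eq:car} four times and exploiting $B_\textnormal{F}\cap B_\textnormal{F}^c=\emptyset$, is
\[[a_{h_1}a_{p_1}, a^*_{p_2}a^*_{h_2}] = \delta_{p_1,p_2}\delta_{h_1,h_2} - \delta_{p_1,p_2}\, a^*_{h_2}a_{h_1} - \delta_{h_1,h_2}\, a^*_{p_2}a_{p_1}\,.\]
If $\alpha\neq\beta$ the patches are disjoint, so every Kronecker delta forces a vanishing sum and $[c_\alpha(k), c^*_\beta(l)]=0$. For $\alpha=\beta$, the first contribution produces
\[\frac{1}{n_{\alpha,k}\, n_{\alpha,l}}\sum_{\substack{p\in B_\textnormal{F}^c\cap B_\alpha \\ h\in B_\textnormal{F}\cap B_\alpha}} \delta_{p-h,k}\,\delta_{p-h,l} = \delta_{k,l},\]
where the equality uses the very definition \eqref{eq:bka} of $n_{\alpha,k}$. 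I would then run through the four sub-cases $\alpha \in \Ikp\cap\Ilp$, $\Ikm\cap\Ilm$, $\Ikp\cap\Ilm$, $\Ikm\cap\Ilp$, checking that the constant part always equals $\delta_{k,l}$; in the two ``mixed'' sub-cases it comes out as $\delta_{k,-l}$, which vanishes since $k, l\in\north$ excludes $k= -l$, and in the same sub-cases $\delta_{k,l}=0$ as well because $\hat\omega_\alpha\cdot\hat k \geq N^{-\delta}$ and $\hat\omega_\alpha\cdot\hat l\leq -N^{-\delta}$ are geometrically incompatible with $k=l$. The remaining quadratic contributions collect into
\[\Ecal_\alpha(k,l) = -\frac{1}{n_{\alpha,k}\, n_{\alpha,l}}\Big(\sum_h \chi_1(h)\, a^*_{h+k-l}a_h + \sum_p \chi_2(p)\, a^*_{p-k+l}a_p\Big),\]
with $\chi_1, \chi_2$ indicator functions enforcing that all four involved momenta lie in the correct subsets of $B_\alpha$. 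Both summands conserve the fermion number, so $\Ecal_\alpha(k,l)$ commutes with $\Ncal$.

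To prove \eqref{eq:ccrerror} it suffices to bound each summand $T = \sum_y \chi(y)\, a^*_{y+k-l} a_y$ on each $n$-particle sector, exploiting that $T$ itself preserves $\Ncal$. For a unit vector $\phi$ in the same sector as $\psi_n$, Cauchy--Schwarz and the estimate $\sum_y \|a_y\chi\|^2\leq \langle \chi, \Ncal\chi\rangle$ give
\[|\langle \phi, T\psi_n\rangle| \leq \sum_y \chi(y)\, \|a_{y+k-l}\phi\|\, \|a_y\psi_n\| \leq \Big(\sum_y \|a_{y+k-l}\phi\|^2\Big)^{1/2}\Big(\sum_y \|a_y\psi_n\|^2\Big)^{1/2} \leq \sqrt n\,\sqrt n\, \|\psi_n\|.\]
Since $\sqrt n\sqrt n\|\psi_n\| = \|\Ncal\psi_n\|$, taking the supremum over $\phi$ yields $\|T\psi_n\|\leq \|\Ncal\psi_n\|$; decomposing a general $\psi$ into its $n$-particle components and using particle-number conservation upgrades this to $\|T\psi\|\leq \|\Ncal\psi\|$, and the triangle inequality for the two summands of $\Ecal_\alpha(k,l)$ produces the factor $2$ in \eqref{eq:ccrerror}. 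The adjoint identity $\Ecal^*_\alpha(k,l) = \Ecal_\alpha(l,k)$ is immediate from the explicit form, so the same bound applies. The only genuine difficulty in this lemma is the case-by-case bookkeeping of the $\Ikp/\Ikm$ alternatives in the mixed commutator; the analytic Cauchy--Schwarz step is elementary.
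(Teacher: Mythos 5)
Your proof is correct and follows essentially the same strategy as the paper's: trivial first line, momentum-space expansion of $[a_{h_1}a_{p_1},a^*_{p_2}a^*_{h_2}]$ to extract $\delta_{k,l}$ and the quadratic error, case distinctions on $\Ikp/\Ikm$, and a number-operator bound for $\Ecal_\alpha(k,l)$. The only difference is stylistic: where you prove $\|T\psi\|\leq\|\Ncal\psi\|$ directly by Cauchy--Schwarz on fixed particle-number sectors, the paper identifies each error summand as $n_{\alpha,k}^{-1}n_{\alpha,l}^{-1}\,\di\Gamma\big(u^{(\alpha)}e^{ilx}\omega^{(\alpha)}e^{-ikx}u^{(\alpha)}\big)$ and invokes the standard estimate $\|\di\Gamma(A)\psi\|\leq\|A\|_{\mathrm{op}}\|\Ncal\psi\|$ together with $\|u^{(\alpha)}e^{ilx}\omega^{(\alpha)}e^{-ikx}u^{(\alpha)}\|_{\mathrm{op}}\leq 1$; your sector-by-sector argument is in effect an unfolding of the proof of that abstract bound, so the two routes are mathematically the same.
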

\begin{proof} The two identities  on the first line of \eqref{eq:commcc} are obvious. We prove the second line.
\paragraph{First case: $\alpha \in \Ikp$ and $\beta \in \Ical_l^{+}$.} We have
 \begin{equation}
 \label{eq:commut}[c_\alpha(k),c^*_\beta(l)] = [b_{\alpha,k},b^*_{\beta,l}]\,.
 \end{equation} 
 From the definition it is clear that $b$ and $b^*$ operators belonging to different patches commute, explaining the $\delta_{\alpha,\beta}$-factor. Thus, from now on $\alpha =\beta$. By the CAR,
\begin{equation}
\label{eq:hphp}
[a_{h_1} a_{p_1},a^*_{p_2} a^*_{h_2}] = \delta_{h_1,h_2}\delta_{p_1,p_2} - a^*_{p_2} a_{p_1} \delta_{h_1,h_2} - a^*_{h_2} a_{h_1} \delta_{p_1,p_2}\,.
\end{equation}
The first term in \eqref{eq:hphp} gives the following contribution to the commutator \eqref{eq:commut}:
 \[
 \begin{split}& n_{\alpha,k}^{-1} n_{\alpha,l}^{-1} \sum_{\substack{p_1 \in B_\textnormal{F}^c \cap B_\alpha \\ h_1 \in B_\textnormal{F} \cap B_\alpha}} \sum_{\substack{p_2 \in B_\textnormal{F}^c \cap B_\alpha \\ h_2 \in B_\textnormal{F} \cap B_\alpha}} \delta_{h_1,h_2}\delta_{p_1,p_2} \delta_{p_1-h_1,k}\delta_{p_2-h_2,l}\\
 & = n_{\alpha,k}^{-1} n_{\alpha,l}^{-1} \sum_{\substack{p_1 \in B_\textnormal{F}^c \cap B_\alpha \\ h_1 \in B_\textnormal{F} \cap B_\alpha}} \delta_{p_1-h_1,k} \delta_{p_1-h_1,l} = n_{\alpha,k}^{-2} \delta_{k,l} \sum_{\substack{p \in B_\textnormal{F}^c \cap B_\alpha \\ h \in B_\textnormal{F} \cap B_\alpha}} \delta_{p-h,k} = \delta_{k,l}\,.
 \end{split}
 \]
The two remaining terms in \eqref{eq:hphp} produce the error term
\begin{equation}\label{eq:errorerror}
 \begin{split} &- \sum_{\substack{h_1, h_2 \in B_\textnormal{F} \cap B_\alpha\\p \in B_\textnormal{F}^c \cap B_\alpha}} \frac{\delta_{p - h_1,k} \delta_{p-h_2,l}}{n_{\alpha,k} n_{\alpha,l}} a^*_{h_2} a_{h_1} - \sum_{\substack{p_1, p_2 \in B_\textnormal{F}^c \cap B_\alpha\\h\in B_\textnormal{F}\cap B_\alpha}} \frac{\delta_{p_1-h,k}\delta_{p_2-h,l}}{n_{\alpha,k} n_{\alpha,l}}a^*_{p_2} a_{p_1}\\
 & \hspace{5cm} =: \Ecal_{1}(\alpha,k,l) + \Ecal_{2}(\alpha,k,l) =: \Ecal(\alpha,k,l)\,.\end{split}
 \end{equation}
 In the present case, the error term in the lemma is $\Ecal_\alpha(k,l) := \Ecal(\alpha,k,l)$.
Let us only consider the second term in the left-hand side; the first can be controlled in the same way. Setting $\omega^{(\alpha)} := \sum_{h \in B_\textnormal{F}\cap B_\alpha} \lvert f_h\rangle \langle f_h\rvert$ and $u^{(\alpha)} := \sum_{p \in B_\textnormal{F}^c \cap B_\alpha} \lvert f_p\rangle \langle f_p \rvert$, we have
 \begin{align*}
 \di\Gamma\Big( u^{(\alpha)} e^{ilx} \omega^{(\alpha)} e^{-ikx} u^{(\alpha)} \Big) & = \sum_{p_1,p_2 \in B_\textnormal{F}^c \cap B_\alpha} a^*_{p_1} a_{p_2} \langle f_{p_2}, e^{ilx} \Big[\sum_{h\in B_\textnormal{F}\cap B_\alpha} \lvert f_{h}\rangle \langle f_{h}\rvert \Big] e^{-ikx} \lvert f_{p_1}\rangle \\
 & = \sum_{\substack{p_1,p_2\in B_\textnormal{F}^c\cap B_\alpha\\h \in B_\textnormal{F} \cap B_\alpha}} a^*_{p_1} a_{p_2} \delta_{p_2-h,l} \delta_{p_1-h,k}\,.
 \end{align*}
 Recall also, for the second quantization of any bounded one-particle operator, the standard bound $\norm{\di\Gamma(A)\psi} \leq \norm{A}\OP\norm{\Ncal \psi}$ for all $\psi \in \fock$, with $\norm{A}\OP$ the operator norm.
 Consequently
\[
\norm{ \Ecal_2(\alpha,k,l) \psi } = \Big\| \frac{1}{n_{\alpha,k} n_{\alpha,l}}\di\Gamma\Big( u^{(\alpha)} e^{ilx} \omega^{(\alpha)} e^{-ikx} u^{(\alpha)} \Big) \Big\| \leq \frac{1}{n_{\alpha,k} n_{\alpha,l}} \norm{ \Ncal\psi }
\]
 since $\norm{u^{(\alpha)} e^{ilx} \omega^{(\alpha)} e^{-ikx} u^{(\alpha)}}\OP \leq 1$. 

 \paragraph{Second case: $\alpha \in \Ikm$, $\beta \in \Ical_l^{-}$.} This case is treated like the first case, recalling that
  \[[c_\alpha(k),c^*_\beta(l)] = [b_{\alpha,-k},b^*_{\beta,-l}]\,.\]
 In this case $\Ecal_\alpha(k,l) := \Ecal(\alpha,-k,-l)$, with the same bound as before.
 
\paragraph{Third case: $\alpha \in \Ikp$ and $\beta \in \Ical_l^{-}$, and vice versa.} For $\alpha \neq \beta$ the commutator vanishes, just like in the previous cases. So consider
$\alpha \in \Ikp$ and $\beta = \alpha \in \Ical^{-}_l = \Ical_{-l}^+$. We find
\begin{equation}\label{eq:bibi}[c_\alpha(k),c^*_\alpha(l)] = [b_{\alpha,k},b^*_{\alpha,-l}] = \delta_{k,-l} + \mathcal{E}(\alpha,k,-l) \,.\end{equation}
Since $\Ikp \cap \Ical_k^{-} = \emptyset$, $\alpha = \beta$ is possible only for $k \neq l$. Also $k = -l$ is excluded since $k,l \in \north$. Consequently $\delta_{k,l} = 0 = \delta_{k,-l}$, so \eqref{eq:bibi} agrees with the statement of the Lemma (if we set $\Ecal_\alpha(k,l) := \Ecal(\alpha,k,-l)$). The estimate of the error term remains the same.

\medskip
 
 It is obvious that $\Ecal(\alpha,k,l)$ commutes with $\Ncal$. This completes the proof of the lemma.
\end{proof}

The next lemma provides bounds for the $c_{\alpha}(k)$, $c_{\alpha}^{*}(k)$ operators that are similar to the usual bounds valid for bosonic creation and annihilation operators.

\begin{lem}[Bounds for Pair Operators]\label{lem:bosopbound}
Let $k \in \north$ and $\alpha \in \Ik$. Then,
\begin{equation}\label{eq:cbd}
\norm{c_{\alpha}(k)\psi} \leq \norm{\Ncal(B_\textnormal{F}\cap B_\alpha)^{1/2} \psi}\qquad \forall \psi\in \mathcal{F}\;,
\end{equation}
where $\Ncal(B) := \sum_{i \in B} a^*_i a_i$ for any set of momenta $B\subset \Zbb^{3}$. Furthermore, for $f \in \ell^2(\Ik)$ and $\psi \in \fock$, we have
 \begin{equation}\label{eq:cstarbound}
 \begin{split} 
 \norm{\sum_{\alpha \in \Ik} f(\alpha) c_\alpha(k) \psi} &\leq \Big( \sum_{\alpha \in \Ik} \lvert f(\alpha)\rvert^2 \Big)^{1/2} \norm{\Ncal^{1/2} \psi}
\\ 
 \norm{\sum_{\alpha \in \Ik} f(\alpha) c^*_\alpha(k) \psi} &\leq \Big( \sum_{\alpha \in \Ik} \lvert f(\alpha)\rvert^2 \Big)^{1/2} \norm{(\Ncal+1)^{1/2} \psi}\,.\end{split} \end{equation}
\end{lem}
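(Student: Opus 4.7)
My plan is to avoid invoking the approximate CCR of Lemma \ref{lem:ccr}, since doing so would bring in the error operators $\mathcal{E}_\alpha$ that do not appear in the statement. Instead, I shall prove all three estimates by direct manipulation of the fermionic operators, exploiting two structural facts: the patches $B_\alpha$ are pairwise disjoint, and in every admissible pair the hole mode $h \in B_\textnormal{F}$ and the particle mode $h \pm k \in B_\textnormal{F}^c$ are distinct, so their creation and annihilation operators anticommute cleanly.

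\paragraph{Proof of \eqref{eq:cbd}.} Using the definition, $c_\alpha(k)\psi = n_{\alpha,\pm k}^{-1} \sum_h a_{h\pm k} a_h \psi$ where $h$ ranges over the $n_{\alpha,\pm k}^2$ admissible pairs in the patch (the sign depends on whether $\alpha\in\Ikp$ or $\Ikm$). Triangle inequality followed by Cauchy--Schwarz yields
\[
\|c_\alpha(k)\psi\|^2 \leq \frac{1}{n_{\alpha,\pm k}^2}\cdot n_{\alpha,\pm k}^2 \sum_h \|a_{h\pm k} a_h \psi\|^2 = \sum_h \|a_{h\pm k} a_h \psi\|^2.
\]
Since $h$ and $h\pm k$ are distinct modes, the operator identity $a_h^* a_{h\pm k}^* a_{h\pm k} a_h = a_h^* a_h \cdot a_{h\pm k}^* a_{h\pm k}\leq a_h^* a_h$ holds, so $\|a_{h\pm k} a_h\psi\|^2 \leq \langle \psi, a_h^* a_h \psi\rangle$. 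Summing over $h \in B_\textnormal{F}\cap B_\alpha$ gives \eqref{eq:cbd}.

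\paragraph{Proof of the annihilator bound in \eqref{eq:cstarbound}.} Because the patches are disjoint, I can absorb the sum over $\alpha$ into a single sum over hole modes:
\[
\sum_{\alpha\in\Ik} f(\alpha)\, c_\alpha(k)\psi = \sum_h g(h)\, a_{h\pm k} a_h \psi,\qquad g(h) := \frac{f(\alpha(h))}{n_{\alpha(h),\pm k}},
\]
where $\alpha(h)$ is the unique patch containing the pair $(h, h\pm k)$ and $g(h)=0$ otherwise. Triangle inequality and Cauchy--Schwarz then give
\[
\Bigl\|\sum_h g(h)\, a_{h\pm k} a_h \psi\Bigr\|^2 \leq \Bigl(\sum_h |g(h)|^2\Bigr) \Bigl(\sum_h \|a_{h\pm k} a_h \psi\|^2\Bigr).
\]
The combinatorial identity $\sum_h |g(h)|^2 = \sum_\alpha n_{\alpha,\pm k}^2 \cdot |f(\alpha)|^2/n_{\alpha,\pm k}^2 = \sum_\alpha |f(\alpha)|^2$, combined with the pointwise estimate from the previous paragraph, closes the bound against $\|\Ncal^{1/2}\psi\|^2$.

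\paragraph{Proof of the creator bound in \eqref{eq:cstarbound}.} Set $X := \sum_\alpha f(\alpha)\, c_\alpha^*(k) = \sum_h g(h)\, a_{h\pm k}^* a_h^*$ with the same coefficients $g(h)$ as above. A normal-ordering calculation (carried out by repeatedly using the CAR and the fact that hole and particle modes always anticommute, so only equal-mode anticommutators contribute delta-functions) yields the key identity
\[
[X, X^*] = \sum_h |g(h)|^2 \bigl(a_h^* a_h + a_{h\pm k}^* a_{h\pm k} - 1\bigr).
\]
Rearranging, $\|X\psi\|^2 = \|X^*\psi\|^2 + \sum_h |g(h)|^2 \bigl(\|\psi\|^2 - \langle\psi, (a_h^* a_h + a_{h\pm k}^* a_{h\pm k})\psi\rangle\bigr)$. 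Discarding the two non-positive occupation contributions and applying the annihilator bound already established to $X^* = \sum_h \overline{g(h)}\, a_h a_{h\pm k}$ gives
\[
\|X\psi\|^2 \leq \bigl(\sum_\alpha |f(\alpha)|^2\bigr)\|\Ncal^{1/2}\psi\|^2 + \bigl(\sum_\alpha |f(\alpha)|^2\bigr)\|\psi\|^2 = \bigl(\sum_\alpha |f(\alpha)|^2\bigr)\|(\Ncal+1)^{1/2}\psi\|^2.
\]

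\paragraph{Main obstacle.} The only non-routine step is verifying the commutator identity for $[X,X^*]$. Several products of the form $a_h^* a_{h'}$ and $a_{h\pm k}^* a_{h'\pm k}$ must be normal-ordered, and one has to check that every cross-contraction in which a hole meets a particle vanishes (since the anticommutator of operators attached to different modes is zero), so the surviving delta-functions produce precisely the one-body number operators on the two relevant modes. Once this identity is in hand, the rest of the argument is essentially a bookkeeping combination of Cauchy--Schwarz estimates.
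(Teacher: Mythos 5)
Your proof is correct, and the core mechanics coincide with the paper's. For \eqref{eq:cbd} and the annihilator bound in \eqref{eq:cstarbound} the two arguments differ only in bookkeeping: you re-index the Cauchy--Schwarz estimate over hole modes $h$, whereas the paper sums over pairs $(\alpha,\alpha')$ and applies \eqref{eq:cbd} patch by patch; both rest on the disjointness of the patches $B_\alpha$. The genuine divergence is in the creator bound: the paper invokes Lemma~\ref{lem:ccr} and then drops the negative-semidefinite error $\Ecal_\alpha(k,k)$, while you compute $[X,X^*]$ directly in terms of fermionic number operators, arriving at the same object without passing through the $\Ecal_\alpha$ notation. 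The vanishing of cross-contractions that you flag does hold as stated: the contractions $\delta_{h,p'}$ and $\delta_{p,h'}$ vanish because particle modes ($B_\textnormal{F}^c$) and hole modes ($B_\textnormal{F}$) are disjoint, and the surviving $\delta_{h,h'}$ and $\delta_{p,p'}$ each force $h=h'$ since the patch decomposition is disjoint and $\Ikp\cap\Ikm=\emptyset$, so a given $h$ determines its partner $p(h)$ uniquely. Your version is slightly more self-contained; the paper's route is natural in context since Lemma~\ref{lem:ccr} is needed elsewhere in any case.
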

\begin{proof}
Using $\norm{ a_{q} }\OP = 1$ we have
\begin{align*}
\norm{b_{\alpha,k} \psi} & \leq \frac{1}{n_{\alpha,k}} \sum_\patch{p}{h}{\alpha}\delta_{p-h,k} \norm{a_p a_h \psi} \leq \frac{1}{n_{\alpha,k}} \sum_\patch{p}{h}{\alpha}\delta_{p-h,k} \norm{a_h \psi} \\
& \leq \frac{1}{n_{\alpha,k}} \Big[ \sum_\patch{p}{h}{\alpha} \delta_{p-h,k} \Big]^{1/2} \Big[ \sum_\patch{p}{h}{\alpha} \delta_{p-h,k} \norm{a_h \psi}^2 \Big]^{1/2} = \langle \psi, \Ncal(B_\textnormal{F}\cap B_\alpha) \psi \rangle^{1/2}\,,
\end{align*}
recalling that by definition $\sum_\patch{p}{h}{\alpha} \delta_{p-h,k} = n_{\alpha,k}^2$. This proves \eqref{eq:cbd}. To prove the first inequality from \eqref{eq:cstarbound}, we use \eqref{eq:cbd} together with Cauchy-Schwarz,
\begin{align*}
 \Big\| \sum_{\alpha\in\Ik} \cc{f(\alpha)} c_\alpha(k) \psi \Big\|^2
& \leq  \sum_{\alpha \in \Ik} \lvert f(\alpha) \rvert^2  \sum_{\alpha'\in\Ik} \norm{c_{\alpha'}(k) \psi}^2\\
& \leq  \sum_{\alpha \in \Ik} \lvert f(\alpha) \rvert^2  \sum_{\alpha'\in\Ik} \norm{\Ncal(B_\textnormal{F} \cap B_{\alpha'})^{1/2} \psi}^2 \leq \sum_{\alpha \in \Ik} \lvert f(\alpha) \rvert^2 \langle \psi, \Ncal \psi\rangle\,.
\end{align*}

We now prove the second inequality from \eqref{eq:cstarbound}. By Lemma \ref{lem:ccr}, we have
\begin{align*}
& \Big\|\sum_{\alpha \in \Ik} f(\alpha) c^*_\alpha(k) \psi\Big\|^2 \\
& = \sum_{\alpha,\beta\in\Ik} \cc{f(\alpha)}f(\beta) \langle \psi, c^*_\beta(k) c_\alpha(k) \psi \rangle + \sum_{\alpha,\beta\in\Ik} \cc{f(\alpha)}f(\beta) \langle \psi, [ c_\alpha(k), c^*_\beta(k)]\psi \rangle \\
& = \Big\| \sum_{\alpha\in\Ik} \cc{f(\alpha)} c_\alpha(k) \psi \Big\|^2 + \sum_{\alpha,\beta\in\Ik} \cc{f(\alpha)}f(\beta) \langle \psi, \delta_{\alpha,\beta}\left(1 +\Ecal_\alpha(k,k) \right)\psi \rangle \\
& \leq  \sum_{\alpha \in \Ik} \lvert f(\alpha) \rvert^2  \sum_{\alpha'\in\Ik} \norm{c_{\alpha'}(k) \psi}^2 + \sum_{\alpha \in \Ik} \lvert f(\alpha)\rvert^2 \norm{\psi}^2 + \sum_{\alpha \in \Ik} \lvert f(\alpha)\rvert^2 \langle \psi, \Ecal_\alpha(k,k) \psi\rangle\,. \tagg{newnew}
\end{align*}
Consider the last term on the r.\,h.\,s. Recall from \eqref{eq:errorerror} that for $\alpha \in \Ikp$ we have
\[
\Ecal_\alpha(k,k) =  - \frac{1}{n_{\alpha,k}^2} \sum_{\substack{p\in B_\textnormal{F}^c \cap B_\alpha \\ h \in B_\textnormal{F} \cap B_\alpha}} \delta_{p-h,k} a^*_h a_h - \frac{1}{n_{\alpha,k}^2} \sum_{\substack{p\in B_\textnormal{F}^c \cap B_\alpha \\ h \in B_\textnormal{F} \cap B_\alpha}} \delta_{p-h,k} a^*_p a_p\,.
\]
Obviously $\langle \psi, \Ecal_\alpha(k,k) \psi \rangle \leq 0$. For $\alpha \in \Ikm$ we have $-k$ replacing $k$ on the r.\,h.\,s., again producing a negative semidefinite operator. Hence in \eqref{eq:newnew} we can drop the last summand for the purpose of an upper bound. Together with the first bound from \eqref{eq:cstarbound} this implies
\begin{align*}
\Big\| \sum_{\alpha \in \Ik} f(\alpha) c^*_\alpha(k) \psi \Big\|^2 
& \leq \sum_{\alpha \in \Ik} \lvert f(\alpha) \rvert^2 \langle \psi, \Ncal \psi\rangle + \sum_{\alpha \in \Ik} \lvert f(\alpha) \rvert^2 \langle \psi,\psi\rangle\,. \qedhere
\end{align*}
\end{proof}
\subsection{The Trial State}
In order to motivate the definition of the trial state, let us formally rewrite the correlation Hamiltonian $\mathcal{H}_{\textnormal{corr}}$ in terms of the almost-bosonic pair operators $c^{*}_{\alpha}(k)$ and $c_{\alpha}(k)$.
\paragraph{Bosonization of the Correlation Hamiltonian.} Inserting the decomposition \eqref{eq:patchdecomp} into \eqref{eq:QN0-1} we find
\[ \begin{split} Q_N^{\textnormal{B}} =  \frac{1}{2N} \sum_{k\in \Zbbn} \hat{V}(k) \Big[ &  2 \sum_{\alpha \in \Ikp} \sum_{\beta \in \Ikp} n_{\alpha,k} n_{\beta,k} b_{\alpha,k}^* b_{\beta,k}  + \sum_{\alpha \in \Ikp } \sum_{\beta \in \Ical^{+}_{-k}} n_{\alpha,k} n_{\beta, -k} b_{\alpha, k}^* b_{\beta, -k}^* \\ &   + \sum_{\alpha \in \Ical^{+}_{-k}} \sum_{\beta \in \Ikp} n_{\alpha,-k} n_{\beta, k} b_{\alpha, -k} b_{\beta, k} \Big] + \text{error terms,} \end{split} \]
where the error terms contain at least one $\rfrak_k$--operator (see the discussion following \eqref{eq:rk-errors} for the rigorous proof of their smallness). Recalling the definition \eqref{eq:vkdef} of  $v_{\alpha}(k)$ and the definition of the $c$ and $c^*$ operators \eqref{eq:rigorouscoperators}, we get 
\begin{equation}
\label{eq:Qccstar} \begin{split} Q_N^{\textnormal{B}} = \hbar \kappa^2 \sum_{k \in \north} & \lvert k\rvert \, \hat{V}(k)  \Big[ \sum_{\alpha \in \Ikp} \sum_{\beta \in \Ikp} v_{\alpha} (k) v_\beta (k) c_\alpha^* (k)  c_\beta (k)   \\ &  + \sum_{\alpha \in \Ikm} \sum_{\beta \in \Ikm} v_\alpha (-k) v_\beta (-k) c_\alpha^* (k) c_\beta (k) \\ & + \sum_{\alpha \in \Ikp} \sum_{\beta \in \Ikm} \left( v_\alpha (k) v_\beta (-k) c_\alpha^*(k) c_\beta^*(k) + \text{h.c.} \right)  \Big]  + \text{error terms}\end{split}
\end{equation}
where $\kappa = (3/4\pi)^{1/3} + \Ocal(N^{-1/3})$ is defined as in \eqref{eq:gauss}. 

\medskip 

Let us now consider the operator $\di\Gamma (uhu-\cc{v}\cc{h}v)$ appearing in the definition of the correlation Hamiltonian \eqref{eq:Hcorr}. To express $\di\Gamma (uhu - \bar{v} \bar{h} v)$ in terms of $c_\alpha (k), c_\alpha^* (k)$, we observe that, for $\alpha \in \Ikp$ and neglecting the contribution of the constant direct term and of the exchange operator $X$ on the r.\,h.\,s.\ of \eqref{eq:h-def} (they will be proven to be small)
\begin{align*}
 \di\Gamma (uhu-\cc{v}\cc{h}v) c^*_{\alpha} (k) \Omega  & \simeq \frac{1}{n_{\alpha,k}}\sum_{\substack{p\in B_\textnormal{F}^c \cap B_\alpha\\h \in B_\textnormal{F} \cap B_\alpha}} \frac{\hbar^2 (\lvert p\rvert^2 - \lvert h\rvert^2)}{2} a^*_p a^*_h \delta_{p-h,k} \Omega\\
 & = \frac{1}{n_{\alpha,k}}\sum_{\substack{p\in B_\textnormal{F}^c \cap B_\alpha\\h \in B_\textnormal{F} \cap B_\alpha}} \frac{\hbar^2 (p-h)\cdot(p+h)}{2} a^*_p a^*_h \delta_{p-h,k} \Omega\\
 & \simeq \frac{1}{n_{\alpha,k}}\sum_{\substack{p\in B_\textnormal{F}^c \cap B_\alpha\\h \in B_\textnormal{F} \cap B_\alpha}} \frac{\hbar^2 k\cdot 2\omega_\alpha}{2} a^*_p a^*_h \delta_{p-h,k} \Omega =  \hbar^2 k\cdot \omega_\alpha c^*_{\alpha} (k)\Omega\,,
\end{align*}
where we used the fact that, for $p,h \in B_\alpha$, $p \simeq \omega_\alpha \simeq h$. A similar computation for $\alpha \in \Ikm$ shows that
\begin{equation}
 \label{eq:diG-cc}
\di\Gamma(uhu-\cc{v}\cc{h}v) c^*_{\alpha} (k) \Omega \simeq {\hbar^2} |k \cdot \omega_\alpha| c^*_\alpha (k) \Omega \end{equation}
for all $\alpha \in \Ical_k = \Ikp \cup \Ikm$.  

If the operators $c^*_\alpha (k), c_\alpha (k)$ were bosonic creation and annihilation operators, satisfying canonical commutation relations, and if $\di\Gamma  (uhu-\cc{v}\cc{h}v)$ were quadratic in these operators, \eqref{eq:diG-cc} would lead us to 
\[
\di\Gamma  (uhu-\cc{v}\cc{h}v) \simeq \hbar^2 \sum_{k \in \north} \sum_{\alpha \in \Ical_k} |k \cdot \omega_\alpha| c_\alpha^* (k) c_\alpha (k)\,.
\]
Thus, Equations \eqref{eq:Qccstar} and \eqref{eq:diG-cc} suggest that, if restricted to states with few particles, the correlation Hamiltonian should be approximated by the Sawada-type effective Hamiltonian
\begin{equation}\label{eq:eff-H} \Hcal_\textnormal{eff} = \hbar \kappa \sum_{k \in \north} \lvert k\rvert h_\textnormal{eff} (k) \end{equation}
with
\begin{equation}\label{eq:heff} \begin{split}h_\textnormal{eff}(k) = \; &\sum_{\alpha \in\Ical_k} u_\alpha^2 (k) \,  c^*_{\alpha} (k) c_{\alpha} (k) + g (k) \bigg[ \Big( \sum_{\alpha \in \Ikp}\sum_{\beta \in \Ikm} v_\alpha(k) v_\beta(-k) c^*_{\alpha} (k) c^*_{\beta} (k) + \hc \Big) \\ 
&+\sum_{\alpha\in \Ikp}\sum_{\beta \in \Ikp} v_\alpha(k) v_\beta(k) c^*_{\alpha} (k) c_{\beta} (k)  + \sum_{\alpha \in \Ikm}\sum_{\beta \in\Ikm} v_\alpha(-k) v_\beta(-k) c^*_{\alpha} (k) c_{\beta} (k) \bigg]\;.
\end{split}\end{equation}
We defined
\begin{equation}\label{eq:def-u}
u_\alpha (k) := \lvert \hat{k} \cdot \hat{\omega}_\alpha\rvert^{1/2}, \qquad g(k) := \kappa  \hat{V} (k)\,.
\end{equation}  
(The main difference to Sawada's original Hamiltonian is that he treated pairs $a^*_p a^*_h$ as bosonic; our pair operators instead are delocalized over large patches, thus relaxing the Pauli principle and allowing a controlled bosonic approximation.)
If the operators $c_{\alpha}(k)$, $c_{\alpha}^{*}(k)$ were exactly bosonic, the effective Hamiltonian $\Hcal_{\textnormal{eff}}$ could be diagonalized via a bosonic Bogoliubov transformation. We provide the details of this computation in Appendix \ref{sec:bosonicapproximation}. The ground state of \eqref{eq:eff-H} would be given by  
\begin{equation}
\label{eq:bos-gs}
\xi = \exp \Big[ \frac{1}{2} \sum_{k \in \north} \sum_{\alpha , \beta \in \Ical_k} K_{\alpha, \beta} (k) c_\alpha^* (k) c_\beta^* (k) - \text{h.c.} \Big] \Omega\;,
\end{equation}
where, for every $k \in \north$, $K(k)$ is the $2I_k \times 2I_k$ matrix (with $I_k := \lvert \Ikp\rvert = \lvert\Ikm\rvert$) defined by
\begin{equation}\label{eq:Kk}K(k) := \log \lvert S_1^T (k)\rvert\,,\end{equation}
(the superscript $T$ denoting the transpose of the matrix) with
\begin{equation}\label{eq:S1}S_1 (k) := \left( \D (k) +\W (k) -\Wt (k) \right)^{1/2} E(k)^{-1/2}\,,\end{equation}
and
 \begin{equation*} E (k) := \left( (\D (k) +\W (k) -\Wt (k) )^{1/2} (\D (k) +\W (k) +\Wt (k)) (\D (k)+\W (k)- \Wt (k) )^{1/2} \right)^{1/2} \end{equation*}
and, recalling the definition \eqref{eq:vkdef} of $v_\alpha(k)$, 
\begin{equation}
\label{eq:blocks2}\begin{split}
D (k) & = \diag(u_\alpha^2 (k) : \alpha \in \Ik)\,, \\
\W_{\alpha,\beta} (k) & = \left\{ \begin{array}{cl} g (k) v_\alpha(k) v_\beta(k) & \text{for } \alpha,\beta \in \Ikp \\ g (k) v_\alpha(-k) v_\beta(-k) & \text{for } \alpha,\beta \in \Ikm \\ 0 & \text{for } \alpha \in \Ikp, \beta \in \Ikm \text{ or } \alpha \in \Ikm, \beta \in \Ikp,\end{array} \right. \\
\Wt_{\alpha,\beta} (k) & = \left\{ \begin{array}{cl} g (k) v_\alpha(k) v_\beta(-k) & \text{for } \alpha \in \Ikp, \beta \in \Ikm \\ g (k) v_\alpha(-k) v_\beta(k) & \text{for } \alpha \in \Ikm,\beta \in \Ikp \\ 0 & \text{for } \alpha,\beta \in \Ikp \text{ or } \alpha,\beta \in \Ikm \,.\end{array} \right.
\end{split}
\end{equation}
However, the particle--hole pair operators $c_\alpha^* (k), c_\alpha (k)$ are not exactly bosonic, and thus the ground state vector of \eqref{eq:eff-H} is \emph{not} given by \eqref{eq:bos-gs}. Nevertheless, by Lemma \ref{lem:ccr}, it is reasonable to expect that the true ground state of $\mathcal{H}_{\textnormal{corr}}$ will be energetically close to $\xi$, provided that the number of fermions in $\xi$ is small. This last fact is proven in Section \ref{sec:AB}. 

\bigskip

Motivated by the above heuristic discussion, we define as trial state for the full many-body problem the fermionic Fock space vector
\begin{equation}\label{eq:trial}
\psi_\textnormal{trial} := R_{\omega_{\textnormal{pw}}} T \Omega\;, \qquad T := e^{B}\;,\qquad B := \frac{1}{2}\sum_{k\in \north}\sum_{\alpha,\beta\in\Ik} K(k)_{\alpha,\beta} c^*_\alpha(k) c^*_\beta(k) - \hc
\end{equation}
Notice that $B^* = -B$, so $T$ is unitary and hence $\norm{\psi_\textnormal{trial}} = 1$. We have to check that $\psi_\textnormal{trial}$ is an $N$-particle state.  In fact, writing $\xi := R^*_{\omega_\textnormal{pw}} \psi_\textnormal{trial}$, we have
\[\Ncal \psi_\textnormal{trial} = R_{\omega_\textnormal{pw}} \Big[\sum_{p\in B_\textnormal{F}^c} a^*_p a_p + \sum_{h \in B_\textnormal{F}} a_h a^*_h \Big] \xi = R_{\omega_\textnormal{pw}}(\Ncal_\textnormal{p}-\Ncal_\textnormal{h}) \xi + N\psi_\textnormal{trial}\,,\] which shows that $\psi_\textnormal{trial}$ is an eigenvector of $\Ncal$ with eigenvalue $N$ if and only if $\xi$ is an eigenvector of $\Ncal_\textnormal{p}-\Ncal_\textnormal{h}$ with eigenvalue $0$. This is the content of the next lemma.
\begin{lem}[Particle-Hole Symmetry]\label{lem:phsymmetry}
For $\xi$ as in \eqref{eq:bos-gs} we have $(\Ncal_\textnormal{p} - \Ncal_\textnormal{h}) \xi = 0$.
\end{lem}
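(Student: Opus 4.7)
The key observation is that each pair creation operator $c_\alpha^*(k)$ is, by \eqref{eq:bka} and \eqref{eq:rigorouscoperators}, a finite linear combination of products $a_p^* a_h^*$ with $p \in B_\textnormal{F}^c$ and $h \in B_\textnormal{F}$. Each such product creates exactly one fermion at a particle momentum and one at a hole momentum, so it preserves the difference $\Ncal_\textnormal{p} - \Ncal_\textnormal{h}$. The whole proof is the formalization of this conservation law.

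First, using the CAR \eqref{eq:car}, I would verify that for any $p \in B_\textnormal{F}^c$ and $h \in B_\textnormal{F}$,
\[
[\Ncal_\textnormal{p}, a_p^* a_h^*] = a_p^* a_h^* = [\Ncal_\textnormal{h}, a_p^* a_h^*]\;,
\]
so that $[\Ncal_\textnormal{p} - \Ncal_\textnormal{h}, a_p^* a_h^*] = 0$. By linearity this yields $[\Ncal_\textnormal{p} - \Ncal_\textnormal{h}, c_\alpha^*(k)] = 0$ for all $k \in \north$ and $\alpha \in \Ik$, and by taking adjoints the analogous identity for $c_\alpha(k)$.

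Second, since $B$ from \eqref{eq:trial} is a finite linear combination of expressions of the form $c_\alpha^*(k) c_\beta^*(k) - c_\beta(k) c_\alpha(k)$, we obtain $[\Ncal_\textnormal{p} - \Ncal_\textnormal{h}, B] = 0$. Because $B$ maps the finite-particle sectors of $\fock$ into themselves, the exponential $T = e^B$ acts sector-wise via the convergent power series, and hence $[\Ncal_\textnormal{p} - \Ncal_\textnormal{h}, T] = 0$ on this dense domain. Since $\Ncal_\textnormal{p} \Omega = 0 = \Ncal_\textnormal{h} \Omega$, we conclude
\[
(\Ncal_\textnormal{p} - \Ncal_\textnormal{h}) \xi = (\Ncal_\textnormal{p} - \Ncal_\textnormal{h}) T \Omega = T (\Ncal_\textnormal{p} - \Ncal_\textnormal{h}) \Omega = 0\;.
\]

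There is no real obstacle here: the statement is purely structural. The only minor care needed is the passage from $[\Ncal_\textnormal{p} - \Ncal_\textnormal{h}, B] = 0$ to $[\Ncal_\textnormal{p} - \Ncal_\textnormal{h}, T] = 0$, which is justified by restricting to a finite-particle sector where everything is bounded and the exponential series may be differentiated term by term.
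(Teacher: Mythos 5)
Your central observation matches the paper's: each $c^*_\alpha(k)$ creates one particle \emph{and} one hole, so $[\Ncal_\textnormal{p}-\Ncal_\textnormal{h},c^*_\alpha(k)]=0$, hence $[\Ncal_\textnormal{p}-\Ncal_\textnormal{h},B]=0$. The paper uses exactly this. Where you diverge is in how you pass from commutation with $B$ to commutation with $T=e^B$, and here your justification has a real gap. You claim that ``$B$ maps the finite-particle sectors of $\fock$ into themselves'' and that ``$T=e^B$ acts sector-wise via the convergent power series''. Neither is correct as stated: $B$ changes the fermion number by $\pm4$, so it does not preserve any single $n$-particle sector $\fock^{(n)}$, and $T\Omega=\xi$ has nonzero components in infinitely many particle-number sectors, so $T$ certainly does not act ``sector-wise''. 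More importantly, $B$ is an unbounded operator, and the convergence of $\sum_n B^n\Omega/n!$ in Fock-space norm (i.\,e., that $\Omega$ is an analytic vector for $B$) is not obvious and would itself require a nontrivial estimate; the unitarity of $T$, which follows from $B^*=-B$ via Stone's theorem, does \emph{not} give you the power series on any convenient domain.

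The paper circumvents this entirely with a Duhamel/Gr\"onwall-type argument: setting $\xi_\lambda=T_\lambda\Omega$, it writes
\[
\norm{(\Ncal_\textnormal{p}-\Ncal_\textnormal{h})\xi}^2 = \int_0^1 \di\lambda\,\frac{\di}{\di\lambda}\langle\xi_\lambda,(\Ncal_\textnormal{p}-\Ncal_\textnormal{h})^2\xi_\lambda\rangle = \int_0^1\di\lambda\,\langle\xi_\lambda,[(\Ncal_\textnormal{p}-\Ncal_\textnormal{h})^2,B]\xi_\lambda\rangle = 0\;,
\]
which reduces everything to the commutator $[\Ncal_\textnormal{p}-\Ncal_\textnormal{h},B]=0$ that you already established, without ever touching the power series of $e^B$ or making claims about sector-preservation. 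You should replace the second paragraph of your proof with an argument of this type (or with a spectral-projection argument: $B$ skew-adjoint commuting with the self-adjoint $\Ncal_\textnormal{p}-\Ncal_\textnormal{h}$ in the strong sense commutes with its spectral projections $\Pbb_n$, hence $T\Omega = T\Pbb_0\Omega = \Pbb_0 T\Omega$ --- but even this requires verifying the commutation in a form stronger than formal commutator-on-a-core). As written, the step from $[\Ncal_\textnormal{p}-\Ncal_\textnormal{h},B]=0$ to $[\Ncal_\textnormal{p}-\Ncal_\textnormal{h},T]=0$ is not justified.
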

\begin{proof}
Let $\xi_\lambda = T_\lambda \Omega$, with $
T_{\lambda} = e^{\lambda B}$ for $\lambda \in [0,1]$.
Then $\xi_1 = \xi$, $\xi_0 = \Omega$, and thus  
\[
\norm{ (\Ncal_\textnormal{p} - \Ncal_\textnormal{h}) \xi}^2 = \int_0^1 \di\lambda \, \frac{\di}{\di\lambda} \langle \xi_\lambda, (
\Ncal_\textnormal{p} - \Ncal_\textnormal{h})^2 \xi_\lambda \rangle =   \int_0^1 \di\lambda \, \langle \xi_\lambda, \left[ (
\Ncal_\textnormal{p} - \Ncal_\textnormal{h})^2 , B \right] \xi_\lambda \rangle = 0
\]
because $[ \Ncal_\textnormal{p} - \Ncal_\textnormal{h} , c_\alpha^* (k)] = 0 = [ \Ncal_\textnormal{p} - \Ncal_\textnormal{h} , c_\alpha (k) ]$ implies $[\Ncal_\textnormal{p} - \Ncal_\textnormal{h} , B] = 0$.
\end{proof}

\subsection{Approximate Bosonic Bogoliubov Transformations}\label{sec:AB}

Our next task is to evaluate the energy of the fermionic many-body trial state $\psi_\textnormal{trial} = R_{\omega_{\textnormal{pw}}} \xi = R_{\omega_{\textnormal{pw}}} T \Omega$, which by \eqref{eq:cLN} and \eqref{eq:Hcorr} reduces to calculating $\langle \xi, \mathcal{H}_{\textnormal{corr}} \xi \rangle$. To do so, we will need some properties of the operator $T$, which are going to be proven in this section. More generally, we shall consider the one-parameter family of unitaries $T_{\lambda} = e^{\lambda B}$, with $B$ defined in \eqref{eq:trial}.

The next proposition establishes that the action of $T_\lambda$ approximates a bosonic Bogoliubov transformation.  
\begin{prp}[Approximate Bogoliubov Transformation]\label{lem:bog1}
Let $\lambda\in[0,1]$. Let $l \in\north$ and $\gamma \in \Il = \Ical_l^+ \cup \Ical_l^-$. Then
\begin{align*}
T^*_\lambda c_\gamma(l) T_\lambda &= \sum_{\alpha\in\Il}\cosh(\lambda K(l))_{\alpha,\gamma} c_\alpha(l) + \sum_{\alpha\in\Il} \sinh(\lambda K(l))_{\alpha,\gamma} c^*_\alpha(l) + \mathfrak{E}_\gamma(\lambda,l)\,,
\end{align*}
where the error operator $\mathfrak{E}_\gamma(\lambda,l)$ satisfies, for all $\psi \in \mathcal{F}$, the bound
\begin{equation}\label{eq:bogerrorbound}
\Big[ \sum_{\gamma \in \Il} \norm{\Efrak_\gamma(\lambda,l)\psi}^2 \Big]^{1/2} \leq \frac{C}{\nfrak^2} \sup_{\tau\in[0,\lambda]}\norm{ (\Ncal+2)^{3/2} T_\tau\psi}  \, e^{\lambda\norm{K(l)}\HS} \sum_{k\in\north} \norm{K(k)}\HS.
\end{equation}
Here $\nfrak = N^{1/3-\delta/2} M^{-1/2}$ as defined in \eqref{eq:nNM}, and $\norm{K(k)}\HS$ denotes the Hilbert-Schmidt norm of the matrix $K(k)$. 
The same estimate holds for $\mathfrak{E}^*_\gamma(\lambda,l)$.
\end{prp}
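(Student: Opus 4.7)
The plan is to differentiate the conjugated operators $F_\gamma(\lambda) := T_\lambda^* c_\gamma(l) T_\lambda$ and $G_\gamma(\lambda) := T_\lambda^* c_\gamma^*(l) T_\lambda$ in $\lambda$, deriving a coupled linear ODE whose principal part reproduces the exact bosonic Bogoliubov evolution of $c_\gamma(l), c_\gamma^*(l)$; the deviation of the $c_\alpha(k)$ from exact bosonic operators is relegated to a forcing term whose size can be controlled using the error operators $\Ecal_\gamma$ from Lemma \ref{lem:ccr}, and a Gr\"onwall argument closes the estimate. Since $[B,T_\lambda]=0$ and $B^* = -B$, one has $\tfrac{d}{d\lambda} F_\gamma(\lambda) = T_\lambda^* [c_\gamma(l),B] T_\lambda$. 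Using that $c$'s mutually commute (so that the antihermitian part of $B$ does not contribute) and expanding $[c_\gamma(l),c_\alpha^*(k) c_\beta^*(k)]$ via Lemma \ref{lem:ccr}, together with the symmetry $K(l)=K(l)^T$, one arrives at
\begin{equation*}
[c_\gamma(l),B] = \sum_{\alpha\in\Il} K(l)_{\gamma,\alpha}\, c_\alpha^*(l) + \mathrm{Err}_\gamma(l),
\end{equation*}
where the principal term comes from the Kronecker-$\delta$ part of the approximate CCR, and
\begin{equation*}
\mathrm{Err}_\gamma(l) = \tfrac12 \sum_{k\in\north}\sum_{\beta\in\Ik}\Big[ K(k)_{\gamma,\beta}\,\Ecal_\gamma(l,k)\, c_\beta^*(k) + K(k)_{\gamma,\beta}\, c_\beta^*(k)\,\Ecal_\gamma(l,k)\Big]
\end{equation*}
collects the residual $\Ecal_\gamma$-contributions. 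An analogous identity holds for $[c_\gamma^*(l),B]$, yielding an error operator $\mathrm{Err}'_\gamma(l)$ of the same structure.

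Setting
\begin{align*}
\Efrak_\gamma(\lambda,l) &:= F_\gamma(\lambda) - \sum_{\alpha\in\Il}\cosh(\lambda K(l))_{\alpha,\gamma} c_\alpha(l) - \sum_{\alpha\in\Il}\sinh(\lambda K(l))_{\alpha,\gamma} c_\alpha^*(l),\\
\mathfrak{D}_\gamma(\lambda,l) &:= G_\gamma(\lambda) - \sum_{\alpha\in\Il}\sinh(\lambda K(l))_{\alpha,\gamma} c_\alpha(l) - \sum_{\alpha\in\Il}\cosh(\lambda K(l))_{\alpha,\gamma} c_\alpha^*(l),
\end{align*}
a direct computation, using that $K(l)$ is symmetric and commutes with its matrix functions, shows that the derivative of the cosh/sinh ansatz exactly cancels the principal part of the conjugated commutator. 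What remains is the coupled linear system
\begin{align*}
\tfrac{d}{d\lambda}\Efrak_\gamma(\lambda,l) &= \sum_{\alpha\in\Il} K(l)_{\gamma,\alpha}\,\mathfrak{D}_\alpha(\lambda,l) + T_\lambda^*\mathrm{Err}_\gamma(l) T_\lambda,\\
\tfrac{d}{d\lambda}\mathfrak{D}_\gamma(\lambda,l) &= \sum_{\alpha\in\Il} K(l)_{\gamma,\alpha}\,\Efrak_\alpha(\lambda,l) + T_\lambda^*\mathrm{Err}'_\gamma(l) T_\lambda,
\end{align*}
with zero initial data at $\lambda=0$.

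The forcing is then estimated as follows. By unitarity of $T_\tau$, $\|T_\tau^*\mathrm{Err}_\gamma(l) T_\tau\psi\| = \|\mathrm{Err}_\gamma(l) T_\tau\psi\|$. Combining $\|\Ecal_\gamma(l,k)\phi\| \le 2(n_{\gamma,l} n_{\gamma,k})^{-1} \|\Ncal\phi\|$ from Lemma \ref{lem:ccr} with $n_{\alpha,k}\ge C\nfrak$ from \eqref{eq:nNM}, the commutation $[\Ncal,\Ecal_\gamma(l,k)]=0$, the identity $\Ncal c_\beta^*(k) = c_\beta^*(k)(\Ncal+2)$, and the $\ell^2$-bound of Lemma \ref{lem:bosopbound} applied with $f(\beta) = K(k)_{\gamma,\beta}$, one obtains
\begin{equation*}
\Big[\sum_{\gamma\in\Il} \|\mathrm{Err}_\gamma(l) T_\tau\psi\|^2\Big]^{1/2} \le \frac{C}{\nfrak^2}\sum_{k\in\north}\|K(k)\|\HS\,\|(\Ncal+2)^{3/2}T_\tau\psi\|,
\end{equation*}
and similarly for $\mathrm{Err}'_\gamma$. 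Finally, setting $X(\lambda) := \big[\sum_{\gamma\in\Il}(\|\Efrak_\gamma(\lambda,l)\psi\|^2+\|\mathfrak{D}_\gamma(\lambda,l)\psi\|^2)\big]^{1/2}$ and using $\|K(l)\|_{\mathrm{op}}\le\|K(l)\|\HS$, the coupled ODE gives $X'(\lambda) \le \|K(l)\|\HS X(\lambda) + (\text{forcing})$, and Gr\"onwall's inequality with $X(0)=0$ (together with monotonicity of $\tau\mapsto\sup_{\tau'\in[0,\tau]}\|(\Ncal+2)^{3/2} T_{\tau'}\psi\|$) yields \eqref{eq:bogerrorbound}. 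The main obstacle is the bookkeeping of $\Ncal$-powers: each $\Ecal$ costs one factor of $\Ncal$ and each $c^*$ in $\mathrm{Err}_\gamma$ adds $(\Ncal+1)^{1/2}$ via Lemma \ref{lem:bosopbound}, explaining the exponent $3/2$. The ODE linearisation is essential here---a naive Duhamel expansion of $T_\lambda^* c_\gamma(l) T_\lambda$ would accumulate new $\Ncal$-powers at every order and destroy the uniform bound, while the commutativity $[\Ncal,\Ecal_\gamma]=0$ and the clean shift $\Ncal c^*_\beta = c^*_\beta(\Ncal+2)$ let the bookkeeping close at a single fixed power.
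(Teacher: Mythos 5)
Your proof is correct and reaches the stated bound, but it takes a genuinely different route from the paper. The paper iterates Duhamel's formula $n_0$ times, reconstructing $\cosh(\lambda K(l))$ and $\sinh(\lambda K(l))$ term by term as power series with simplex-integral remainders, and then bounds the four pieces of the remainder (the accumulated error contributions, the iteration ``head,'' and the tail of the exponential series) separately. Your approach instead derives the exact coupled linear ODE for $(F_\gamma, G_\gamma) = (T_\lambda^* c_\gamma(l) T_\lambda,\, T_\lambda^* c^*_\gamma(l) T_\lambda)$, subtracts the exact Bogoliubov solution (using that $K(l)$ is symmetric and commutes with $\cosh(\lambda K(l)),\sinh(\lambda K(l))$, so that the hyperbolic ansatz solves the drift part exactly), and applies Gr\"onwall to the resulting forced linear system with zero initial data. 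The two routes are morally dual: both rely on exactly the same ingredients (the commutator computation yielding $\sum_\alpha K(l)_{\gamma,\alpha}c^*_\alpha(l) + \mathfrak{e}_\gamma(l)$, Lemma \ref{lem:ccr} to bound $\Ecal_\gamma$ by $2(n_{\gamma,k}n_{\gamma,l})^{-1}\Ncal$, the $\ell^2$-estimate of Lemma \ref{lem:bosopbound}, the shift $\Ncal c^*_\beta = c^*_\beta(\Ncal+2)$, and $[\Ncal,\Ecal_\gamma]=0$) and the $\Ncal$-power bookkeeping closes for the same reason: the error operators appear exactly once in each contribution. Your route buys a cleaner derivation free of the simplex-integral machinery, at the cost of a somewhat larger constant in the exponential (a Gr\"onwall gives roughly $e^{\sqrt{2}\lambda\|K(l)\|\HS}$ rather than $e^{\lambda\|K(l)\|\HS}$; harmless since $\|K(l)\|\HS$ is uniformly bounded by Lemma \ref{lem:inverses} and the excess factor is absorbed into $C$). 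One peripheral remark in your write-up is not quite right: the claim that a ``naive Duhamel expansion\ldots would accumulate new $\Ncal$-powers at every order'' mischaracterizes the alternative, since the paper's own iterated Duhamel expansion also closes the $\Ncal$-bookkeeping at power $3/2$, for the same structural reason as in your ODE: the errors appear only at the termination of each iteration chain, not inside the iterated principal part. This does not affect the correctness of your argument.
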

\begin{proof}
We start from the Duhamel formula
\[
T^*_\lambda c_\gamma(l) T_\lambda = c_\gamma(l) + \int_0^\lambda \di\tau \, T^*_\tau [c_\gamma(l),B]T_\tau\,.
\]
From Lemma \ref{lem:ccr}, the commutator is given by 
\[
[c_\gamma(l),B]  = \sum_{k\in\north}\frac{1}{2} \sum_{\alpha,\beta\in\Ik} K(k)_{\alpha,\beta} [c_\gamma(l),c^*_\alpha(k) c^*_\beta(k)]  = \sum_{\alpha\in\Il} K(l)_{\gamma,\alpha} c^*_\alpha(l) + \mathfrak{e}_\gamma(l)
\]
where the error term is
\begin{equation}
\label{eq:errc}
\mathfrak{e}_\gamma(l) := \sum_{k \in \north}\frac{\chi_{\Ik}(\gamma)}{2} \sum_{\alpha\in\Ik} K(k)_{\gamma,\alpha} \left( \Ecal_\gamma(k,l) c^*_\alpha(k) + c^*_\alpha(k) \Ecal_\gamma(k,l)\right)\,,
\end{equation}
with $\chi_{\Ik}$ the indicator function of the set $\Ik = \Ikp \cup \Ikm$ and $\Ecal_\gamma (k,l)$ bounded as in \eqref{eq:ccrerror}. Thus
\[
\begin{split}
T^*_\lambda c_\gamma(l) T_\lambda & = c_\gamma(l) + \sum_{\alpha\in\Il} K(l)_{\gamma,\alpha} \int_0^\lambda \di\tau T^*_\tau c^*_\alpha(l) T_\tau + \int_0^\lambda \di\tau T^*_\tau \mathfrak{e}_\gamma(l) T_\tau\,, \\
T^*_\lambda c^*_\gamma(l) T_\lambda & = c^*_\gamma(l) + \sum_{\alpha\in\Il} K(l)_{\gamma,\alpha} \int_0^\lambda \di\tau T^*_\tau c_\alpha(l) T_\tau + \int_0^\lambda \di\tau T^*_\tau \mathfrak{e}^*_\gamma(l) T_\tau\,.
\end{split}
\]
We iterate $n_0$ times by plugging the second equation into the second summand on the r.\,h.\,s.\ of the first equation and so forth. The simplex integrals produce factors $1/n!$, so we obtain\footnote{The range of summation used in the matrix multiplication is clear from the momentum dependence of $K$, e.\,g., $\left(K(l)^2\right)_{\gamma,\alpha} = \sum_{\beta\in\Il} K(l)_{\gamma,\beta} K(l)_{\beta,\alpha}$, for $\gamma, \alpha \in \Il$.} 
\begin{align*}
T^*_\lambda c_\gamma(l) T_\lambda & = c_\gamma(l) \\
		  & \quad + \sum_{\alpha\in\Il} \lambda K(l)_{\gamma,\alpha} c^*_\alpha(l) + \int_0^\lambda \di \tau_1 T^*_{\tau_1} \mathfrak{e}_\gamma(l) T_{\tau_1} \\
		  & \quad + \frac{1}{2!}\sum_{\alpha\in\Il} \left(\lambda^2 K(l)^2 \right)_{\gamma,\alpha} c_\alpha(l) + \sum_{\alpha \in \Il} K(l)_{\gamma,\alpha} \int_0^\lambda \di \tau_1\int_0^{\tau_1} \di \tau_2 T^*_{\tau_2} \mathfrak{e}^*_\alpha(l) T_{\tau_2} \\
		  & \quad + \frac{1}{3!}\!\sum_{\alpha \in \Il} \left(\lambda^3 K(l)^3\right)_{\gamma,\alpha} c^*_\alpha(l) + \sum_{\alpha\in\Il} \left( K(l)^2\right)_{\gamma,\alpha} \int_0^\lambda\di\tau_1 
		 \int_0^{\tau_1} \di \tau_2 \int_0^{\tau_2} \di\tau_3 T^*_{\tau_3} \mathfrak{e}_{\alpha}(l) T_{\tau_3}\\
		  & \quad+ \ldots \\
 		  & \quad + \sum_{\alpha\in\Il} \left( K(l)^{n_0} \right)_{\gamma,\alpha} \int_0^\lambda \di \tau_1 \int_0^{\tau_1} \di \tau_2 \ldots \int_0^{\tau_{n_0-1}} \di \tau_{n_0} T^*_{\tau_{n_0}} c^\natural_\alpha(l) T_{\tau_{n_0}} \,.
\end{align*}
Here we introduced the notation $c^\natural_\alpha(l)$, which in this formula means $c^*_\alpha(l)$ for $n_0$ odd, and $c_\alpha(l)$ for $n_0$ even.
The left term on every line is the leading term, the right term on every line is an error term which will be controlled later. The very last line is the `head' of the iteration after $n_0$ steps; we are going to control the expansion as $n_0 \to \infty$, showing that the head vanishes.

Notice that leading terms are of the form of an exponential series $\lambda^n K(l)^n/n!$ but intermittently with $c$ and $c^*$. Separating creation and annihilation operators, we reconstruct $\cosh(\lambda K(l))$ and $\sinh(\lambda K(l))$. We find 
\begin{align*}
T^*_\lambda c_\gamma(l) T_\lambda &= \sum_{\alpha\in\Il}\cosh(\lambda K(l))_{\gamma,\alpha} c_\alpha(l) + \sum_{\alpha\in\Il} \sinh(\lambda K(l))_{\gamma,\alpha} c^*_\alpha(l) + \mathfrak{E}_\gamma(\lambda,l)
\end{align*}
where, for an arbitrary $n_0 \in \mathbb{N}$, 
\begin{align*}
\mathfrak{E}_\gamma(\lambda,l) & := \sum_{\alpha\in\Il} \sum_{n=0}^{n_0-1} \left(K(l)^n\right)_{\gamma,\alpha} \int_0^\lambda \di \tau_1\cdots \int_0^{\tau_{n}} \di \tau_{n+1} T^*_{\tau_{n+1}} \mathfrak{e}^{\natural}_{\alpha}(l) T_{\tau_{n+1}} \\
& \quad + \sum_{\alpha\in\Il} \left( K(l)^{n_0} \right)_{\gamma,\alpha} \int_0^\lambda \di \tau_1 \int_0^{\tau_1} \di \tau_2 \ldots \int_0^{\tau_{n_0-1}} \di \tau_{n_0} T^*_{\tau_{n_0}} c^\natural_\alpha(l) T_{\tau_{n_0}} \\ 
&\quad - \sum_{\alpha \in \Il} \sum_{n = n_0}^\infty \frac{\lambda^n (K(l)^n)_{\gamma,\alpha}}{n!} c^\natural_{\alpha} (l)  
\,.
\end{align*}
(In every summand, $\mathfrak{e}_\alpha(l)$ and $c_\alpha (l)$ appear for even $n$ or $n_0$, $\mathfrak{e}_\alpha^*(l)$ and $c_\alpha^* (l)$ for odd $n$ or $n_0$.) Notice that for any function $f: \Rbb \to \Rbb$ the simplex integration simplifies to
\[
\int_0^\lambda \di \tau_1 \int_0^{\tau_1} \di \tau_2 \cdots \int_0^{\tau_{n}} \di \tau_{n+1} f(\tau_{n+1}) = \int_0^\lambda \frac{(\lambda-\tau)^n}{n!} f(\tau) \di \tau\,.
\] 
Therefore, for all $\psi \in \fock$ we have
\begin{align*}
& \norm{\Efrak_\gamma(\lambda,l)\psi} \\
& \leq \Big\| \sum_{\alpha \in \Il} \sum_{n=0}^{n_0-1} \left( K(l)^n \right)_{\gamma,\alpha} \int_0^\lambda \di \tau \frac{(\lambda-\tau)^n}{n!} T^*_\tau\efrak_\alpha^\natural(l) T_\tau \psi \Big\|\\
& \quad + \Big\| \sum_{\alpha\in\Il} \left( K(l)^{n_0} \right)_{\gamma,\alpha} \int_0^\lambda \di \tau \frac{(\lambda-\tau)^{n_0-1}}{(n_0-1)!} T^*_{\tau} c^\natural_\alpha(l) T_{\tau}\Big\| + \Big\| \sum_{\alpha \in \Il} \sum_{n=n_0}^{\infty} \frac{\lambda^n (K(k)^n)_{\gamma,\alpha}}{n!} c^\natural_\alpha (l) \psi\Big\|\;; \\
\intertext{using the explicit expression \eqref{eq:errc} for $\efrak_\alpha^\natural(l)$ we have
}
& \leq \Big\|  \sum_{\alpha\in\Il} \sum_{n=0}^{n_0-1} \left( K(l)^n \right)_{\gamma,\alpha} \int_0^\lambda \di \tau \frac{(\lambda-\tau)^n}{n!} \sum_{k\in \north} \frac{\chi_{\Ik}(\alpha)}{2} \sum_{\delta \in \Ik} K(k)_{\alpha,\delta}  T^*_\tau (\Ecal_\alpha(k,l) c^*_\delta(k) )^{\natural} T_\tau \psi  \Big\| \\
& \quad + \Big\|  \sum_{\alpha\in\Il} \sum_{n=0}^{n_0-1} \left( K(l)^n \right)_{\gamma,\alpha} \int_0^\lambda \di \tau \frac{(\lambda-\tau)^n}{n!} \sum_{k\in \north} \frac{\chi_{\Ik}(\alpha)}{2} \sum_{\delta \in \Ik} K(k)_{\alpha,\delta} T^*_\tau c^*_\delta(k) \Ecal_\alpha(k,l)^{\natural} T_\tau \psi  \Big\| \\
& \quad + \Big\| \sum_{\alpha\in\Il} \left( K(l)^{n_0} \right)_{\gamma,\alpha} \int_0^\lambda \di \tau \frac{(\lambda-\tau)^{n_0-1}}{(n_0-1)!} T^*_{\tau} c^\natural_\alpha(l) T_{\tau}\Big\| \\ &\quad + \Big\| \sum_{\alpha \in \Il} \sum_{n=n_0}^{\infty} \frac{\lambda^n (K(k)^n)_{\gamma,\alpha}}{n!} c^\natural_\alpha (l) \psi \Big\|\\ 
& =: \text{A}_\gamma  + \text{B}_\gamma + \text{C}_\gamma + \text{D}_\gamma \,.
\end{align*}
Let us start by estimating $\text{B}_\gamma$. We shall neglect the symbol $\natural$; the bounds are the same whether for the operator or its adjoint. Using also \eqref{eq:cstarbound}, we get
\begin{align*}
\text{B}_\gamma & \leq \sum_{\alpha\in\Il}\sum_{n=0}^\infty \lvert \left( K(l)^n\right)_{\gamma,\alpha} \rvert \int_0^\lambda \di \tau \frac{(\lambda-\tau)^n}{n!} \sum_{k\in \north} \frac{\chi_{\Ik}(\alpha)}{2} \Big\| \sum_{\delta \in \Ik} K(k)_{\alpha,\delta} c^*_\delta(k) \Ecal_\alpha(k,l)T_\tau\psi \Big\| \\
& \leq \sum_{\alpha\in\Il}\sum_{n=0}^\infty \lvert \left( K(l)^n\right)_{\gamma,\alpha} \rvert \int_0^\lambda \di \tau \frac{(\lambda-\tau)^n}{n!} \sum_{k\in \north} \frac{\chi_{\Ik}(\alpha)}{2} \Big[ \sum_{\delta \in \Ik} \lvert K(k)_{\alpha,\delta} \rvert^2 \Big]^{1/2}\\
& \hspace{10cm}\times\norm{ (\Ncal+1)^{1/2} \Ecal_\alpha(k,l)T_\tau\psi}\,;
\intertext{pulling $\Ecal_\alpha(k,l)$ through $(\Ncal+1)^{1/2}$ to the front and then using \eqref{eq:ccrerror}, we get}
\text{B}_\gamma & \leq \sum_{\alpha\in\Il}\sum_{n=0}^\infty \lvert \left( K(l)^n\right)_{\gamma,\alpha} \rvert \int_0^\lambda\! \di \tau \frac{(\lambda-\tau)^n}{n!}\!\! \sum_{k\in \north} \chi_{\Ik}(\alpha) \Big[ \sum_{\delta \in \Ik} \lvert K(k)_{\alpha,\delta} \rvert^2 \Big]^{1/2} \frac{\norm{ \Ncal(\Ncal+1)^{1/2} T_\tau\psi}}{n_{\alpha,k} n_{\alpha,l}}\\
& \leq \frac{\sup_{\tau \in [0,\lambda]} \norm{ (\Ncal+1)^{3/2} T_\tau\psi}}{\nfrak^2} \sum_{n=0}^\infty \frac{\lambda^{n+1}}{(n+1)!} \sum_{k\in \north} \sum_{\alpha \in \Il \cap \Ik}\lvert \left( K(l)^n\right)_{\gamma,\alpha} \rvert   \Big[ \sum_{\delta \in \Ik} \lvert K(k)_{\alpha,\delta} \rvert^2 \Big]^{1/2} \\
& \leq \frac{\sup_{\tau \in [0,\lambda]} \norm{ (\Ncal+1)^{3/2} T_\tau\psi}}{\nfrak^2}
\\ &\hspace{.3cm} \times \Bigg[ \lambda \sum_{k \in \north} \Big[ \sum_{\delta \in \Ik} \lvert K(k)_{\gamma,\delta} \rvert^2 \Big]^{1/2}
+  \sum_{n= 1}^\infty \frac{\lambda^{n+1}}{(n+1)!} \Big[\! \sum_{\alpha \in \Il \cap \Ik}\!\! \lvert \left( K(l)^n\right)_{\gamma,\alpha} \rvert^2 \Big]^{1/2} \sum_{k\in \north} \| K (k) \|\HS \Bigg] 
\end{align*}
where we used $n_{\alpha,k} \geq \mathfrak{n}$ as established in \eqref{eq:nNM}, we separated the term with $n=0$ and, for $n \geq 1$, we applied Cauchy-Schwarz to the sum over $\alpha$. Again by Cauchy-Schwarz, we obtain 
\[
\Big[ \sum_{\gamma \in \Il} \text{B}_\gamma^2 \Big]^{1/2} \leq 
C \, \frac{\sup_{\tau \in [0,\lambda]} \norm{ (\Ncal+1)^{3/2} T_\tau\psi}}{\nfrak^2} \, e^{\lambda \| K(k) \|\HS} \sum_{k\in \north} \| K (k) \|\HS \,. 
\]
The error term $\text{A}_\gamma$ can be treated similarly (applying first \eqref{eq:ccrerror} and then \eqref{eq:cstarbound}). We find 
\[
\Big[ \sum_{\gamma \in \Il} \text{A}_\gamma^2 \Big]^{1/2} \leq C \, \frac{\sup_{\tau \in [0,\lambda]} \norm{ (\Ncal+1)^{3/2} T_\tau\psi}}{\nfrak^2} \, e^{\lambda \| K(k) \|\HS} \sum_{k\in \north} 
\| K (k) \|\HS \,.
\]
As for the term $\text{C}_\gamma$, it is controlled with \eqref{eq:cstarbound} by
\begin{align*}
\text{C}_\gamma & \leq \Big\| \sum_{\alpha\in\Il} \left( K(l)^{n_0} \right)_{\gamma,\alpha} \int_0^\lambda \di \tau \frac{(\lambda-\tau)^{n_0-1}}{(n_0-1)!} T^*_{\tau} c^\natural_\alpha(l) T_{\tau}\Big \| \\
& \leq \int_0^\lambda \di \tau \frac{(\lambda-\tau)^{n_0-1}}{(n_0-1)!} \Big\| \sum_{\alpha\in\Il} \left( K(l)^{n_0} \right)_{\gamma,\alpha} c^\natural_\alpha(l) T_\tau \psi\Big\| \\
& \leq \frac{\lambda^{n_0}}{n_0!} \Big( \sum_{\alpha\in\Il} \Big| \Big( K(l)^{n_0} \Big)_{\gamma,\alpha} \Big|^2 \Big)^{1/2} \sup_{\tau \in [0,1]} \norm{ (\Ncal+1)^{1/2} T_\tau \psi}\,.
\end{align*}
This implies that 
\[
\Big[ \sum_{\gamma \in \Il} \text{C}_\gamma^2 \Big]^{1/2} \leq C \, \frac{\lambda^{n_0} \| K(l) \|\HS^{n_0}}{n_0!} \, \sup_{\tau \in [0,\lambda]} \norm{ (\Ncal+1)^{1/2} T_\tau\psi}\,.
\]
Finally, the term $\text{D}_\gamma$ can be bounded by 
\[
\begin{split} \text{D}_\gamma &\leq \sum_{n \geq n_0} \frac{\lambda^n}{n!} \Big\| \sum_{\alpha \in \Il} (K(k)^n)_{\gamma,\alpha} c_\alpha^\natural (l) \psi\Big\| \leq \sum_{n \geq n_0} \frac{\lambda^n}{n!} \Big[ \sum_{\alpha \in \Il} | (K(k)^n)_{\gamma,\alpha}|^2 \Big]^{1/2} \| (\Ncal +1)^{1/2} \psi \| 
\end{split} 
\]
which leads us to 
\[
\Big[ \sum_{\gamma \in \Il} \text{D}_\gamma^2 \Big]^{1/2} \leq C \sum_{n \geq n_0} \frac{\lambda^n \| K(k) \|\HS^n}{n!} \, \| (\Ncal + 1)^{1/2} \psi \| \,.
\]
 Since all these bounds hold for any $n_0 \in \mathbb{N}$, we obtain
 \[
 \left[ \sum_{\gamma \in \Il} \norm{\Efrak_\gamma(\lambda,l)\psi}^2 \right]^{1/2} \leq C \, \frac{\sup_{\tau \in [0,\lambda]} \norm{ (\Ncal+1)^{3/2} T_\tau\psi}}{\nfrak^2} \, e^{\lambda \| K(k) \|\HS} \sum_{k\in \north} \| K (k) \|\HS \,. \qedhere 
 \]
\end{proof}
The next lemma provides the required bounds for the matrix $K(k)$, defined in \eqref{eq:Kk}. For later estimates it is important that this bound implies $K(k) = 0$ outside the support of $\hat{V}$. (Actually the constant $C$ here may be chosen independent of $V$.)
\begin{lem}[Bound on the Bogoliubov Kernel]\label{lem:inverses}
Let $k \in \north$. Then the matrices $E(k)$, $\D(k)+\W(k)-\Wt(k)$, and $\D(k)+\W(k)+\Wt(k)$, all defined in \eqref{eq:blocks2}, are strictly positive. Let $K(k)$ be defined by \eqref{eq:Kk}.
 Then we have
\begin{equation}\label{eq:kernelbound} \norm{ K (k)}\HS \leq \norm{ K (k)}\TR \leq C \hat{V} (k)\,, \end{equation} 
where $\norm{ K (k)}\TR$ denotes the trace norm of the matrix $K(k)$.
\end{lem}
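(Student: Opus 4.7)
The lemma has two parts: strict positivity of the three matrices, and the trace-norm bound on $K(k)$. For positivity, $D(k)$ is diagonal with entries $u_\alpha^2(k) = \lvert \hat k \cdot \hat\omega_\alpha\rvert \geq N^{-\delta} > 0$ by definition of $\Ik = \Ikp \cup \Ikm$. A direct computation from \eqref{eq:blocks2} yields the rank-one identities $W(k) \pm \tilde W(k) = g(k)\, \lvert e_\pm\rangle\langle e_\pm\rvert$, with $g(k) = \kappa \hat V(k) \geq 0$ and vectors $e_\pm \in \Cbb^{\lvert\Ik\rvert}$ given by $(e_\pm)_\alpha = v_\alpha(k)$ for $\alpha \in \Ikp$ and $(e_\pm)_\alpha = \pm v_\alpha(-k)$ for $\alpha \in \Ikm$. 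Both $W \pm \tilde W$ are thus positive semi-definite, so $D + W \pm \tilde W \geq D > 0$, and $E(k)$ is well defined and strictly positive.

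For the bound, set $A = D + W - \tilde W$ and $B = D + W + \tilde W$, so $B - A = 2\tilde W$. When $\hat V(k) = 0$ one has $A = B = D = E$, hence $S_1 = I$ and $K(k) = 0$; in particular the support of $K$ lies in the support of $\hat V$. For $\hat V(k) > 0$, since $E$ is symmetric, $\lvert S_1^T\rvert^2 = A^{1/2} E^{-1} A^{1/2}$, so $K(k) = \tfrac{1}{2}\log(A^{1/2}E^{-1}A^{1/2})$. The strategy is to use the integral representation $\log X = \int_0^\infty\bigl[(1+t)^{-1} - (X+t)^{-1}\bigr] dt$ for $X > 0$ together with a resolvent manipulation in order to factor out $A - E$ from the integrand. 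The key algebraic identity is
\[
 E^2 - A^2 = A^{1/2}(B - A) A^{1/2} = 2 A^{1/2} \tilde W A^{1/2}\,,
\]
which has rank at most two (inherited from $\tilde W$). Together with the Lyapunov relation $(E+A)(E-A) + (E-A)(E+A) = 2(E^2 - A^2)$, this expresses $E - A$ as an integral of an essentially rank-two kernel, whose operator norm is controlled by $\norm{A}\OP \norm{\tilde W}\OP = \Ocal(\hat V(k))$ uniformly in $N$: indeed $\norm{A}\OP = \Ocal(1)$, while $\norm{\tilde W}\OP \leq g(k)\norm{e_+}\norm{e_-}$ and $\norm{e_\pm}^2 = \Ocal(1)$ by Proposition~\ref{prp:counting}.

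The main obstacle is that naive bounds introduce a spurious factor $\norm{A^{-1/2}}\OP \leq N^{\delta/2}$ from the small eigenvalues of $D$. The resolution exploits two facts: the rank-two perturbation acts only on $\mathrm{span}(e_+, e_-)$, and the weighted norms $\norm{D^{-1/2}e_\pm}^2 = \sum_\alpha v_\alpha(\pm k)^2/u_\alpha^2(k) \sim \sum_\alpha \sigma(p_\alpha) = \Ocal(1)$ are uniformly bounded by Proposition~\ref{prp:counting}. Distributing the $A^{\pm 1/2}$-factors in the integrand so that they act only on vectors of bounded $D^{-1/2}$-weighted norm then converts the $t$-integral into a finite constant and yields the $N$-uniform estimate $\norm{K(k)}\TR \leq C \hat V(k)$; as remarked in the statement, the constant $C$ can even be chosen independently of $V$.
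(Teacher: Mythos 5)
Your positivity argument is correct and essentially the same as the paper's: you record the rank-one identities $\W(k) \pm \Wt(k) = g(k)\,e_{\pm}e_{\pm}^T$ with $(e_\pm)_\alpha = v_\alpha(k)$ on $\Ikp$ and $\pm v_\alpha(-k)$ on $\Ikm$, whereas the paper conjugates by the orthogonal matrix $U = \tfrac{1}{\sqrt 2}\bigl(\begin{smallmatrix}\id & \id\\ \id & -\id\end{smallmatrix}\bigr)$ to exhibit the same structure as the block form $\mathrm{diag}(d,\,d+2b)$ and $\mathrm{diag}(d+2b,\,d)$ with $b = g\,\lvert v\rangle\langle v\rvert$ rank one. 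These are two dressings of the same observation, and both yield strict positivity immediately.

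For the trace-norm bound, however, there is a genuine gap. The difficulty is not only the $N^{\delta/2}$ factors you flag; it is the \emph{sign structure} of $K(k)$, which your strategy has no handle on. In fact $\tr K(k) = \tfrac14\log\bigl(\det(\D+\W-\Wt)/\det(\D+\W+\Wt)\bigr) = 0$, because the two determinants coincide (this is clear from the block forms above). So any estimate that proceeds through a trace- or determinant-level cancellation of the resolvent kernel will see exactly the cancellation you do not want: the positive and negative eigenvalues of $K$ cancel in $\tr K$ but must be \emph{added} for $\norm{K}\TR$. The paper's argument resolves this by showing, after blockdiagonalization, $\lvert S_1^T\rvert^2 \cong A_1 \oplus A_2$ with $A_1 \leq \id$ and $A_2 \geq \id$; then $\norm{K}\TR = \tfrac12(-\log\det A_1 + \log\det A_2) = \log\det(\id + 2d^{-1/2}b\,d^{-1/2})$, where the $d^{-1}$-weighting appears automatically from the ratio $\det(d+2b)/\det d$, and the bound follows directly from Proposition~\ref{prp:counting}. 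Your sketch replaces this by ``distributing the $A^{\pm1/2}$ factors in the integrand so that they act only on vectors of bounded $D^{-1/2}$-weighted norm'', but this is precisely where the substance of the argument must live and it is not spelled out. If you try it naively, the Lyapunov representation gives $\norm{E-A}\TR \lesssim \norm{E^2-A^2}\TR/\lambda_{\min}(E+A) = \Ocal(\hat V(k) N^{\delta})$, and inserting this into the resolvent integral for $\log(A^{1/2}E^{-1}A^{1/2})$ still leaves uncompensated powers of $N^{\delta/2}$ from $A^{\pm 1/2}$ and $E^{-1}$; I do not see a way to reroute the estimate that does not, in effect, rediscover the block structure. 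The paper's explicit blockdiagonalization is not just cleaner -- it is what supplies both the sign separation and the correct $d^{-1}$ weighting at once.
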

\begin{proof}
Recall that $I_k = |\Ikp| = |\Ikm|$ and that the matrix $K(k)$ is symmetric and has size $2I_k \times 2I_k$. All quantities in this proof depend on the same $k$, so we simplify notation by dropping this dependence where there is no risk of confusion, writing e.\,g., $\ik$ for $I_k$.

\medskip

To exhibit the block structure of the matrices, we
 map the indices $\Ikp$ to $\{1,\ldots, \ik\}$, and the indices $\Ikm$ to $\{\ik+1,\ldots 2\ik\}$. There are many such mappings, but due to the reflection symmetry of the patches ($B_{\alpha+M/2} = - B_\alpha$ and $\omega_{\alpha+M/2} = - \omega_\alpha$ in the original numbering), we can choose one such that $v_\alpha(k) = v_{\alpha+\ik}(-k)$. This implies that 
\begin{equation}
\label{eq:Wdef}\W = \begin{pmatrix} b & 0 \\ 0 & b \end{pmatrix}, \quad \Wt = \begin{pmatrix} 0& b\\ b & 0\end{pmatrix},
\end{equation}
where $b_{\alpha,\beta} = g v_\alpha(k) v_\beta(k)$ defines an $\ik\times \ik$-matrix. We drop the $k$-dependence from the notation and just write $v_\alpha = v_\alpha(k)$. In Dirac notation, where $\lvert v\rangle\langle v\rvert$ is the orthogonal projection onto $v = (v_1, \cdots , v_{\ik})$, we have $b= g \lvert v\rangle \langle v\rvert \in \Cbb^{\ik\times \ik}$.

Also $D_{\alpha,\alpha} = \lvert \hat{k}\cdot\hat{\omega}_\alpha\rvert$ is invariant under reflection at the origin and so $\D$ simplifies to
\[
\D = \begin{pmatrix} d & 0 \\ 0 & d \end{pmatrix}, \quad d = \diag(u_\alpha^2, \alpha = 1,\ldots,\ik)\,.
\]
Recalling the definition of the index set $\Ikp$ we notice that $u_{\alpha}^2 \geq N^{-\delta}$ for all $\alpha \in \{1,\ldots,\ik\}$, and thus $d$ is invertible. Since $b \geq 0$ (because $g \geq 0$), we find $d + 2b \geq d > 0$; hence also  $d+2b$ is invertible.

\medskip

To simplify the computation further, let
\[U = \frac{1}{\sqrt{2}}\begin{pmatrix} \id & \id \\ \id & -\id \end{pmatrix}\,,
\]
where $\id$ is the $\ik\times \ik$-identity matrix. Obviously $U^T = U = U^{-1}$, and it simultaneously blockdiagonalizes
\begin{equation}
\label{eq:blockdiagonalization}
U^T (\D+\W+\Wt) U = \begin{pmatrix} d+2b & 0 \\ 0 & d \end{pmatrix}, \quad U^T (\D+\W-\Wt) U = \begin{pmatrix} d & 0\\ 0 & d+2b \end{pmatrix}.\end{equation}
This shows that $\D+\W+\Wt$ and $\D+\W-\Wt$ are strictly positive, thus invertible, and have a positive square root. We also find 
\[
U^T E U = \begin{pmatrix} \left[ d^{1/2}(d+2b) d^{1/2} \right]^{1/2} & 0 \\ 0 & \left[ (d+2b)^{1/2} d (d+2b)^{1/2}\right]^{1/2} \end{pmatrix}.
\]
Both blocks are strictly positive; $E$ is therefore invertible and has a strictly positive operator square root.

\medskip

Now consider
\[
\lvert S_1\rvert^2 = S_1^T S_1 = E^{-1/2} (\D+\W-\Wt) E^{-1/2}.
\]
We find
\begin{equation}
\label{eq:S12} U^T \lvert S_1 \rvert^2 U = \begin{pmatrix} A_1 & 0 \\ 0 & A_2 \end{pmatrix}\end{equation}
with
\begin{equation}\label{eq:A1A2}
\begin{split} A_1 & := \left[ d^{1/2} (d+2b) d^{1/2} \right]^{-1/4} d \left[ d^{1/2} (d+2b) d^{1/2} \right]^{-1/4}, \\
A_2 & := \left[ (d+2b)^{1/2} d (d+2b)^{1/2} \right]^{-1/4} (d+2b) \left[ (d+2b)^{1/2} d (d+2b)^{1/2} \right]^{-1/4}.
\end{split} 
\end{equation}
Since $b$ is a positive operator, using operator monotonicity of the inverse and the square root, we find
\[
d^{1/2} \left[d^{1/2}(d+2b)d^{1/2}\right]^{-1/2} d^{1/2} \leq \id\,.
\]
Using the equality of the spectra $\sigma(AB) = \sigma(BA)$ for positive operators $A$ and $B$,
we conclude that 
\begin{align*}
\sigma(A_1)& =\sigma \left( d \left[d^{1/2}(d+2b)d^{1/2}\right]^{-1/2}\right) = \sigma \left( d^{1/2} \left[d^{1/2}(d+2b)d^{1/2}\right]^{-1/2} d^{1/2} \right)
\end{align*}
and therefore that \begin{equation}\label{eq:A1-bd}
A_1 \leq \id \, . \end{equation} 
Arguing similarly, we find that
\begin{equation}\label{eq:A2-bd} \id \leq  A_2
\, . \end{equation} 
We introduce the polar decomposition $S_1 = O \lvert S_1 \rvert$; a priori $O$ is a partial isometry, but since $S_1$ is invertible, $O$ is actually an orthogonal matrix. Then $\lvert S_1^T\rvert^2 = S_1 S_1^T = O \lvert S_1\rvert \lvert S_1\rvert^T O^T = O\lvert S_1\rvert^2 O^T$ because $\lvert S_1\rvert^T = \lvert S_1\rvert$. This implies
\[
\norm{ K }\TR = \norm{ \log \lvert S_1^T\rvert }\TR = \frac{1}{2} \norm{ \log \lvert S_1^T\rvert^2 }\TR = \frac{1}{2} \norm{ \log O \lvert S_1\rvert^2 O^T}\TR =  \norm{\log \lvert S_1\rvert^2}\TR\,.
\]
Using furthermore the blockdiagonalization \eqref{eq:S12}, we find 
\[
\norm{ K }\TR = \norm{U^T\log  \lvert S_1\rvert^2 U}\TR = \frac{1}{2} \norm{ \log  \begin{pmatrix} A_1 & 0 \\ 0 & A_2 \end{pmatrix} }\TR = \frac{1}{2} \norm{ \log A_1 }\TR + \frac{1}{2} \norm{ \log A_2 }\TR \,.
\] 
Eq.~\eqref{eq:A1-bd} and Eq.~\eqref{eq:A2-bd} imply that $\log A_1 \leq 0$ and $\log A_2 \geq 0$. Hence
\[
\norm{ K }\TR = \frac{1}{2} \left( - \tr \log A_1 + \tr \log A_2 \right) = \frac{1}{2} \left( - \log \det A_1 + \log \det A_2 \right) \,.\]
From the definition \eqref{eq:A1A2}, we arrive at
\[
\begin{split}  \norm{ K }\TR &=  \log \det (d+2b) - \log \det d  = \log \det \left(\id + 2 d^{-1/2} b d^{-1/2}\right) \\ &\leq 2 \tr d^{-1/2} b d^{-1/2} = 2 g \langle v , d^{-1} v \rangle = 2 g \sum_{\alpha = 1}^{\ik} \frac{v_\alpha^2}{u_\alpha^2} \leq C g = C\kappa \hat{V}(k) \end{split}
\]
where we used Proposition~\ref{prp:counting}, which implies $v_\alpha^2 \leq C M^{-1} u_\alpha^2$. (Recall also $\ik \leq M/2$.)
\end{proof}
We are now ready to estimate the expectation of $\mathcal{N}^{n}$ in the state $\xi$, defined in \eqref{eq:bos-gs}. We follow a strategy similar to the one developed in the dynamical setting in \cite{BPS14b} for the control of the growth of many-body fluctuations around Hartree--Fock dynamics.
\begin{prp}[Bound on the Number of Fermions]\label{prp:particlenumber}
For all $n \in \Nbb$ and for all $\psi \in \fock$ we have (for a constant $C$ that does not depend on $n$)
\[\sup_{\lambda \in [0,1]} \langle T_\lambda \psi, (\Ncal+1)^n  T_\lambda \psi \rangle \leq e^{C n} \langle \psi, (\Ncal+5)^n \psi \rangle\,.\]
\end{prp}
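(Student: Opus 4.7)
The plan is to establish a Gr\"onwall-type estimate for $f_n(\lambda) := \langle T_\lambda \psi, (\Ncal+1)^n T_\lambda\psi\rangle$ and then integrate over $\lambda \in [0,1]$. Differentiating, and using that $B$ is anti-selfadjoint, gives $f_n'(\lambda) = \langle T_\lambda \psi, [(\Ncal+1)^n, B] T_\lambda\psi\rangle$. I would split $B = B_+ - B_+^*$ with $B_+ := \tfrac{1}{2}\sum_{k\in\north}\sum_{\alpha,\beta\in\Ik} K(k)_{\alpha,\beta} c^*_\alpha(k) c^*_\beta(k)$. Since each $c^*_\alpha(k) c^*_\beta(k)$ increases $\Ncal$ by $4$, we obtain the key identity $(\Ncal+1)^n c^*_\alpha(k) c^*_\beta(k) = c^*_\alpha(k) c^*_\beta(k)(\Ncal+5)^n$, hence $[(\Ncal+1)^n, B_+] = B_+ \Delta_n(\Ncal)$ where $\Delta_n(\Ncal) := (\Ncal+5)^n - (\Ncal+1)^n$. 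Note $\Delta_n(\Ncal) = 4n\int_0^1 (\Ncal+1+4t)^{n-1}\,\di t \leq 4n(\Ncal+5)^{n-1} \geq 0$. By selfadjointness of $(\Ncal+1)^n$, this reduces the computation to
\[
f_n'(\lambda) = 2\Re \langle \xi_\lambda, B_+ \Delta_n(\Ncal)\xi_\lambda\rangle, \qquad \xi_\lambda := T_\lambda\psi.
\]

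The next step is a Cauchy--Schwarz split using $\Delta_n = \Delta_n^{1/2}\Delta_n^{1/2}$:
\[
|f_n'(\lambda)| \leq \|\Delta_n^{1/2}(\Ncal)\xi_\lambda\|^2 + \|\Delta_n^{1/2}(\Ncal) B_+^*\xi_\lambda\|^2.
\]
The first term equals $\langle \xi_\lambda, \Delta_n(\Ncal)\xi_\lambda\rangle$, which (via $\Delta_n(\Ncal) \leq (\Ncal+5)^n$) is controlled by $\langle \xi_\lambda, (\Ncal+5)^n\xi_\lambda\rangle$. For the second term, I would commute $\Delta_n^{1/2}(\Ncal)$ through $B_+^*$ using the shift identity $\Delta_n^{1/2}(\Ncal) B_+^* = B_+^* \Delta_n^{1/2}(\Ncal-4)$ (valid on the range of $B_+^*$ where $\Ncal \geq 4$), then bound $\|B_+^*\chi\|$ using the almost-bosonic norm estimates from Lemma~\ref{lem:bosopbound} applied iteratively, together with $\|K(k)\|\HS \leq C\hat V(k)$ from Lemma~\ref{lem:inverses}. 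Crucially, the compact support of $\hat V$ implies $\sum_{k\in\north}\|K(k)\|\HS < \infty$ uniformly in $N$, so $\|B_+^*\phi\| \leq C\|(\Ncal+1)\phi\|$ with a constant independent of $n$.

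Because the naive bound produces a shift $(\Ncal+1)^n \to (\Ncal+5)^n$ at each step, it is cleaner to Gr\"onwall directly the shifted quantity $g_n(\lambda) := \langle \xi_\lambda, (\Ncal+5)^n \xi_\lambda\rangle$: the same analysis gives $|g_n'(\lambda)| \leq C n\, g_n(\lambda)$ with the rate linear in $n$, so integration yields $g_n(\lambda) \leq e^{Cn\lambda} g_n(0)$. The stated bound then follows from $(\Ncal+1)^n \leq (\Ncal+5)^n$ pointwise as operators, giving $f_n(\lambda) \leq g_n(\lambda) \leq e^{Cn}\langle \psi, (\Ncal+5)^n\psi\rangle$ for $\lambda \in [0,1]$.

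The main obstacle is ensuring that the Gr\"onwall rate is \emph{linear} in $n$ (so the exponent is $Cn$ rather than $C^n$), since the pointwise comparison $(\Ncal+c+4)^n \leq ((c+4)/c)^n (\Ncal+c)^n$ threatens to introduce a spurious constant exponential in $n$ and the bound $\|B_+^*\phi\|\leq C\|(\Ncal+1)\phi\|$ naively produces a higher moment $\langle (\Ncal+1)^{n+1}\rangle$ on the right side. Both obstructions are resolved by using the integral representation $\Delta_n(\Ncal) = 4n\int_0^1 (\Ncal+1+4t)^{n-1}\,\di t$ and absorbing one factor of $(\Ncal+5)$ at the Cauchy--Schwarz step, so the two $\Delta_n^{1/2}$ factors share the polynomial weight and only a constant is transferred from the shift, leaving the Gr\"onwall rate proportional to $n$.
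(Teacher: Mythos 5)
The approach has a genuine gap in the power-counting that the last paragraph gestures at but does not resolve. Concretely, after the Cauchy--Schwarz split you must estimate $\norm{\Delta_n^{1/2}(\Ncal) B_+^*\xi_\lambda}^2 = \norm{B_+^*\Delta_n^{1/2}(\Ncal-4)\xi_\lambda}^2$, and applying your bound $\norm{B_+^*\chi}\leq C\norm{(\Ncal+1)\chi}$ gives $C^2\langle\xi_\lambda,(\Ncal+1)^2\Delta_n(\Ncal-4)\xi_\lambda\rangle$. But $(\Ncal+1)^2\Delta_n(\Ncal-4)\sim 4n(\Ncal+1)^{n+1}$ for large $\Ncal$, which exceeds $n(\Ncal+5)^n$ by a full power of $\Ncal$: the quadratic factor from $\norm{B_+^*\chi}\leq C\norm{(\Ncal+1)\chi}$ (squared in the norm) only trades against a \emph{single} degree lost from $\Delta_n\leq 4n(\Ncal+5)^{n-1}$. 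Moreover, when you switch to Gr\"onwalling $g_n$, the commutator $[(\Ncal+5)^n,B_+]=B_+\tilde\Delta_n(\Ncal)$ produces $\tilde\Delta_n(\Ncal)=(\Ncal+9)^n-(\Ncal+5)^n$, and on the low-particle sectors $\tilde\Delta_n(0)/5^n=(9/5)^n-1$ grows exponentially in $n$, so $\langle\xi_\lambda,\tilde\Delta_n\xi_\lambda\rangle$ is not controlled by $C n\, g_n(\lambda)$ either. Neither the integral representation nor the remark about ``absorbing one factor of $(\Ncal+5)$'' closes these two holes; any attempt to push $(\Ncal+5)^a$ across $B_+$ picks up $(\Ncal+9)^a$, which is fatal at small $\Ncal$, while the extra degree from the pair-operator bound is fatal at large $\Ncal$.

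The paper's proof avoids both problems by \emph{not} collapsing to $\Delta_n$: it keeps the Leibniz expansion $\sum_{j=0}^{n-1}(\Ncal+5)^j[\Ncal,B](\Ncal+5)^{n-1-j}$ and, for each $j$, inserts $\id=(\Ncal+1)^{n/2-1-j}(\Ncal+1)^{j+1-n/2}$. The crucial two features are that (i) only $(\Ncal+1)$-type factors are moved past $c^*_\alpha c^*_\beta$, turning into $(\Ncal+5)$ (never $(\Ncal+9)$), and (ii) the ``bra'' vector $\xi_j$ deliberately carries weight of degree $n/2-1$, one less than $n/2$, precisely so that the extra power of $\Ncal$ produced by the pair bound $\norm{c_\beta c_\alpha\xi_j}\leq\norm{\Ncal\xi_j}$ restores the degree to $n/2$. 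Once you have passed to a single operator $\Delta_n$ you have discarded the $j$-indexed structure in which this cancellation happens, and I do not see how to recover it: the telescoping identity $\Delta_n(\Ncal)=4\sum_{j=0}^{n-1}(\Ncal+5)^j(\Ncal+1)^{n-1-j}$ (rather than the integral representation, which averages the shifts and destroys the $(\Ncal+1)$/$(\Ncal+5)$ duality) is what you would need, but using it brings you back to the paper's $j$-by-$j$ argument.
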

\begin{proof}
From the CAR \eqref{eq:car} we get
\[
% \label{eq:commutators}
[\Ncal,c^*_{\alpha}(k)] = 2c^*_{\alpha}(k) \quad \text{and} \quad c^*_{\alpha}(k) c^*_{\beta}(l) (\Ncal+4) = \Ncal c^*_{\alpha}(k) c^*_{\beta}(l)\,.
\]
We calculate the derivative w.r.t. $\lambda$ of the expectation value of $(\Ncal+5)^n$:
 \begin{align*}
& \left| \frac{\di}{\di \lambda} \langle T_\lambda \psi, (\Ncal+5)^n T_\lambda\psi \rangle \right|  = \Big| \langle T_\lambda\psi, \sum_{j=0}^{n-1} (\Ncal+5)^j [\Ncal,B](\Ncal+5)^{n-j-1} T_\lambda\psi \rangle \Big| \\
& = \Big| 4\Re \sum_{k\in \north} \sum_{\alpha,\beta\in\Ik} K(k)_{\alpha,\beta} \sum_{j=0}^{n-1} \langle T_\lambda\psi, (\Ncal+5)^j c^*_\alpha(k) c^*_\beta(k) (\Ncal+5)^{n-j-1} T_\lambda\psi\rangle\Big|\;.
\intertext{To distribute the powers of the number operator equally to both arguments of the inner product, we insert $\id = (\Ncal+1)^{\frac{n}{2}-1-j} (\Ncal+1)^{j+1-\frac{n}{2}}$ between $(\Ncal+5)^j$ and $c^*_\alpha(k)$ and then pull $(\Ncal+1)^{j+1-\frac{n}{2}}$  through $c^*_\alpha(k)c^*_\beta(k)$ to the right. Thus}
& \left| \frac{\di}{\di \lambda} \langle T_\lambda \psi, (\Ncal+5)^n T_\lambda\psi \rangle \right| = \Big| 4\Re\sum_{k \in \north} \sum_{\alpha,\beta\in\Ik} K(k)_{\alpha,\beta} \sum_{j=0}^{n-1} \langle \xi_j, c^*_\alpha(k) c^*_\beta(k) \tilde \xi \rangle \Big|
\intertext{where we have introduced $\xi_j := (\Ncal+1)^{\frac{n}{2}-1-j}(\Ncal+5)^j T_\lambda\psi$ and $\tilde{\xi} := (\Ncal+5)^{\frac{n}{2}} T_\lambda\psi$. By Cauchy-Schwarz}
& \left| \frac{\di}{\di \lambda} \langle T_\lambda \psi, (\Ncal+5)^n T_\lambda\psi \rangle \right| \leq 4\sum_{k \in \north} \sum_{j=0}^{n-1} \sum_{\alpha,\beta\in\Ik} \left\lvert  K(k)_{\alpha,\beta}\right\rvert  \norm{c_\beta(k) c_\alpha(k) \xi_j} \norm{\tilde \xi} \\
& \leq 4\sum_{k \in \north} \sum_{j=0}^{n-1} \Big( \sum_{\alpha,\beta\in\Ik} \Big|  K(k)_{\alpha,\beta} \Big|^2 \Big)^{1/2} \Big( \sum_{\alpha,\beta\in\Ik}  \norm{c_\beta(k) c_\alpha(k) \xi_j}^2 \Big)^{1/2} \norm{\tilde \xi}
\intertext{using the first bound from Lemma \ref{lem:bosopbound}} 
& \leq 4 \sum_{k\in \north} \sum_{j=0}^{n-1} \norm{K(k)}\HS \Big( \sum_{\alpha,\beta\in\Ik}  \norm{\Ncal(B_\textnormal{F} \cap B_\beta)^{1/2} c_\alpha(k) \xi_j}^2 \Big)^{1/2} \norm{\tilde \xi}\\
& = 4 \sum_{k\in \north} \sum_{j=0}^{n-1} \norm{K(k)}\HS \Big( \sum_{\alpha\in\Ik}  \langle c_\alpha(k) \xi_j, \sum_{\beta\in\Ik} \Ncal(B_\textnormal{F}\cap B_\beta) c_\alpha(k) \xi_j\rangle \Big)^{1/2} \norm{\tilde \xi}
\intertext{with the trivial estimate $\sum_{\beta\in\Ik} \Ncal(B_\textnormal{F}\cap B_\beta) \leq \Ncal$ then}
& \leq 4 \sum_{k\in \north} \sum_{j=0}^{n-1} \norm{K(k)}\HS \Big( \sum_{\alpha\in\Ik}  \langle c_\alpha(k) \xi_j, \Ncal c_\alpha(k) \xi_j\rangle \Big)^{1/2} \norm{\tilde \xi}\\
& \leq 4 \sum_{k\in \north} \sum_{j=0}^{n-1} \norm{K(k)}\HS \Big( \sum_{\alpha\in\Ik}  \norm{ (\Ncal+2)^{1/2} c_\alpha(k) \xi_j}^2 \Big)^{1/2} \norm{\tilde \xi} \\
& = 4 \sum_{k\in \north} \sum_{j=0}^{n-1} \norm{K(k)}\HS \Big( \sum_{\alpha\in\Ik}  \norm{ c_\alpha(k) \Ncal^{1/2} \xi_j}^2 \Big)^{1/2} \norm{\tilde \xi}
\intertext{and, estimating $c_\alpha(k)$ by the first bound from Lemma \ref{lem:bosopbound},}
& \leq 4 \sum_{k\in \north} \sum_{j=0}^{n-1} \norm{K(k)}\HS \norm{ \Ncal \xi_j}  \norm{\tilde \xi} \leq 4\sum_{k \in \north} n \norm{K(k)}\HS \langle T_\lambda\psi, (\Ncal+5)^n T_\lambda\psi\rangle\,. \tagg{lastline}
\end{align*}
From the differential inequality \eqref{eq:lastline}, using Gr\"onwall's Lemma, we conclude that
\[\langle T_\lambda\psi, (\Ncal+5)^n T_\lambda\psi \rangle \leq \exp\Big( 4 n \lambda \sum_{k \in \north} \norm{ K (k) }\HS \Big) \langle \psi, (\Ncal+5)^n \psi \rangle \leq e^{C n \lambda} \langle \psi, (\Ncal+5)^n \psi \rangle \]
where in the last inequality we used \eqref{eq:kernelbound} and the assumptions on $V$.
\end{proof}

\section{Evaluating the Energy of the Trial State}
In this section we calculate the expectation value $\langle \xi, \mathcal{H}_{\textnormal{corr}} \xi \rangle$, for the trial state $\xi$ defined in \eqref{eq:bos-gs} and $\Hcal_\textnormal{corr}$ defined in \eqref{eq:Hcorr}. We start with some simple estimates for the non-bosonizable terms.  Afterwards we linearize the kinetic energy and calculate its contribution to the expectation value, before we eventually turn to the main part of the interaction.

\subsection{Getting Rid of Non-Bosonizable Terms}\label{sec:nonboson}
In the next lemma, we show that the contribution of the terms in \eqref{eq:cE1} to the expectation $\langle \xi, \mathcal{H}_{\textnormal{corr}} \xi \rangle$ is negligible for $N \to \infty$. 
\begin{lem}[Non-Bosonizable Interaction Terms]
Let $\mathcal{E}_1 (x,y)$ be defined as in \eqref{eq:cE1}. Let $\xi$ be the trial state defined as in \eqref{eq:bos-gs}. Then we have
\[ \Big| \Big\langle \xi, \frac{1}{2N}\int_{\Tbb^3\times \Tbb^3} \di x\di y\, V(x-y) \,\mathcal{E}_1(x,y)\, \xi \Big\rangle \Big| \leq C N^{-1} \,. \]
\end{lem}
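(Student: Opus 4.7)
The strategy exploits the manifestly non-negative, normal-ordered structure of each of the three summands of $\Ecal_1(x,y)$ together with the uniform-in-$N$ bound on $\langle \xi, \Ncal^n \xi \rangle$ provided by Proposition~\ref{prp:particlenumber}. Since $\hat V$ is finitely supported on $\Zbb^3$, the potential $V = \sum_k \hat V(k) e^{ik\cdot}$ is a trigonometric polynomial and $\norm{V}_\infty < \infty$, so the first step is to pull $|V(x-y)| \leq \norm{V}_\infty$ out of the integral.

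Each of the three terms in $\Ecal_1(x,y)$ has the form $a^*(f_x) a^*(g_y) a(g_y) a(f_x)$ for some choice $f,g \in \{u, \cc v\}$, and the identity $\left(a^*(f_x) a^*(g_y)\right)^* = a(g_y) a(f_x)$ gives
\[
\langle \xi, a^*(f_x) a^*(g_y) a(g_y) a(f_x) \xi\rangle = \norm{a(g_y) a(f_x) \xi}^2 \geq 0.
\]
Expanding $a^*(u_x) = \sum_{p \in B_\textnormal{F}^c} \cc{f_p(x)} a_p^*$ and $a^*(\cc v_y) = \sum_{h \in B_\textnormal{F}} f_h(y) a_h^*$, the $x$ and $y$ integrations collapse to Kronecker deltas on momentum indices. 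Reordering by the CAR (noting that particle and hole modes are distinct, so cross contractions vanish and $a_p^* a_h^* a_h a_p = n_p n_h$ whenever $p \in B_\textnormal{F}^c$, $h \in B_\textnormal{F}$) then yields
\begin{align*}
\int a^*(u_x) a^*(u_y) a(u_y) a(u_x) \,\di x\,\di y &= \Ncal_\textnormal{p} (\Ncal_\textnormal{p} - 1), \\
\int a^*(u_x) a^*(\cc v_y) a(\cc v_y) a(u_x) \,\di x\,\di y &= \Ncal_\textnormal{p}\, \Ncal_\textnormal{h}, \\
\int a^*(\cc v_y) a^*(\cc v_x) a(\cc v_x) a(\cc v_y) \,\di x\,\di y &= \Ncal_\textnormal{h} (\Ncal_\textnormal{h} - 1),
\end{align*}
each dominated by $\Ncal^2$.

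Putting the three contributions together gives
\[
\Big| \Big\langle \xi, \int V(x-y) \Ecal_1(x,y) \,\di x\,\di y \,\xi \Big\rangle \Big| \leq 4 \norm{V}_\infty \langle \xi, \Ncal^2 \xi\rangle.
\]
Applying Proposition~\ref{prp:particlenumber} with $\psi = \Omega$, $\lambda = 1$, and $n = 2$ bounds $\langle \xi, \Ncal^2 \xi \rangle \leq \langle \xi, (\Ncal + 1)^2 \xi\rangle \leq e^{2C} \langle \Omega, (\Ncal + 5)^2 \Omega\rangle = 25\, e^{2C}$, and dividing by $2N$ yields the desired $CN^{-1}$ bound. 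No serious obstacle is expected: the statement is a sanity check that these non-bosonizable remainders carry a spare factor of $1/N$ relative to an $\Ncal^2$-expectation that is already $\Ocal(1)$ in $\xi$, and the calculation is essentially bookkeeping once the sum-of-squares structure of $\Ecal_1$ is recognized.
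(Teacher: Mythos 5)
Your proof is correct, but it takes a genuinely different route from the paper's. The paper rewrites each summand of $\mathcal{E}_1$ via the CAR and the Fourier expansion $V(x-y) = \sum_k \hat V(k) e^{ik\cdot(x-y)}$, obtaining products of second-quantized one-body operators (e.g.\ $\di\Gamma(u e^{ikx}u)\di\Gamma(u e^{-ikx}u) - \di\Gamma(u e^{ikx}u e^{-ikx}u)$), and then applies the soft estimates $\norm{\di\Gamma(A)\psi} \leq \norm{A}\OP \norm{\Ncal\psi}$ and $\lvert\langle\psi, \di\Gamma(A)\psi\rangle\rvert \leq \norm{A}\OP\langle\psi,\Ncal\psi\rangle$ with $\norm{u}\OP \leq 1$. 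This yields the state-independent bound $\frac{2}{N}\sum_k\lvert\hat V(k)\rvert \langle\psi, (\Ncal+1)^2\psi\rangle$ for arbitrary $\psi\in\fock$. Your route instead observes that each of the three summands is a manifest square, $\langle\xi, a^*(f_x)a^*(g_y)a(g_y)a(f_x)\xi\rangle = \norm{a(g_y)a(f_x)\xi}^2 \geq 0$, which licenses pulling $\lvert V(x-y)\rvert \leq \norm{V}_\infty$ out of the integral before doing any CAR manipulation; the remaining $x,y$-integral then collapses to the exact operator identities $\Ncal_\textnormal{p}(\Ncal_\textnormal{p}-1)$, $\Ncal_\textnormal{p}\Ncal_\textnormal{h}$, $\Ncal_\textnormal{h}(\Ncal_\textnormal{h}-1)$, each $\leq \Ncal^2$. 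The key observation that positivity lets you decouple the potential from the operator structure is a nice simplification: it avoids the $\di\Gamma$-norm bookkeeping and makes the $\Ncal^2$ structure transparent. Note that the positivity is essential here---without it, inserting $V$ would destroy the Kronecker-delta collapse of the momentum sums, and you would be forced back toward the paper's Fourier-plus-$\di\Gamma$ route. Since $\norm{V}_\infty \leq \sum_k\lvert\hat V(k)\rvert$, your bound is if anything slightly sharper in its $V$-dependence. Both proofs conclude identically by invoking Proposition~\ref{prp:particlenumber} with $\psi=\Omega$, $n=2$.
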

\begin{proof}
We are going to show that for all $\psi \in \fock$ we have
 \begin{equation}\label{eq:la}\Big| \Big\langle \psi, \frac{1}{2N}\int_{\Tbb^3\times \Tbb^3} \di x\di y\, V(x-y) \,\mathcal{E}_1(x,y)\, \psi \Big\rangle\Big| \leq \frac{2}{N} \sum_{k \in \Zbb^3} \lvert \hat{V}(k)\rvert\, \langle \psi, (\Ncal+1)^2\psi \rangle\,.\end{equation}
 The final claim then follows using Proposition~\ref{prp:particlenumber}. To prove \eqref{eq:la}, let us rewrite the first term on the r.\,h.\,s.\ of \eqref{eq:cE1} by using the CAR and $\langle u_x,u_y\rangle = u(x,y)$, yielding
 \begin{equation}\label{eq:E1proof}
\begin{split} & \frac{1}{2N} \int_{\Tbb^3\times\Tbb^3} \di x\di y\, V(x-y) a^*(u_x) a^*(u_y) a(u_y) a(u_x)\\
& = \frac{1}{2N} \int_{\Tbb^3\times\Tbb^3} \di x\di y\, V(x-y) \Big( a^*(u_x) a(u_x) a^*(u_y) a(u_y) - a^*(u_x) \langle u_x,u_y\rangle a(u_y) \Big)\\
& = \frac{1}{2N} \sum_{k\in \Zbb^3} \hat{V}(k) \Big( \di\Gamma(u e^{ikx}u)\di\Gamma(u e^{-ikx} u) - \di\Gamma(u e^{ikx} u e^{-ikx} u) \Big)\;.\end{split}\end{equation}
Recall the two bounds $\norm{\di\Gamma(A) \psi} \leq \norm{A}\OP \norm{\Ncal \psi}$ and $\lvert \langle \psi, \di\Gamma(A) \psi\rangle\rvert \leq \norm{A}\OP \langle \psi, \Ncal \psi\rangle$ for any bounded one-particle operator $A$ and any $\psi \in \fock$. Thus, using that $\| u \|_{\text{op}} \leq 1$,
\[\Big| \Big\langle \xi, \frac{1}{2N} \sum_{k\in\Zbb^3} \hat{V}(k) \di\Gamma(u e^{ikx}u)\di\Gamma(u e^{-ikx} u) \xi \Big\rangle\Big| \leq \frac{1}{2N} \sum_{k\in\Zbb^3} \lvert \hat{V}(k)\rvert \norm{\Ncal \xi}^2\,.\]
The second summand in \eqref{eq:E1proof} can be estimated in the same way. The same holds true for the other two terms in \eqref{eq:cE1}.
\end{proof}

Let us now consider the error term $\mathcal{E}_{2}$, defined in \eqref{eq:cE2}. We prove that this term vanishes in our trial state $\xi$. 
\begin{lem}[Interaction Terms of Wrong Parity] Let $\mathcal{E}_{2}(x,y)$ be defined as in \eqref{eq:cE2}. Let $\xi$ be the trial state defined in \eqref{eq:bos-gs}. Then we have
\[
\Big\langle \xi, \frac{1}{2N}\int_{\Tbb^3\times \Tbb^3} \di x\di y\, V(x-y) \big( \mathcal{E}_2(x,y) + \hc\big) \xi \Big\rangle =0\,.
\]
\end{lem}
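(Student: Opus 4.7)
My plan is to prove this by a simple parity argument based on the number of particles (or equivalently, of holes). The key is that the trial state $\xi = T\Omega$ lies in a subspace with definite fermionic parity, while the operator $\mathcal{E}_2(x,y) + \hc$ shifts this parity by one.

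More precisely, first I would observe that each particle--hole operator $c^*_\alpha(k)$ (and $c_\alpha(k)$) changes the number of particles $\Ncal_\textnormal{p}$ by $\pm 1$, so the generator $B$ in \eqref{eq:trial}, being quadratic in these operators, shifts $\Ncal_\textnormal{p}$ by $\pm 2$. Consequently $T = e^B$ preserves the parity of $\Ncal_\textnormal{p}$, and since $\Omega$ has $\Ncal_\textnormal{p}\Omega = 0$, we conclude that $\xi$ lies in the spectral subspace of $\Ncal_\textnormal{p}$ associated with even eigenvalues. (The same holds for $\Ncal_\textnormal{h}$, but $\Ncal_\textnormal{p}$ alone is enough.)

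Second, I would inspect $\mathcal{E}_2(x,y)$ term by term. Using $a^*(u_x)$ as a creator of particle modes and $a^*(\cc{v}_x)$ as a creator of hole modes (and similarly for annihilators), the first summand $-2a^*(u_x)a^*(u_y)a^*(\cc{v}_x)a(u_y)$ changes $\Ncal_\textnormal{p}$ by $+2-1 = +1$, and the second summand $2a^*(u_x)a^*(\cc{v}_y)a^*(\cc{v}_x)a(\cc{v}_y)$ changes $\Ncal_\textnormal{p}$ by $+1$. Taking the Hermitian conjugate flips these signs, so every term of $\mathcal{E}_2(x,y) + \hc$ maps the even-$\Ncal_\textnormal{p}$ subspace into the odd-$\Ncal_\textnormal{p}$ subspace.

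It follows that $(\mathcal{E}_2(x,y) + \hc)\xi$ is orthogonal to $\xi$ for every $x,y \in \Tbb^3$, and hence the integral against $V(x-y)$ vanishes pointwise in the inner product; integrability is not an issue because we may insert the orthogonal projections onto the even/odd subspaces of $\Ncal_\textnormal{p}$ around the operator and observe that they annihilate the expectation. There is no real obstacle here; the whole argument is a symmetry observation and requires no estimate. I would just be careful, when writing it out, to phrase the parity selection rule rigorously, e.g.\ by introducing $P_\textnormal{even} := \frac{1}{2}(\id + (-1)^{\Ncal_\textnormal{p}})$ and noting $P_\textnormal{even}\xi = \xi$ while $P_\textnormal{even}(\mathcal{E}_2(x,y) + \hc) = (\mathcal{E}_2(x,y) + \hc)(\id - P_\textnormal{even})$.
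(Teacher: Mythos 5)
Your argument is correct and shares the same spirit as the paper's proof: both are parity selection arguments showing that $\mathcal{E}_2+\hc$ connects $\xi$ only to states orthogonal to $\xi$. The technical realization differs slightly. You grade by $(-1)^{\Ncal_\textnormal{p}}$: since $\mathcal{E}_2$ shifts $\Ncal_\textnormal{p}$ by exactly $+1$ while $B$ shifts it by $\pm 2$, the operator $(-1)^{\Ncal_\textnormal{p}}$ commutes with $T$ and anticommutes with $\mathcal{E}_2+\hc$, and $(-1)^{\Ncal_\textnormal{p}}\Omega=\Omega$ does the rest. The paper instead grades by $i^{\Ncal}$ with $\Ncal=\Ncal_\textnormal{p}+\Ncal_\textnormal{h}$; because $\mathcal{E}_2$ shifts the \emph{total} fermion number by $+2$ (an even number), a $\Zbb_2$ grading on $\Ncal$ would not see it, and the paper is forced into the $\Zbb_4$-type phase $i^\Ncal$, using $i^{\Ncal+2}=-i^\Ncal$ and $i^{\Ncal\pm4}=i^\Ncal$. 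Your $\Zbb_2$ grading on $\Ncal_\textnormal{p}$ alone is a cleaner choice that avoids this, so your proof is, if anything, marginally more elementary. One minor stylistic point: the last clause about "integrability not being an issue" is unnecessary -- $V$ is a fixed bounded function on the torus and the expectation is manifestly finite, so nothing needs to be said; it is simpler (and closer to what the paper does) to observe directly that $\langle\xi,\mathcal{E}_2\xi\rangle=-\langle\xi,\mathcal{E}_2\xi\rangle$ using the commutation relations you established, rather than going through projections $P_\textnormal{even}$.
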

\begin{proof}
Since terms in $\mathcal{E}_2(x,y)$ create exactly two fermions, we have 
\[i^\Ncal \mathcal{E}_2(x,y) = \mathcal{E}_2(x,y) i^{\Ncal+2} = -\mathcal{E}_2(x,y) i^\Ncal\,.\]
Recall that $\xi = T \Omega$, with $T=  \exp (B)$ and $B$ as in \eqref{eq:trial}. We have $[i^\Ncal,B] =0$, since $B$ creates or annihilates particles four at a time. This implies $T i^\Ncal = i^\Ncal T$. Using $i^\Ncal \Omega = \Omega$, we get
\begin{align*}
\langle T\Omega , \Ecal_2 T \Omega \rangle & = \langle T \Omega, \Ecal_2 T i^{\Ncal} \Omega\rangle = -\langle T \Omega, i^\Ncal \Ecal_2 T\Omega \rangle = - \langle (-i)^\Ncal T\Omega, \Ecal_2 T\Omega \rangle \\
& = - \langle  T (-i)^\Ncal\Omega, \Ecal_2 T\Omega \rangle = - \langle  T\Omega, \Ecal_2 T\Omega \rangle\;,
\end{align*}
which thus vanishes.
\end{proof}

\subsection{Estimating Direct and Exchange Operators}
\label{sec:direx}

In this section we estimate the contribution of the direct and exchange terms to $\di\Gamma(uhu-\cc{v}\cc{h}v)$. Recall that
\[h = -\frac{\hbar^2 \Delta}{2} + (2\pi)^3 \hat{V} (0) + X \]
where $X$ has the integral kernel $X(x,y) = -N^{-1} V(x-y) \omega_{\textnormal{pw}}(x,y)$.
The contribution of the constant direct term $(2\pi)^3 \hat{V} (0)$ is
\[
(2\pi)^3 \hat{V} (0) \, \di \Gamma (u^2 - \cc{v} v) = (2\pi)^3 \hat{V} (0)  \di\Gamma (\id -2 \omega_\textnormal{pw}) = (2\pi)^3 \hat{V} (0) (\Ncal_\textnormal{p} - \Ncal_\textnormal{h})
\]
and therefore it vanishes on $\xi$ by Lemma \ref{lem:phsymmetry}. The next lemma allows us to control the contribution of the exchange term $X$. 
\begin{lem}[Bound for the Exchange Term] Let $\xi$ be the trial state defined as in \eqref{eq:bos-gs}. Then we have
\[
\lvert \langle \xi , \di \Gamma(uXu-\cc{v}\cc{X}v) \xi \rangle \rvert \leq C N^{-1}\,. 
\]
\end{lem}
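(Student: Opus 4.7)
The plan is to reduce the claim to an operator-norm bound on $X$. Since $u = \id - \omega_{\textnormal{pw}}$ is an orthogonal projection and $v$ satisfies $v^{*}v = \omega_{\textnormal{pw}}$, both are contractions on $L^{2}(\Tbb^{3})$, and complex conjugation preserves operator norms. Hence the one-particle operator $uXu - \cc{v}\cc{X}v$ has operator norm at most $2\|X\|\OP$. The standard second-quantization inequality
\[
|\langle \psi, \di\Gamma(A)\psi\rangle| \leq \|A\|\OP \, \langle\psi, \Ncal\psi\rangle
\]
then yields
\[
|\langle \xi, \di\Gamma(uXu - \cc{v}\cc{X}v)\xi\rangle| \leq 2\|X\|\OP \, \langle \xi, \Ncal\xi\rangle.
\]
The factor $\langle \xi, \Ncal\xi\rangle$ is of order one by Proposition \ref{prp:particlenumber} applied with $n=1$ and $\psi = \Omega$: indeed, $\langle T\Omega, (\Ncal+1) T\Omega\rangle \leq e^{C}\langle \Omega, (\Ncal+5)\Omega\rangle = 5e^{C}$.

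Everything therefore reduces to showing $\|X\|\OP \leq C/N$. My approach is to diagonalize $X$ in the plane-wave basis, exploiting the translation invariance of $V$. Writing $\omega_{\textnormal{pw}}(x,y) = (2\pi)^{-3}\sum_{k \in B_\textnormal{F}} e^{ik(x-y)}$ and performing the change of variables $z = x-y$ in $(Xf_{p})(x) = -N^{-1}\int V(x-y)\omega_{\textnormal{pw}}(x,y) f_{p}(y)\,\di y$, a direct calculation gives
\[
X f_{p} = \mu_{p} f_{p}, \qquad \mu_{p} = -\frac{1}{N} \sum_{k \in B_\textnormal{F}} \hat{V}(p-k).
\]
Since $\hat{V}$ is bounded and compactly supported in $\{|q|\leq R\}$, only integer points $k \in B_\textnormal{F}$ with $|p-k| \leq R$ contribute, and their number is uniformly bounded by $C R^{3}$. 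Consequently $|\mu_{p}| \leq C/N$ for every $p \in \Zbb^{3}$, which gives $\|X\|\OP \leq C/N$ and completes the argument.

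I do not anticipate any real obstacle: the essential ingredients are the translation invariance of $V$, which turns $X$ into a Fourier multiplier, and the compact support of $\hat V$, which keeps the sum defining $\mu_{p}$ finite uniformly in $p$. The bound on the particle number in $\xi$ is exactly what Proposition \ref{prp:particlenumber} was designed to supply.
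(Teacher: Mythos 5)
Your proof is correct and follows essentially the same route as the paper: reduce to the operator bound $\norm{X}\OP \le C/N$ via translation invariance of $X$, then combine $\lvert\langle\psi,\di\Gamma(A)\psi\rangle\rvert\le\norm{A}\OP\langle\psi,\Ncal\psi\rangle$ with Proposition~\ref{prp:particlenumber}. The only cosmetic difference is that you compute the Fourier multiplier $\mu_p=-N^{-1}\sum_{k\in B_\textnormal{F}}\hat V(p-k)$ explicitly and invoke compact support of $\hat V$, while the paper bounds $\norm{\widehat{Vf}}_{L^\infty}\le\norm{\hat f}_{L^\infty}\sum_k\lvert\hat V(k)\rvert$ by a Young-type inequality---the two estimates are equivalent.
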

\begin{proof}
Notice that
\[\omega_\textnormal{pw}(x,y) = \frac{1}{(2\pi)^3}\sum_{h \in B_\textnormal{F}} e^{ih\cdot(x-y)} =: f (x-y)\,.\]
Thus $X$ is translation invariant, and hence
\[
\norm{X}_\text{op}  = N^{-1} \norm{ \widehat{Vf}}_{L^\infty} \leq N^{-1} \norm{ \hat{f} }_{L^\infty} \sum_{k \in \Zbb^3}  \lvert \hat{V}(k)\rvert\, \leq C N^{-1}.
\]
Using that $\norm{u}\OP = 1= \norm{v}\OP$, we get, by Proposition~\ref{prp:particlenumber}:
\[
\lvert \langle \xi, \di\Gamma(uXu-\cc{v}\cc{X}v) \xi \rangle \rvert \leq \norm{uXu-\cc{v}\cc{X}v}\OP \langle \xi ,\Ncal \xi \rangle \leq C N^{-1}\,. \qedhere
\]
\end{proof}

\subsection{Expectation Value of the Kinetic Energy}
\label{sec:lin}
In this section we evaluate the contribution of the Laplacian to the expectation value of the correlation Hamiltonian in the trial state $\xi$ defined as in \eqref{eq:bos-gs}. We start by linearizing in Fourier space,
\begin{align*}
-\frac{\hbar^2}{2}  \langle \xi, \di\Gamma\left(u \Delta u - \cc{v} \Delta v\right) \xi\rangle 
  & = \frac{\hbar^2}{2}\langle \xi, \Big[ \sum_{p \in B_\textnormal{F}^c} p^2 a^*_p a_p - \sum_{h \in B_\textnormal{F}} h^2  a^*_h a_h \Big] \xi\rangle \\
 & = \frac{\hbar^2}{2}\langle \xi, \sum_{\alpha=1}^M \Big[ \sum_{p \in B_\textnormal{F}^c \cap B_\alpha} \left((p-\omega_\alpha)^2 + 2p\cdot \omega_\alpha - \omega_\alpha^2\right) a^*_p a_p\\
 & \hspace{2.5cm} - \sum_{h \in B_\textnormal{F} \cap B_\alpha} \left((h-\omega_\alpha)^2+ 2h \cdot \omega_\alpha -\omega_\alpha^2\right)  a^*_h a_h \Big] \xi\rangle\,.
\end{align*}
Notice that from the first to the second line, momenta $p$ and $h$ that lie in the corridors or are more than a distance $R$ away from the Fermi surface have disappeared from the sums; this is justified since such modes are never occupied in the trial state, i.\,e., $a_p \xi = 0$ and $a_h\xi =0$.
Furthermore, thanks to Lemma \ref{lem:phsymmetry}, we have
\[
\langle \xi , \sum_{\alpha=1}^M \left[ \sum_{p \in B_\textnormal{F}^c \cap B_\alpha} \omega_\alpha^2 a^*_p a_p- \sum_{h \in B_\textnormal{F} \cap B_\alpha} \omega_\alpha^2 a^*_h a_h\right] \xi \rangle = k_\textnormal{F}^2 \, \langle \xi, \left[ \Ncal_\textnormal{p}- \Ncal_\textnormal{h}\right] \xi \rangle = 0
\]
where we used that $\lvert \omega_\alpha \rvert = k_\textnormal{F}$ for all $\alpha$. To estimate $(p-\omega_\alpha)^2$ and $(h-\omega_\alpha)^2$, we recall that the diameter of the patches is bounded by $C \sqrt{N^{2/3}/M}$ (since the diameter of the patch on the Fermi surface is bounded by $\sqrt{N^{2/3}/M}$ which is large compared to its thickness of order $R$). Therefore 
\[
\begin{split}& -\frac{\hbar^2}{2} \langle \xi,  \di\Gamma\left(u \Delta u - \cc{v} \Delta v\right) \xi \rangle = \langle \xi, \Hbb_\textnormal{kin} \xi \rangle +\mathfrak{E}_\textnormal{lin}
\end{split}
\]
where we introduced 
\[\Hbb_\textnormal{kin} := {\hbar^2} \sum_{\alpha=1}^M \Big[ \sum_{p \in B_\textnormal{F}^c \cap B_\alpha}p\cdot \omega_\alpha\, a^*_p a_p - \sum_{h \in B_\textnormal{F} \cap B_\alpha} h\cdot \omega_\alpha\, a^*_h a_h \Big] \]
and where the error $\mathfrak{E}_\textnormal{lin}$ is bounded by 
\[\begin{split}
\left \lvert \Efrak_\textnormal{lin} \right\rvert & = \Big| \frac{\hbar^2}{2} \langle \xi , \sum_{\alpha=1}^M \Big[ \sum_{p \in B_\textnormal{F}^c \cap B_\alpha} (p - \omega_\alpha)^2 a^*_p a_p- \sum_{h \in B_\textnormal{F} \cap B_\alpha} (h - \omega_\alpha)^2 a^*_h a_h \Big] \xi  \rangle \Big|\\
& \leq C \frac{\hbar^2}{2} \frac{N^{2/3}}{M} \langle \xi, \Ncal \xi \rangle \leq \frac{C}{M}
\end{split}\]
where in the last step we used Proposition~\ref{prp:particlenumber} to bound the expectation value of the number operator and $\hbar = N^{-1/3}$. 

\medskip

To compute the expectation of the linearized kinetic energy operator $\Hbb_\textnormal{kin}$, we will make use of the following lemma. 
\begin{lem}[Kinetic Energy of Particle-Hole Pairs]\label{lem:kingcomm}
For all $k\in \north$ and $\alpha \in \Ik$ we have
\[ [\Hbb_\textnormal{kin}, c^*_\alpha(k)] = {\hbar^2} \lvert k\cdot \omega_\alpha \rvert c^*_\alpha(k)\,.\]
\end{lem}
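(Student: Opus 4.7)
The identity is essentially a direct computation, since $\Hbb_\textnormal{kin}$ is bilinear in creation/annihilation operators and $c^*_\alpha(k)$ is built from quadratic monomials $a^*_p a^*_h$ with momenta confined to a single patch $B_\alpha$. The plan is to reduce the commutator to individual mode commutators, verify that only the patch index matching $\alpha$ contributes, and then split into the cases $\alpha\in\Ikp$ and $\alpha\in\Ikm$ to get the absolute value.

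First I would record the elementary identities
\begin{equation*}
[a^*_q a_q, a^*_p] = \delta_{q,p}\, a^*_p, \qquad [a^*_q a_q, a_p] = -\delta_{q,p}\, a_p,
\end{equation*}
and use them, together with the Leibniz rule for commutators, to compute for $p \in B_\textnormal{F}^c \cap B_\alpha$ and $h \in B_\textnormal{F} \cap B_\alpha$ that
\begin{equation*}
[\Hbb_\textnormal{kin}, a^*_p] = \hbar^2 (p\cdot\omega_\alpha)\, a^*_p, \qquad [\Hbb_\textnormal{kin}, a^*_h] = -\hbar^2 (h\cdot\omega_\alpha)\, a^*_h.
\end{equation*}
Here the crucial point is that $p$ (respectively $h$) lies in exactly one patch $B_\alpha$, so among all the summands indexed by $\beta = 1,\dots,M$ in $\Hbb_\textnormal{kin}$ only the one with $\beta=\alpha$ contributes. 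Applying the Leibniz rule to $a^*_p a^*_h$ and enforcing $p-h = k$ yields
\begin{equation*}
[\Hbb_\textnormal{kin}, a^*_p a^*_h] = \hbar^2 (p-h)\cdot\omega_\alpha\, a^*_p a^*_h = \hbar^2 (k\cdot\omega_\alpha)\, a^*_p a^*_h.
\end{equation*}

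Next I would sum over $(p,h)$ and divide by $n_{\alpha,k}$ to obtain $[\Hbb_\textnormal{kin}, b^*_{\alpha,k}] = \hbar^2 (k\cdot\omega_\alpha)\, b^*_{\alpha,k}$ for any $\alpha$ such that the sum is nontrivial. To conclude, I distinguish the two branches of the definition \eqref{eq:rigorouscoperators}. For $\alpha\in\Ikp$ the constraint $\hat\omega_\alpha\cdot\hat k\geq N^{-\delta}>0$ gives $k\cdot\omega_\alpha = \lvert k\cdot\omega_\alpha\rvert$, so $[\Hbb_\textnormal{kin}, c^*_\alpha(k)] = [\Hbb_\textnormal{kin}, b^*_{\alpha,k}] = \hbar^2 \lvert k\cdot\omega_\alpha\rvert c^*_\alpha(k)$. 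For $\alpha\in\Ikm$ one has $c^*_\alpha(k) = b^*_{\alpha,-k}$ and $\hat\omega_\alpha\cdot\hat k\leq -N^{-\delta}<0$, so the previous computation applied with $k$ replaced by $-k$ gives $[\Hbb_\textnormal{kin}, c^*_\alpha(k)] = \hbar^2 (-k)\cdot\omega_\alpha\, b^*_{\alpha,-k} = \hbar^2 \lvert k\cdot\omega_\alpha\rvert c^*_\alpha(k)$.

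There is no real obstacle here: the lemma is a clean consequence of the fact that each patch was chosen as a disjoint localization region on the Fermi surface, so that the linearized kinetic operator acts diagonally on pair operators $a^*_p a^*_h$ with both momenta in the same patch. The only thing to be careful about is the bookkeeping of the sign change between $\Ikp$ and $\Ikm$, which is exactly what produces the absolute value on the right-hand side.
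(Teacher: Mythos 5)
Your proof is correct and follows essentially the same route as the paper's: expand the commutator using the CAR, observe that only the summand $\beta=\alpha$ survives because each of $p$ and $h$ lies in a single patch, use $p-h=\pm k$ to collapse to $(p-h)\cdot\omega_\alpha$, and split into the cases $\alpha\in\Ikp$, $\alpha\in\Ikm$ to produce the absolute value. The only cosmetic difference is that you apply the Leibniz rule to $a_p^* a_h^*$ after first computing $[\Hbb_\textnormal{kin},a_p^*]$ and $[\Hbb_\textnormal{kin},a_h^*]$, while the paper commutes with $a_p^* a_h^*$ in one step; the content is identical.
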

\begin{proof}We first treat the case $\alpha \in \Ikp$, for which $k\cdot\omega_\alpha >0$. Using the CAR we calculate
\begin{align*}
[ \Hbb_\textnormal{kin}, c^*_{\alpha}(k)] & = [ \Hbb_\textnormal{kin}, b^*_{\alpha,k} ] = [\Hbb_\textnormal{kin}, \frac{1}{n_{\alpha,k}} \sum_{\substack{p \in B_\textnormal{F}^c \cap B_\alpha\\ h \in B_\textnormal{F} \cap B_\alpha}} \delta_{p-h,k} a^*_p a^*_h]\\
& = {\hbar^2} \sum_{\beta=1}^M \frac{1}{n_{\alpha,k}} \sum_{\substack{p \in B_\textnormal{F}^c \cap B_\alpha\\h \in B_\textnormal{F} \cap B_\alpha}} \delta_{p-h,k} \sum_{\tilde p \in B_\textnormal{F}^c \cap B_\beta} \tilde p \cdot \omega_\beta [a^*_{\tilde p} a_{\tilde p}, a^*_p a^*_h] \\
& \quad - {\hbar^2} \sum_{\beta=1}^M \frac{1}{n_{\alpha,k}} \sum_{\substack{p \in B_\textnormal{F}^c \cap B_\alpha\\h \in B_\textnormal{F} \cap B_\alpha}} \delta_{p-h,k} \sum_{\tilde h \in B_\textnormal{F} \cap B_\beta} \tilde h \cdot \omega_\beta [a^*_{\tilde h} a_{\tilde h}, a^*_p a^*_h] \\
& = {\hbar^2} \sum_{\beta=1}^M \frac{1}{n_{\alpha,k}} \sum_{\substack{p \in B_\textnormal{F}^c \cap B_\alpha\\ h \in B_\textnormal{F}\cap B_\alpha}} \delta_{p-h,k} \Big( \sum_{\tilde p \in B_\textnormal{F}^c \cap B_\beta} \tilde p \cdot \omega_\beta \delta_{p, \tilde p}  - \sum_{\tilde h \in B_\textnormal{F} \cap B_\beta} \tilde h\cdot \omega_\beta \delta_{h,\tilde h}\Big) a^*_p a^*_h\,;
\intertext{notice that the Kronecker deltas $\delta_{p,\tilde p}$ and $\delta_{h,\tilde h}$ imply $\beta=\alpha$, so we find}
& = {\hbar^2} \frac{1}{n_{\alpha,k}} \sum_{\substack{p \in B_\textnormal{F}^c \cap B\alpha\\h \in B_\textnormal{F} \cap B_\alpha}} \delta_{p-h,k} (p-h)\cdot\omega_\alpha a^*_p a^*_h  = {\hbar^2} \lvert k \cdot \omega_\alpha\rvert c^*_\alpha(k)\,.
\end{align*}
The absolute value was trivially introduced since the scalar product is anyway non-negative. For $k\cdot\omega_\alpha < 0$, recall that  $c^*_\alpha(k) = b^*_{\alpha,-k}$; the calculation then proceeds the same way, but in the second last line we use $(p-h)\cdot \omega_\alpha = (-k)\cdot\omega_\alpha = \lvert k \cdot \omega_\alpha\rvert$.
\end{proof}

We are now ready to calculate the kinetic energy of our trial state.

\begin{prp}[Kinetic Energy]\label{prp:kin}
Let $\xi$ be defined as in \eqref{eq:bos-gs}. Then
\[\langle \xi, \Hbb_\textnormal{kin} \xi \rangle = {\hbar\kappa}\sum_{k\in\north} \lvert k\rvert  \tr \D(k)\sinh^2(K(k))  + \mathfrak{E}_\textnormal{kin}\,,\]
where $\D (k)$ is defined in \eqref{eq:blocks2} and the error term is such that $\lvert \Efrak_\textnormal{kin}\rvert \leq C \hbar / \nfrak^2$ with $\nfrak = N^{1/3-\delta/2} M^{-1/2}$ as in \eqref{eq:nNM}. 
\end{prp}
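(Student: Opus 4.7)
The strategy rests on two facts already available: the linearized kinetic operator satisfies $\Hbb_\textnormal{kin}\Omega = 0$ (the vacuum contains neither particles nor holes), and Lemma \ref{lem:kingcomm} supplies the diagonal action $[\Hbb_\textnormal{kin}, c_\alpha^*(k)] = \hbar^2 \lvert k\cdot\omega_\alpha\rvert c_\alpha^*(k)$. First I would write
\[
\langle \xi, \Hbb_\textnormal{kin}\xi\rangle = \int_0^1 \di\lambda\, \langle T_\lambda\Omega, [\Hbb_\textnormal{kin}, B] T_\lambda\Omega\rangle,
\]
and then compute the commutator explicitly. Combining Lemma \ref{lem:kingcomm} with its adjoint $[\Hbb_\textnormal{kin}, c_\alpha(k)] = -\hbar^2\lvert k\cdot\omega_\alpha\rvert c_\alpha(k)$, and using the identity $\hbar^2\lvert k\cdot\omega_\alpha\rvert = \hbar\kappa\lvert k\rvert u_\alpha(k)^2$ (since $\lvert\omega_\alpha\rvert = k_\textnormal{F}$ and $\hbar k_\textnormal{F} = \kappa$), I obtain
\[
[\Hbb_\textnormal{kin}, B] = \frac{\hbar\kappa}{2}\sum_{k\in\north}\lvert k\rvert\sum_{\alpha,\beta\in\Ik}\bigl(\D(k)K(k) + K(k)\D(k)\bigr)_{\alpha,\beta}\bigl(c_\alpha^*(k) c_\beta^*(k) + c_\beta(k) c_\alpha(k)\bigr).
\]

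Next I would evaluate $\langle T_\lambda\Omega, c_\alpha^*(k) c_\beta^*(k) T_\lambda\Omega\rangle = \langle \Omega, (T_\lambda^* c_\alpha^*(k) T_\lambda)(T_\lambda^* c_\beta^*(k) T_\lambda) \Omega\rangle$ by applying the adjoint of Proposition \ref{lem:bog1} to each factor. Of the four cross-products, only the term $\sinh(\lambda K(k))_{\gamma,\alpha}\cosh(\lambda K(k))_{\delta,\beta}\,c_\gamma(k) c_\delta^*(k)$ has non-vanishing vacuum expectation (the others either annihilate $\Omega$ on the right or are normal-ordered the wrong way); the approximate CCR of Lemma \ref{lem:ccr} then collapses the contraction to $\delta_{\gamma,\delta}$ modulo an $\Ecal_\gamma(k,k)$-error, giving
\[
\langle T_\lambda\Omega, c_\alpha^*(k) c_\beta^*(k) T_\lambda\Omega\rangle = \bigl(\sinh(\lambda K(k))\cosh(\lambda K(k))\bigr)_{\alpha,\beta} + \text{error},
\]
and an identical computation produces the same leading value for $\langle T_\lambda\Omega, c_\beta(k) c_\alpha(k) T_\lambda\Omega\rangle$ (using that $\sinh(\lambda K)\cosh(\lambda K)$ is symmetric). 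Pairing with $\D K + K\D$, using cyclicity of the trace together with $[K, \sinh(\lambda K)] = 0$, and collecting the factor of $2$, one arrives at $\hbar\kappa\lvert k\rvert\,\tr\bigl[\D(k) K(k)\sinh(2\lambda K(k))\bigr]$. The $\lambda$-integral is then elementary: $\int_0^1 K\sinh(2\lambda K)\,\di\lambda = \tfrac{1}{2}(\cosh(2K) - 1) = \sinh^2(K)$, yielding the claimed leading term.

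For the error bound $\lvert\Efrak_\textnormal{kin}\rvert \leq C\hbar/\nfrak^2$, I would combine three ingredients. Each application of Proposition \ref{lem:bog1} produces an $\Efrak$-contribution that, summed in $\ell^2(\gamma)$, is controlled by $\nfrak^{-2}\sup_\tau\norm{(\Ncal+2)^{3/2}T_\tau\Omega}\, e^{\norm{K(k)}\HS}\sum_{k'}\norm{K(k')}\HS$. Proposition \ref{prp:particlenumber} applied at $\psi = \Omega$ bounds the number-operator moments by an absolute constant, while Lemma \ref{lem:inverses} gives $\norm{K(k)}\HS \leq C\hat V(k)$, so the compact support of $\hat V$ makes both the $k$-sum and the exponential finite. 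The prefactor $\hbar\kappa\lvert k\rvert$, bounded by $C\hbar$ on $\supp\hat V$, then delivers the overall scaling. The main obstacle I anticipate is the bookkeeping of the two $\Efrak$-factors coming from the double Bogoliubov conjugation: they must be paired with the sinh/cosh companion on the opposite side via Cauchy--Schwarz in $(\alpha,\beta)$ together with the $\ell^2$-bound from Proposition \ref{lem:bog1}, rather than estimated term-by-term; the residual $\Ecal_\gamma(k,k)$-errors from Lemma \ref{lem:ccr} are handled in the same spirit, invoking \eqref{eq:ccrerror} and Proposition \ref{prp:particlenumber}.
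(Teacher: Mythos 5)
Your proposal is correct and follows essentially the same route as the paper: Duhamel formula, Lemma \ref{lem:kingcomm} to extract the energy factor $\hbar\kappa|k|u_\alpha(k)^2 = (DK+KD)$-type combinations, conjugation of the pair operators by $T_\lambda$ via Proposition \ref{lem:bog1}, vacuum contraction leaving only the $\sinh\cdot\cosh$ cross-term, and the elementary $\lambda$-integral giving $\sinh^2(K)$. The only cosmetic difference is that you expand $[\Hbb_\textnormal{kin},B]$ into both its $c^*c^*$ and $cc$ parts, whereas the paper keeps only the $c^*c^*$ part and takes a real part; also note that the $\Ecal_\gamma(k,k)$ correction to the CCR that you flag actually vanishes identically on $\Omega$, so only the $\Efrak$-errors from Proposition \ref{lem:bog1} need estimating, exactly as in the paper's error analysis.
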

\begin{proof}
We write $T_\lambda = \exp (\lambda B)$, with $B$ as in \eqref{eq:trial}, and $\xi = T \Omega$. Hence
\begin{align*}
& \langle \xi , \Hbb_\textnormal{kin} \xi \rangle\\
& = \int_0^1 \di \lambda \langle \Omega, T^*_\lambda [\Hbb_\textnormal{kin}, B] T_\lambda \Omega \rangle \\
& = \int_0^1 \di \lambda \langle \Omega, T^*_\lambda \Big[ \Hbb_\textnormal{kin}, \sum_{k\in\north} \frac{1}{2} \sum_{\alpha,\beta \in\Ik} K(k)_{\alpha,\beta} c^*_\alpha(k) c^*_\beta(k) - \hc \Big] T_\lambda \Omega\rangle \\
& = \Re\int_0^1 \di \lambda \sum_{k\in \north} \sum_{\alpha,\beta \in\Ik} K(k)_{\alpha,\beta} \langle \Omega, T^*_\lambda \left( [\Hbb_\textnormal{kin}, c^*_\alpha(k)]c^*_\beta(k) + c^*_\alpha(k) [\Hbb_\textnormal{kin},c^*_\beta(k)]\right) T_\lambda \Omega\rangle\,.
\end{align*}
From Lemma \ref{lem:kingcomm}
\begin{align*}
\langle \xi , \Hbb_\textnormal{kin} \xi \rangle & = \Re\int_0^1\di\lambda \sum_{k\in\north} {\hbar^2} \sum_{\alpha,\beta \in\Ik} K(k)_{\alpha,\beta} \left( \lvert k\cdot \omega_\alpha \rvert + \lvert k\cdot \omega_\beta \rvert \right) \langle \Omega, T^*_\lambda c^*_\alpha(k) c^*_\beta(k) T_\lambda \Omega \rangle\,.
\intertext{Recall that $\lvert k \cdot \omega_\alpha\rvert = \lvert k\rvert \kappa \hbar^{-1} u_\alpha(k)^2$ with  $u_\alpha (k)$ defined in \eqref{eq:def-u}. Using Proposition \ref{lem:bog1} then}
\langle \xi , \Hbb_\textnormal{kin} \xi \rangle & = \Re\int_0^1\di\lambda \sum_{k\in\north} \lvert k\rvert{\hbar \kappa} \sum_{\alpha,\beta \in\Ik} K(k)_{\alpha,\beta} \left( u_\alpha(k)^2 + u_\beta(k)^2 \right) \\
& \quad \times \langle \Omega,\left(\sum_{\delta\in\Ik} \cosh(\lambda K(k))_{\alpha,\delta} c^*_\delta(k) + \sum_{\delta\in\Ik} \sinh(\lambda K(k))_{\alpha,\delta} c_\delta(k) + \mathfrak{E}^*_\alpha(\lambda,k) \right)\\
& \hspace{1cm}\times \left(\sum_{\gamma\in\Ik} \cosh(\lambda K(k))_{\beta,\gamma} c^*_\gamma(k) + \sum_{\gamma\in\Ik} \sinh(\lambda K(k))_{\beta,\gamma} c_\gamma(k) + \mathfrak{E}^*_\beta(\lambda,k) \right)\Omega\rangle\,.
\intertext{Finally, using $c_\delta(k)\Omega = 0$ and $\langle \Omega, c_\delta(k) c^*_\gamma(k)\Omega\rangle = \delta_{\delta,\gamma}$, we get}
 \langle \xi , \Hbb_\textnormal{kin} \xi \rangle & = \Re \int_0^1 \di \lambda \sum_{k \in\north} \lvert k\rvert\frac{\hbar\kappa}{2} \sum_{\alpha,\beta \in\Ik} K(k)_{\alpha,\beta} \left( u_\alpha(k)^2 + u_\beta(k)^2 \right)\\
& \hspace{2.5cm} \times  \sum_{\delta\in\Ik} \sinh(\lambda K(k))_{\alpha,\delta} \sum_{\gamma\in\Ik} \cosh(\lambda K(k))_{\beta,\gamma} \delta_{\delta,\gamma} + \Efrak_\textnormal{kin} \\
& =  \sum_{k\in\north} \lvert k\rvert {\hbar\kappa} \sum_{\alpha \in\Ik}  u_\alpha(k)^2 \int_0^1 \di \lambda \Big( \sinh\big(2\lambda K(k)\big) K(k) \Big)_{\alpha,\alpha} + \mathfrak{E}_\textnormal{kin} \tagg{step}
\end{align*}
where we defined 
\begin{align*}
\mathfrak{E}_\textnormal{kin} & := \Re\int_0^1 \di \lambda \sum_{k \in\north} \lvert k\rvert\frac{\hbar\kappa}{2} \sum_{\alpha,\beta \in\Ik} K(k)_{\alpha,\beta} \left( u_\alpha(k)^2 + u_\beta(k)^2 \right)\\
& \quad \times \Big( \langle \Omega, \mathfrak{E}^*_\alpha(\lambda,k) \mathfrak{E}^*_\beta(\lambda,k) \Omega\rangle + \sum_{\delta\in\Ik} \sinh(\lambda K(k))_{\alpha,\delta} \langle\Omega,c_\delta(k) \mathfrak{E}^*_\beta(\lambda,k)\Omega\rangle \\
& \hspace{1.2cm}+ \sum_{\gamma\in \Ik} \cosh(\lambda K(k))_{\beta,\gamma} \langle \Omega, \mathfrak{E}_\alpha^*(\lambda,k) c^*_\gamma(k) \Omega\rangle \Big)
\\ & =: \mathfrak{E}^{(1)}_\textnormal{kin} +  \mathfrak{E}^{(2)}_\textnormal{kin}+ \mathfrak{E}^{(3)}_\textnormal{kin}
\end{align*}
We compute the integral in \eqref{eq:step}. We get
\[\langle \xi , \Hbb_\textnormal{kin} \xi \rangle  = {\hbar\kappa}\sum_{k\in \north} \lvert k\rvert \tr \D(k)\sinh^2(K(k))+ \mathfrak{E}_\textnormal{kin}\,.\]
Using that $u_\alpha(k)^2 = \lvert \hat{k}\cdot \hat{\omega}_\alpha \rvert \leq 1$, we bound the first error term by 
\begin{align*}
| \mathfrak{E}^{(1)}_\textnormal{kin}| 
& \leq \Big| \int_0^1 \di \lambda \sum_{k \in\north} \lvert k\rvert {\hbar\kappa} \sum_{\alpha,\beta \in\Ik} K(k)_{\alpha,\beta} \left( u_\alpha(k)^2 + u_\beta(k)^2 \right) \langle \Omega, \mathfrak{E}^*_\alpha(\lambda,k) \mathfrak{E}^*_\beta(\lambda,k) \Omega\rangle \Big|\\
& \leq  2\int_0^1\di\lambda \sum_{k \in \north} \lvert k\rvert {\hbar\kappa}\sum_{\alpha,\beta \in \Ik} \lvert K(k)_{\alpha,\beta}\rvert \norm{\Efrak_\alpha(\lambda,k)\Omega} \norm{\Efrak^*_\beta(\lambda,k)\Omega} \\
& \leq  2\int_0^1\di\lambda \sum_{k \in \north} \lvert k\rvert  {\hbar\kappa} \Big[ \sum_{\alpha,\beta \in \Ik} \lvert K(k)_{\alpha,\beta}\rvert^2 \Big]^{1/2} \Big[ \sum_{\alpha\in\Ik} \norm{\Efrak_\alpha(\lambda,k)\Omega}^2 \sum_{\beta\in \Ik} \norm{\Efrak^*_\beta(\lambda,k)\Omega}^2 \Big]^{1/2}\,;
\intertext{and finally using \eqref{eq:bogerrorbound}}
| \mathfrak{E}^{(1)}_\textnormal{kin}|& \leq  \frac{C\hbar }{\nfrak^4} \sum_{k \in \north} \lvert k\rvert \norm{K(k)}\HS e^{2\norm{K(k)}\HS}  \sup_{\lambda \in [0,1]} \langle T_\lambda \Omega, (\Ncal+2)^{3} T_\lambda \Omega \rangle \Big( \sum_{l\in\north} \norm{K(l)}\HS \Big)^{2} . 
\end{align*}
From Proposition~\ref{prp:particlenumber} and Lemma \ref{lem:inverses}, we conclude that $|\mathfrak{E}^{(1)}_\textnormal{kin}| \leq C \hbar / \nfrak^4$. The third error term $\mathfrak{E}^{(3)}_\textnormal{kin}$ can be controlled similarly, using Lemma \ref{lem:bosopbound}:
\begin{align*}
&|\mathfrak{E}^{(3)}_\textnormal{kin}| \\ 
& \leq C\hbar\int_0^1 \di \lambda  \norm{(\Ncal+1)^{1/2} \Omega}  \sum_{k\in\north} \lvert k\rvert  \sum_{\alpha,\beta \in \Ik} \lvert K(k)_{\alpha,\beta} \rvert \norm{\Efrak_\alpha(\lambda,k)\Omega} \Big[ \sum_{\gamma\in\Ik} \lvert \cosh(\lambda K(k))_{\beta,\gamma} \rvert^2 \Big]^{1/2} \\
& \leq \frac{C\hbar}{\nfrak^2} \norm{(\Ncal+1)^{1/2} \Omega} \sup_{\lambda \in [0,1]} \norm{(\Ncal+2)^{3/2}T_\lambda \Omega}   \sum_{k\in \north} \lvert k\rvert \norm{K(k)}\HS \, e^{\norm{K(k)}\HS} \\ &\hspace{6cm} \times \int_0^1 \di \lambda \, \norm{\cosh(\lambda K(k))}\HS  \sum_{l\in \north} \norm{K(l)}\HS \\
& \leq \frac{C\hbar}{\nfrak^2} \norm{(\Ncal+1)^{1/2} \Omega} \sup_{\lambda \in [0,1]} \norm{(\Ncal+2)^{3/2}T_\lambda \Omega}\sum_{k\in \north} \lvert k\rvert \norm{K(k)}\HS \, e^{2\norm{K(k)}\HS}  \sum_{l\in \north} \norm{K(l)}\HS\\
& \leq \frac{\hbar}{\nfrak^2} C\,.
\end{align*}
The second error term $\Efrak_\textnormal{kin}^{(2)}$ can be controlled in the same way.
\end{proof}

\subsection{Expectation Value of the Interaction Energy}

We now evaluate the main contribution \eqref{eq:QN0-def} to the interaction energy. This is the content of the next proposition.

\begin{prp}[Interaction Energy]\label{prp:int}
Let $\xi$ be the trial state defined as in \eqref{eq:bos-gs}, and let $Q_{N}^{\textnormal{B}}$ be given by \eqref{eq:QN0-def}. Then
\[\begin{split}\langle \xi , Q_N^{\textnormal{B}} \xi \rangle & = {\hbar\kappa} \sum_{k\in \north} \lvert k\rvert \tr \left( \W(k) \sinh^2 (K(k)) + \Wt (k) \sinh(K(k)) \cosh(K(k)) \right)\\
& \quad + \mathfrak{E}_\textnormal{int} + \Ocal(\hbar N^{-\delta/2})\end{split}\]
where $\W (k)$ and $\Wt (k)$ are defined in \eqref{eq:blocks2}. The error term is bounded by $\lvert \mathfrak{E}_\textnormal{int}\rvert \leq C\hbar / \nfrak^2$, with $\nfrak = N^{1/3-\delta/2} M^{-1/2}$ as in \eqref{eq:nNM}. 
\end{prp}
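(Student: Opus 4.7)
The plan is to (i) extract from $Q_N^{\textnormal{B}}$ its bosonizable leading part, which is bilinear in the almost-bosonic operators $c_\alpha(k)$ and $c^*_\alpha(k)$; (ii) evaluate the expectation of this part on the trial state $\xi = T\Omega$ by means of the approximate Bogoliubov transformation of Proposition~\ref{lem:bog1}; (iii) control separately the resulting Bogoliubov error and the non-bosonizable remainder coming from the $\rfrak_k$--contributions in the patch decomposition \eqref{eq:patchdecomp}. Starting from the representation \eqref{eq:Qccstar} and using the definitions \eqref{eq:blocks2}, together with the symmetry $\Wt(k)^T = \Wt(k)$ and the commutativity $[c^*_\alpha(k), c^*_\beta(k)] = 0$ from Lemma~\ref{lem:ccr}, the leading part of $Q_N^{\textnormal{B}}$ rewrites compactly as
\[
\hbar\kappa \sum_{k\in\north} |k| \sum_{\alpha,\beta\in\Ik} \Big[\,\W_{\alpha,\beta}(k)\, c^*_\alpha(k) c_\beta(k) + \tfrac{1}{2}\Wt_{\alpha,\beta}(k)\bigl(c^*_\alpha(k) c^*_\beta(k) + c_\beta(k) c_\alpha(k)\bigr)\Big].
\]

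For step (ii), I would compute the two required quadratic expectations on $\xi = T\Omega$ by writing $\langle T\Omega, A_1 A_2\, T\Omega\rangle = \langle \Omega, (T^* A_1 T)(T^* A_2 T)\Omega\rangle$ and inserting Proposition~\ref{lem:bog1}. Using $c_\gamma(k)\Omega = 0$, the fact that $\Ecal_\gamma(k,k)\Omega = 0$ (as $\Ecal_\gamma(k,k)$ is normal-ordered with annihilation operators to the right), and the symmetry of $K(k)$, one finds
\[
\langle\xi, c^*_\alpha(k) c_\beta(k)\xi\rangle = \bigl(\sinh^2 K(k)\bigr)_{\alpha,\beta} + R^{(1)}_{\alpha,\beta}(k),\qquad \langle\xi, c^*_\alpha(k) c^*_\beta(k)\xi\rangle = \bigl(\sinh K(k)\cosh K(k)\bigr)_{\alpha,\beta} + R^{(2)}_{\alpha,\beta}(k),
\]
with analogous identities for the hermitian conjugates. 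Summing against $\W_{\alpha,\beta}(k)$ and $\Wt_{\alpha,\beta}(k)$, and using that $\W(k)$, $\Wt(k)$, $\sinh^2 K(k)$ and $\sinh K(k)\cosh K(k)$ are all symmetric, produces the two traces in the statement. The residuals $R^{(i)}_{\alpha,\beta}(k)$ are controlled by Cauchy--Schwarz combined with \eqref{eq:bogerrorbound} (which supplies the factor $1/\nfrak^2$), Proposition~\ref{prp:particlenumber} (to bound $\|(\Ncal+2)^{3/2} T_\tau \Omega\|$ uniformly in $\tau\in[0,1]$), the trace-norm estimate $\|\W(k)\|\TR + \|\Wt(k)\|\TR \leq C\hat{V}(k)$ (obtained from Proposition~\ref{prp:counting} and \eqref{eq:vkdef}, as in the proof of Lemma~\ref{lem:inverses}), and the finiteness $\sum_k |k|\hat{V}(k) < \infty$ (compact support of $\hat V$). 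Altogether this yields $|\mathfrak{E}_\textnormal{int}| \leq C\hbar/\nfrak^2$.

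The main obstacle is step (iii): showing that the $\rfrak$-contributions to $\langle\xi, Q_N^{\textnormal{B}}\xi\rangle$ are only $\Ocal(\hbar N^{-\delta/2})$. My approach is to adapt the method of Lemma~\ref{lem:bosopbound} to establish a Schwarz-type inequality $\|\rfrak_k \psi\|^2 \leq C\,\#_k\,\langle\psi,\Ncal\psi\rangle$, where $\#_k$ counts the particle--hole pairs of momentum $k$ discarded in \eqref{eq:patchdecomp}. These pairs fall into two groups: those in patches with $\hat\omega_\alpha\cdot\hat k < N^{-\delta}$ (an equatorial band of size $\Ocal(N^{-\delta}\, n_k^2)$, with $n_k^2 := \sum_\alpha n_{\alpha,k}^2 \sim k_\textnormal{F}^2|k|$) and those with at least one momentum in a corridor (of size $\Ocal(M^{-1/2} N^{-1/3}\, n_k^2)$, by the surface-measure bound on $p_\textnormal{corri}$). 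In the parameter regime $M = N^{1/3+\epsilon}$ with $\delta < 1/6-\epsilon/2$ the equatorial band dominates, giving $\#_k \leq C N^{-\delta} n_k^2$. Each cross term $N^{-1}\hat{V}(k)\langle\xi, \tilde b^*_k \rfrak_k\xi\rangle$, and the analogous $\rfrak^*\tilde b$ and $\rfrak^*\rfrak$ contributions, is then controlled by Cauchy--Schwarz together with $\langle\xi,\Ncal\xi\rangle \leq C$ (Proposition~\ref{prp:particlenumber}) and $n_k^2 \sim N^{2/3}|k|$, producing $\Ocal(\hbar |k|\hat{V}(k) N^{-\delta/2})$; summing over the compact support of $\hat V$ yields the advertised $\Ocal(\hbar N^{-\delta/2})$.
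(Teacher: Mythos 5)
Your proposal follows the same overall route as the paper: decompose $\tilde b_k$ via the patch decomposition, discard the $\rfrak_k$-contributions, and apply the approximate Bogoliubov transformation of Proposition~\ref{lem:bog1} to the remaining pair-quadratic operator to extract the two traces plus a Bogoliubov error of order $\hbar/\nfrak^2$; your step (ii) and the associated error bounds match the paper. The one place where you genuinely diverge is in the treatment of the corridor piece $\tilde\rfrak_k$ of $\rfrak_k$. The paper observes that $\tilde\rfrak_k\,\xi = 0$ \emph{exactly}: since $B$ (hence $T$) is built only from $c_\alpha(k)$-operators involving modes in the patches $B_\alpha$, never in $B_\textnormal{corri}$, corridor modes are simply never populated in $\xi$, and every corridor cross-term in $\langle\xi,Q_N^{\textnormal{B}}\xi\rangle$ vanishes identically. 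You instead estimate the corridor contribution by the Cauchy--Schwarz/number-operator argument used for the equatorial band. This is still sound, but note the typo: the surface measure of the corridors is $\sigma(p_\textnormal{corri}) = \Ocal(M^{1/2}N^{-1/3})$, not $\Ocal(M^{-1/2}N^{-1/3})$. With the corrected value, $\#_k^{\textnormal{corri}} \sim M^{1/2} N^{1/3}$, and the resulting corridor cross-term is of order $M^{1/4} N^{-1/2} = N^{-5/12 + \epsilon/4}$, which is $\leq \hbar N^{-\delta/2}$ precisely because $\delta < 1/6 - \epsilon/2$ is equivalent to $\epsilon + 2\delta < 1/3$. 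So your argument does close, but it consumes the parameter constraint for a second purpose (it is already needed in Proposition~\ref{prp:counting}), whereas the paper's observation that corridors are structurally inert in $\xi$ eliminates this piece for free and is cleaner.
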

\begin{proof}
We start by decomposing the $b_{k}$-operators in the interaction Hamiltonian \eqref{eq:QN0-1} by their patch decomposition \eqref{eq:patchdecomp},
\begin{equation}\label{eq:bdec}
\tilde{b}_k = \sum_{\alpha \in \Ikp} \tilde{b}_{\alpha,k} + \rfrak_k\,.\end{equation}
We recall that the error terms $\rfrak_k$ collect modes in the corridors and close to the equator:
\begin{equation}\label{eq:splitr}\rfrak_k = \tilde{\rfrak}_k + \sum_{\alpha \not\in \Ik} \tilde b_{\alpha,k}\,,\end{equation}
where $\tilde{\rfrak}_k$ is a linear combination of products $a_h a_p$ such that at least one of the two momenta is in the corridors $B_\textnormal{corri}$ (see Figure \ref{fig:nontouching}), and the second term collects the contributions coming from the patches close to the equator. We are going to show that $\rfrak_k$ gives a negligible contribution to $\langle \xi, Q_{N}^{\textnormal{B}} \xi \rangle$.

\paragraph{Contribution of Corridors.}
%%%%%%%%%%%%%%%%%%%%%%
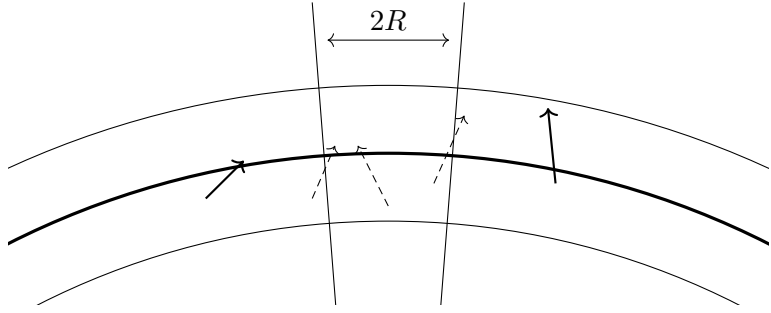
\begin{figure}
\centering
\begin{tikzpicture}
\clip (-5,3) rectangle (5,7);

\draw(0,-6) circle (11.9cm);
\draw[very thick] (0,-6) circle (11cm);
\draw (0,-6) circle (10.1cm);

\draw [<->] (-.8,6.5)--(.8,6.5);
\node [above] at (0,6.5) {$2R$};

\draw[thick,->] (-2.4,4.4)--(-1.9,4.9);
\draw[thick,->] (2.2,4.6)--(2.1,5.6);

\draw[densely dashed,->] (-1,4.4)--(-.7,5.1);
\draw[densely dashed,->] (.6,4.6)--(1,5.5);
\draw[densely dashed,->] (0,4.3)--(-.4,5.1);

 \draw (0,-6)--(-1,7);
\draw (0,-6)--(1,7);
 \end{tikzpicture}
 \caption{Fermi surface in bold; two patches separated by a corridor of width $2R$. Bold arrows represent particle--hole pairs $(p,h)$ that contribute to the expectation value of the interaction Hamiltonian. Dashed arrows represent particle--hole pairs of which mode $p$ or $h$ (or both) is not occupied in the trial state $\xi$. Since $\lvert k\rvert \leq R$, pairs connecting the patches across the corridor do not exist in $Q_N^{\textnormal{B}}$.}\label{fig:nontouching}
\end{figure}
%%%%%%%%%%%%%%%%%%%%%%%
We claim that the error operators $\tilde{\rfrak}_k$ do not contribute to $\langle \xi, Q_{N}^{\textnormal{B}} \xi \rangle$. To see this, recall that $T = e^{B}$, with $B$ not containing any mode $q\in B_\textnormal{corri}$, see \eqref{eq:trial}. Since at least one of the two momenta $p$ and $h$ appearing in $\tilde{\rfrak}_k$ is in the corridor $B_\textnormal{corri}$, we have $\tilde{\rfrak}_k \xi = 0$. Plugging the decomposition \eqref{eq:bdec} into $Q^{\textnormal{B}}_{N}$ from \eqref{eq:QN0-1} and taking the expectation value on $\xi$, we realize that all terms containing at least one error operator $\tilde{\rfrak}^\natural_k$ are zero, due to the fact that there is at least one error operator $\tilde{\rfrak}_k$ directly acting on $T \Omega$. 

\paragraph{Contribution of Patches near the Equator.} We claim that the contribution to $\langle\xi, Q^{\textnormal{B}}_{N} \xi\rangle$ coming from  patches $\beta \not\in \Ik$ is subleading as $N\to \infty$. These are the patches $\beta$ in the collar where $\lvert \hat{k}\cdot \hat{\omega}_\beta \rvert < N^{-\delta}$. The width of this collar is bounded above by $C k_\textnormal{F} N^{-\delta}$, and its length---approximately equal to the circumference of the equator---is bounded above by $C k_\textnormal{F}$; we conclude that the surface area of the collar is of order $k_F^2 N^{-\delta}$.

Recall that $n^2_{\beta,k}$ is the number of particle--hole pairs with relative momentum $k$ in patch $\beta$; thus, adding in the corridors for an upper bound,
 $\sum_{\beta \not\in \Ik} n^2_{\beta,k}$ is bounded by the number of particle--hole pairs with relative momentum $k$ in the collar. This number is bounded above by the number of hole momenta $h \in B_\textnormal{F}$ that are at most a distance $R$ from the collar (since $\lvert k\rvert \leq R$). The number of such points of the lattice $\Zbb^3$ can be counted by Gauss' classical argument: assign to each lattice point $k$ the cube $[k_1,k_1+1]\times[k_2,k_2+1]\times[k_3,k_3+1]$. Then the number of cubes belonging to lattice points near the collar is bounded by the Lebesgue measure of the collar ``fattened'' to a thickness $R$; i.\,e.,
\begin{equation}\label{eq:number-in-collar}
 \sum_{\beta \not\in \Ik} n^2_{\beta,k} \leq C k_\textnormal{F}^2 N^{-\delta} \times R = \Ocal(N^{2/3-\delta})\,.
\end{equation}

We are now ready to estimate the contribution to $\langle \xi, Q_{N}^{\textnormal{B}} \xi \rangle$ coming from the modes close to the equator. Consider, e.\,g., the term $\frac{1}{2N} \sum_{k \in \north} \hat{V}(k) 2\tilde b^*_k \tilde b_k$ (all the other terms can be dealt with similarly). We get three contributions from \eqref{eq:splitr}, namely
\begin{equation}\label{eq:rk-errors}\begin{split} & \frac{1}{N} \sum_{k \in \north}\hat{V}(k) \sum_{\beta \not \in \Ik} \tilde{b}^*_{\beta,k} \sum_{\alpha \in \Ikp} \tilde{b}_{\alpha,k}\,,\\
& \frac{1}{N} \sum_{k \in \north}\hat{V}(k) \sum_{\beta \in \Ikp} \tilde{b}^*_{\beta,k} \sum_{\alpha \not \in \Ik} \tilde{b}_{\alpha,k}\,,\\
& \frac{1}{N} \sum_{k \in \north}\hat{V}(k) \sum_{\beta \not \in \Ik} \tilde{b}^*_{\beta,k} \sum_{\alpha \not \in \Ik} \tilde{b}_{\alpha,k}\,.\end{split}\end{equation}
We give the detailed estimate for the first term in the list (the other two terms can be controlled similarly)
\begin{align*}
& \frac{1}{N} \sum_{k \in \north}\hat{V}(k) \Big\langle \xi, \sum_{\beta \not \in \Ik} \tilde{b}^*_{\beta,k} \sum_{\alpha \in \Ikp} \tilde{b}_{\alpha,k} \xi \Big\rangle  \\ & = 
\frac{1}{N} \sum_{k \in \north}\hat{V}(k) \Big\langle \xi, \sum_{\beta \not \in \Ik} n_{\beta,k} b^*_{\beta,k} \sum_{\alpha \in \Ikp} n_{\alpha,k} {b}_{\alpha,k} \xi \Big\rangle \\
& \leq \frac{1}{N} \sum_{k\in\north}\hat{V}(k) \Big( \sum_{\beta \not \in \Ik} n_{\beta,k}^2 \Big)^{1/2} \Big( \sum_{\alpha \in \Ikp} n_{\alpha,k}^2\Big)^{1/2} \norm{\Ncal^{1/2} \xi}^2 \\
& \leq \frac{1}{N} \sum_{k\in\north}\hat{V}(k) \Big( C N^{2/3-\delta} \Big)^{1/2} \Big( M \frac{C k_\textnormal{F}^2}{M}\Big)^{1/2} \langle \xi,\Ncal \xi\rangle 
= \Ocal(\hbar N^{-\delta/2})\,,
\end{align*}
where we used \eqref{eq:number-in-collar} to control the sum over $\beta \not\in \Ik$, $n_{\alpha,k}^2 \leq C k_\textnormal{F}^2/M$ due to Proposition \ref{prp:counting} for the sum over $\alpha \in \Ikp$, and the bound on the number of fermions $\Ncal$ from Proposition~\ref{prp:particlenumber}.

\paragraph{Approximate Bogoliubov Diagonalization of the Effective Interaction.} By the discussion of the previous paragraph
\[
\begin{split}& \langle \xi, Q_N^{\textnormal{B}} \xi \rangle \\ &  = \frac{1}{N} \langle \xi, \sum_{k \in \north} \hat{V}(k) \Big( \sum_{\alpha,\beta \in\Ikp}\tilde b^*_{\alpha,k} \tilde{b}_{\beta,k} + \sum_{\alpha,\beta \in \Ikm}\tilde b^*_{\alpha,-k} \tilde b_{\beta,-k} + \Big[ \sum_{\alpha \in \Ikp,\, \beta \in \Ikm} \tilde b^*_{\alpha,k} \tilde b^*_{\beta,-k} + \hc\Big]\Big) \xi \rangle \\
& \quad + \Ocal(\hbar N^{-\delta/2})\,.\end{split}
\]
Introducing the normalization factors $n_{\alpha,k} = k_\textnormal{F} \sqrt{\lvert k\rvert} v_\alpha(k)$ and combining the $b^*_{\alpha,k}$ and $b^*_{\alpha,-k}$ operators to $c^*_\alpha(k)$ operators as in \eqref{eq:rigorouscoperators}, we get
\begin{equation}\label{eq:QN0-deco}\langle \xi, Q_N^{\textnormal{B}} \xi \rangle = \langle \xi, \Hbb_\textnormal{int} \xi\rangle + \Ocal(\hbar N^{-\delta/2})\,, \qquad \Hbb_\textnormal{int} := \Hbb_\textnormal{int}^{(1)} + \Hbb_\textnormal{int}^{(2)} + \Hbb_\textnormal{int}^{(3)}\,,\end{equation}
where, recalling that $g(k) = \kappa\hat{V}(k)$,
\begin{align*}\Hbb_\textnormal{int}^{(1)} & := {\hbar\kappa} \sum_{k\in \north} \lvert k\rvert g(k)\sum_{\alpha,\beta \in \Ikp} v_\alpha(k) v_\beta(k) c^*_\alpha(k) c_\beta(k)\,,\\
\Hbb_\textnormal{int}^{(2)} & := \hbar\kappa \sum_{k\in \north} \lvert k\rvert g(k)\sum_{\alpha,\beta \in \Ikm} v_\alpha(k) v_\beta(k) c^*_\alpha(k) c_\beta(k)\,,\\
\Hbb_\textnormal{int}^{(3)} & := \hbar\kappa\sum_{k\in\north} \lvert k\rvert g(k) \sum_{\alpha\in\Ikp} \sum_{\beta \in \Ikm} v_\alpha(k) v_\beta(k) c^*_\alpha(k) c^*_\beta(k) + \hc\end{align*}
We shall evaluate $\langle \xi, \mathbb{H}^{(i)}_{\text{int}} \xi \rangle$, $i=1,2,3$, with $\xi = T\Omega$, using the fact that the $T$ operator behaves as an approximate bosonic Bogoliubov transformation, recall Proposition \ref{lem:bog1}. Using also $\langle \Omega, c_\delta(k) c^*_\gamma(k)\Omega\rangle = \delta_{\delta,\gamma}$, we have
\begin{equation}\label{eq:int1}
   \langle \xi , \Hbb_\textnormal{int}^{(1)} \xi\rangle = \hbar\kappa\sum_{k\in\north} \lvert k\rvert \tr \W^{++}(k) \sinh^2(K(k)) + \mathfrak{E}_\textnormal{int}^{(1)}\,,
\end{equation}
where
\[\W^{++}(k)_{\alpha,\beta} = \left\{ \begin{array}{cl} g(k) v_\alpha(k) v_\beta(k) & \text{for }\alpha,\beta \in \Ikp \\
0 & \text{otherwise} \end{array}\right.\] and the error term is
\begin{align*}
 \mathfrak{E}_\textnormal{int}^{(1)} & = \hbar\kappa \sum_{k\in\north} \lvert k\rvert g(k) \sum_{\alpha,\beta\in \Ikp} v_\alpha(k) v_\beta(k) \Bigg[ \sum_{\gamma\in\Ik} \sinh(K(k))_{\alpha,\gamma} \langle \Omega, c_\gamma(k) \mathfrak{E}_\beta (1,k) \Omega\rangle \\
 & \quad +\sum_{\gamma\in\Ik} \sinh(K(k))_{\beta,\gamma} \langle \Omega, \mathfrak{E}^*_\alpha(1,k) c^*_\gamma(k) \Omega\rangle +\langle \Omega, \mathfrak{E}^*_\alpha (1,k) \mathfrak{E}_\beta (1,k) \Omega\rangle\Bigg].
\end{align*}
For the second part of the interaction we find
\begin{equation}\label{eq:int2}
   \langle \xi, \Hbb_\textnormal{int}^{(2)} \xi \rangle = \hbar\kappa\sum_{k\in\north} \lvert k\rvert \tr \W^{--}(k) \sinh^2(K(k)) + \mathfrak{E}_\textnormal{int}^{(2)}\,,
\end{equation}
where
\[\W^{--}(k)_{\alpha,\beta} = \left\{ \begin{array}{cl} g(k) v_\alpha(k) v_\beta(k) & \text{for }\alpha,\beta \in \Ikm \\
0 & \text{otherwise} \end{array}\right.\] 
and 
\begin{align*}
 \mathfrak{E}_\textnormal{int}^{(2)} & = \hbar\kappa \sum_{k\in\north} \lvert k\rvert g(k) \sum_{\alpha,\beta\in\Ikm} v_\alpha(k) v_\beta(k) \Bigg[ \sum_{\gamma\in\Ik} \sinh(K(k))_{\alpha,\gamma} \langle \Omega, c_\gamma(k) \mathfrak{E}_\beta (1,k) \Omega\rangle \\
 & \hspace{3.4cm} +\sum_{\gamma\in\Ik} \sinh(K(k))_{\beta,\gamma} \langle \Omega, \mathfrak{E}  ^*_\alpha(1,k) c^*_\gamma(k) \Omega\rangle +\langle \Omega, \mathfrak{E}^*_\alpha(1,k) \mathfrak{E}_\beta(1,k) \Omega\rangle\Bigg].
\end{align*}
Finally, for the third interaction term we find
\begin{equation} \label{eq:int3} 
\begin{split} 
\langle \xi, \Hbb_\textnormal{int}^{(3)} \xi \rangle & = 2 \hbar\kappa \Re\sum_{k\in\north} \lvert k\rvert \tr \W^{+-}(k) \sinh(K(k)) \cosh(K(k)) + \mathfrak{E}_\textnormal{int}^{(3)}  \\
& = \hbar\kappa \sum_{k\in\north} \lvert k\rvert \tr \Wt(k) \sinh(K(k)) \cosh(K(k))  + \mathfrak{E}_\textnormal{int}^{(3)}\,,  \end{split}\end{equation}
where
\[\W^{+-}(k)_{\alpha,\beta} = \left\{ \begin{array}{cl} g(k) v_\alpha(k) v_\beta(k) & \text{for }\alpha \in \Ikp \text{ and } \beta \in \Ikm \\
0 & \text{otherwise} \end{array}\right.\;;\] 
we used the fact that all terms are real to write the more symmetric expression in terms of $\Wt(k) = \W^{+-}(k) + \W^{-+}(k)$ (the latter is defined by exchanging the role of $\Ikp$ and $\Ikm$ in the former). The error term $\mathfrak{E}_\textnormal{int}^{(3)}$ is given by 
\begin{align*}
\mathfrak{E}_\textnormal{int}^{(3)} & = 2\hbar\kappa\Re\sum_{k\in\north} \lvert k\rvert g(k) \sum_{\alpha\in\Ikp} \sum_{\beta\in\Ikm} v_\alpha(k) v_\beta(k) \Bigg[ \sum_{\gamma\in\Ik} \sinh(K(k))_{\gamma,\alpha} \langle \Omega, c_\gamma(k) \mathfrak{E}^*_\beta(1,k) \Omega\rangle \\
& \quad + \sum_{\gamma\in\Ik} \cosh(K(k))_{\gamma,\beta} \langle \Omega,\mathfrak{E}^*_\alpha(1,k) c^*_\gamma(k) \Omega\rangle + \langle \Omega, \mathfrak{E}^*_\alpha(1,k) \mathfrak{E}^*_\beta(1,k) \Omega\rangle\Bigg]
\end{align*}
Combining \eqref{eq:int1}, \eqref{eq:int2}, \eqref{eq:int3} and \eqref{eq:QN0-deco}, we conclude that
\[
\langle \xi, Q_N^{\textnormal{B}} \xi \rangle = \hbar\kappa \sum_{k\in \north} \lvert k\rvert \tr \left( \W(k) \sinh^2 (K(k)) + \Wt (k) \sinh(K(k)) \cosh(K(k)) \right) + \mathfrak{E}_\textnormal{int}
\]
with $ \mathfrak{E}_\textnormal{int} =  \mathfrak{E}^{(1)}_\textnormal{int} +  \mathfrak{E}^{(2)}_\textnormal{int} +  \mathfrak{E}^{(3)}_\textnormal{int}$. To control the error term $\mathfrak{E}^{(1)}_\textnormal{int}$, we decompose it as 
\begin{align*}
 \mathfrak{E}_\textnormal{int}^{(1)} & = \hbar\kappa \sum_{k\in\north} \lvert k\rvert g(k) \sum_{\alpha,\beta\in \Ikp} v_\alpha(k) v_\beta(k) \Bigg[ \sum_{\gamma\in\Ik} \sinh(K(k))_{\alpha,\gamma} \langle \Omega, c_\gamma(k) \mathfrak{E}_\beta (1,k) \Omega\rangle \\
 & \quad +\sum_{\gamma\in\Ik} \sinh(K(k))_{\beta,\gamma} \langle \Omega, \mathfrak{E}^*_\alpha(1,k) c^*_\gamma(k) \Omega\rangle +\langle \Omega, \mathfrak{E}^*_\alpha (1,k) \mathfrak{E}_\beta (1,k) \Omega\rangle\Bigg] \\ & =: \mathfrak{E}_\textnormal{int}^{(1,1)} +  \mathfrak{E}_\textnormal{int}^{(1,2)} +  \mathfrak{E}_\textnormal{int}^{(1,3)}  .
\end{align*}
Recall that $u_\alpha(k)^2 = \lvert \hat{k}\cdot\hat{\omega}_\alpha \rvert \leq 1$ and hence, by Proposition~\ref{prp:counting}, 
$v_\alpha(k) \leq \sqrt{\frac{C}{M}} u_\alpha(k) \leq \sqrt{\frac{C}{M}}$. Thus, using Proposition \ref{lem:bog1} and Cauchy-Schwarz, we find
\begin{align*} 
| \mathfrak{E}_\textnormal{int}^{(1,3)} | &\leq 
\Big| \hbar\kappa \sum_{k\in\north} \lvert k\rvert g(k) \sum_{\alpha,\beta\in \Ikp} v_\alpha(k) v_\beta(k) \langle \Omega, \mathfrak{E}^*_\alpha(1,k) \mathfrak{E}_\beta(1,k) \Omega\rangle \Big| \\
& \leq \hbar\kappa\sum_{k\in\north} \lvert k\rvert g(k) \sum_{\alpha \in \Ikp} \sqrt{\frac{C}{M}} \norm{\mathfrak{E}_\alpha(1,k) \Omega} \sum_{\beta \in \Ikp} \sqrt{\frac{C}{M}} \norm{\mathfrak{E}_\beta(1,k) \Omega}\\
& \leq C {\hbar} \sum_{k\in\north} \lvert k\rvert \hat{V}(k) \sum_{\alpha\in\Ikp} \norm{\Efrak_\alpha(1,k)\Omega}^2\,.
\intertext{(Recall that $|\Ikp| = I_k \leq M/2$.) With \eqref{eq:bogerrorbound}, we get}
| \mathfrak{E}_\textnormal{int}^{(1,3)} | & \leq \hbar \frac{C}{\nfrak^4} \sup_{\lambda \in [0,1]} \langle T_\lambda \Omega, (\Ncal+2)^{3} T_\lambda \Omega\rangle  \sum_{k\in\north} \lvert k\rvert \hat{V}(k)  e^{2\norm{K(k)}\HS} \Big[ \sum_{l\in \north} \norm{K(l)}\HS \Big]^2. 
\end{align*}
Lemma \ref{lem:inverses} and Proposition \ref{prp:particlenumber} imply that $|\mathfrak{E}_\textnormal{int}^{(1,3)}| \leq C \hbar / \nfrak^4$. The term $\mathfrak{E}_\textnormal{int}^{(1,1)}$ can be controlled similarly:
\begin{align*}
|\mathfrak{E}_\textnormal{int}^{(1,1)}|& \leq \Big\lvert \hbar\kappa \sum_{k \in\north} \lvert k\rvert g(k) \sum_{\alpha,\beta \in \Ikp} v_\alpha(k) v_\beta(k) \sum_{\gamma \in \Ik} \sinh(K(k))_{\alpha,\gamma} \langle \Omega, c_\gamma(k) \Efrak_\beta(k,1) \Omega \rangle \Big\rvert \\
& \leq \frac{C \hbar}{M} \sum_{k\in \north} \lvert k\rvert \hat{V}(k) \sum_{\alpha,\beta \in \Ikp} \Big\| \sum_{\gamma \in \Ik} \sinh(K(k))_{\alpha,\gamma} c^*_\gamma(k) \Omega\Big\| \norm{\Efrak_\beta(k,1)\Omega} \\
& \leq \frac{C \hbar}{M} \sum_{k\in \north} \lvert k\rvert \hat{V}(k) \sum_{\alpha,\beta \in \Ikp} \Big( \sum_{\gamma \in \Ik} \lvert \sinh(K(k))_{\alpha,\gamma}\rvert^2 \Big)^{1/2} \norm{ (\Ncal+1)^{1/2} \Omega} \norm{\Efrak_\beta(k,1)\Omega}\;;
\intertext{applying Cauchy-Schwarz in $\alpha$ and in $\beta$, using $|\Ikp| = I_k \leq M/2$ and \eqref{eq:bogerrorbound} we arrive at}
|\mathfrak{E}_\textnormal{int}^{(1,1)}|& \leq \frac{C\hbar}{\nfrak^2} \sup_{\lambda \in[0,\lambda]}\norm{ (\Ncal+2)^{3/2} T_\lambda\xi}  \sum_{k \in\north} \lvert k\rvert \hat{V}(k) e^{2\norm{K(k)}\HS} \sum_{l\in\north} \norm{K(l)}\HS\,.
\end{align*}
Again, Lemma \ref{lem:inverses} and Proposition \ref{prp:particlenumber} show that $|\mathfrak{E}_\textnormal{int}^{(1,1)}| \leq C \hbar / \nfrak^2$. Analogously, we obtain also $|\Efrak_\textnormal{int}^{(1,2)}| \leq C \hbar/ \nfrak^2$. Hence $|\Efrak_\textnormal{int}^{(1)}| \leq C \hbar/ \nfrak^2$. 

The error term $\Efrak_\textnormal{int}^{(2)}$ differs from $\Efrak_\textnormal{int}^{(1)}$ only in the replacement of the index set $\mathcal{I}_{k}^{+}$ by $\mathcal{I}_{k}^{-}$. Therefore, we find $\lvert\Efrak_\textnormal{int}^{(2)}\rvert \leq C \hbar/ \nfrak^2$. As for the error term $\mathfrak{E}_\textnormal{int}^{(3)}$, it also differs from $\Efrak_\textnormal{int}^{(1)}$ in the index set, some hermitian conjugations, and the appearance of a $\cosh$ instead of a $\sinh$. The estimates however remain valid and we also obtain $|\Efrak_\textnormal{int}^{(3)}| \leq C \hbar/ \nfrak^2$. 
\end{proof}

\subsection{Proof of the Main Theorem}
\begin{proof}[Proof of Theorem \ref{thm:main}]
Recall the definition \eqref{eq:Hcorr} of the correlation Hamiltonian and the decomposition \eqref{eq:QN0-def} of the quartic interaction $Q_N$. Combining the results of Section~\ref{sec:nonboson}, Section~\ref{sec:direx}, Section~\ref{sec:lin}, Proposition~\ref{prp:kin} and Proposition~\ref{prp:int}, we conclude that   
\[
\begin{split}& \langle \xi, \mathcal{H}_{\textnormal{corr}} \xi \rangle \\ &= \hbar\kappa \sum_{k\in \north} \lvert k\rvert \tr \left( (\D(k)+\W(k)) \sinh^2 (K(k)) + \Wt(k) \sinh(K(k)) \cosh(K(k)) \right) +  \mathfrak{E} \end{split}
\]
for an error $\mathfrak{E}$ such that 
\[
\lvert \mathfrak{E}\rvert \leq C \Big[ \frac{1}{N} + \frac{1}{M} + \frac{\hbar}{\nfrak^2} + \hbar N^{-\delta/2} \Big]
\]
with $\hbar = N^{-1/3}$ and $\nfrak = N^{1/3-\delta/2} M^{-1/2}$.

\medskip

To evaluate the main part of the expectation value explicitly, notice that by definition \eqref{eq:Kk} of $K(k)$ we have
\[
\sinh(K(k)) = \frac{1}{2} \left( \lvert S_1(k)^T\rvert - \lvert S_1(k)^T\rvert^{-1} \right), \quad \cosh(K(k)) = \frac{1}{2} \left( \lvert S_1(k)^T\rvert + \lvert S_1(k)^T\rvert^{-1} \right)\,.
\]
Notice also that $S_1(k) S_1(k)^T = \lvert S_1(k)^T\rvert^2$ and $ \left( \lvert S_1(k)^T\rvert^{-1} \right)^2 = S_2(k) S_2(k)^T$, where \[S_2(k) = \left( D(k) + W(k) - \tilde{W}(k)\right)^{-1/2} E(k)^{1/2}.\]
Consequently
\[
\begin{split}\sinh(K(k)) \cosh(K(k)) & = \frac{1}{4} \left( \lvert S_1(k)^T\rvert - \lvert S_1(k)^T\rvert^{-1} \right)^T \left( \lvert S_1(k)^T\rvert + \lvert S_1(k)^T\rvert^{-1} \right) \\
& = \frac{1}{4} \left( S_1(k) S_1(k)^T - S_2(k) S_2(k)^T \right).\end{split}
\]
Likewise
\[
\begin{split}\sinh^2(K(k)) & = \frac{1}{4} \left( \lvert S_1(k)^T\rvert - \lvert S_1(k)^T\rvert^{-1} \right)^T \left( \lvert S_1(k)^T\rvert - \lvert S_1(k)^T\rvert^{-1} \right) \\
& = \frac{1}{4} \left( S_1(k) S_1(k)^T + S_2(k) S_2(k)^T - 2 \id\right).\end{split}
\]
Now using the explicit form \eqref{eq:S1} of $S_1(k)$, $E(k)$, and $S_2(k)$, this can be simplified to yield
\begin{align}
\langle \xi , \mathcal{H}_{\textnormal{corr}} \xi \rangle &= \frac{\hbar\kappa}{4} \sum_{k\in \north} \lvert k\rvert \Big( \tr \left( \D(k)+\W(k)+\Wt(k)\right) S_1(k) S_1(k)^T \nonumber\\
&\hspace{2.7cm} + \tr \left( \D(k)+\W(k)-\Wt(k)\right) S_2(k) S_2(k)^T \Big) \nonumber\\
& \quad - \frac{\hbar\kappa}{2} \sum_{k\in \north} \lvert k\rvert \tr \big( \D(k)+\W(k) \big)  + \mathfrak{E}\nonumber\\
&= \hbar\kappa \sum_{k\in \north} \lvert k\rvert \left( \frac{1}{2} \tr E(k) - \frac{1}{2} \tr \big(\D(k)+\W(k)\big)\right)  + \mathfrak{E}\,. \label{eq:final}
\end{align}
We are left with evaluating the traces in \eqref{eq:final}.

\paragraph{Evaluation of the Traces.} For simplicity, we shall drop the $k$-dependence in the notation (we will restore it in \eqref{eq:fermigsenergy}). Recall the block diagonalization \eqref{eq:blockdiagonalization}, by which
\begin{equation}
\label{eq:trE}\begin{split}\frac{1}{2}\tr E & = \frac{1}{2}\tr \left[ \begin{pmatrix}
			d & 0 \\ 0 & d+2b
                     \end{pmatrix}^{1/2}
                     \begin{pmatrix}
                      d+2b & 0 \\ 0 & d
                     \end{pmatrix}
                     \begin{pmatrix}
                      d & 0 \\ 0 & d+2b
                     \end{pmatrix}^{1/2}\right]^{1/2}\\
& = \frac{1}{2}\tr \left[ d^{1/2} (d+2b) d^{1/2}\right]^{1/2} + \frac{1}{2}\tr \left[(d+2b)^{1/2} d (d+2b)^{1/2}\right]^{1/2}\\
& = \tr \left[ d^{1/2} (d+2b) d^{1/2}\right]^{1/2}, \end{split}\end{equation}
since $d^{1/2} (d+2b) d^{1/2}$ and $(d+2b)^{1/2} d (d+2b)^{1/2}$ have the same spectrum. To calculate this trace, notice that
\[
d^{1/2}(d+2b)d^{1/2} = d^2 + 2g \lvert \tilde u\rangle\langle \tilde u \rvert
\]                                                                                            is a rank-one perturbation of a diagonal operator, 
with diagonal part $d^2 = \diag(u_\alpha^4: \alpha=1,\ldots \ik)$ and with $\tilde u = ( v_1 u_1, \ldots , v_{\ik} u_{\ik} ) \in \Rbb^{\ik}$.

The resolvent of a matrix with rank-one perturbation can easily be calculated: For any invertible matrix $A \in \Cbb^{n\times n}$, and $x,y \in \Cbb^n$,
\[
(A+\lvert x\rangle\langle y\rvert)^{-1} = A^{-1} - \frac{A^{-1}\lvert x\rangle\langle y\rvert A^{-1}}{1+\langle y,A^{-1} x\rangle}
\]
whenever the right-hand side is well-defined. So for $\lambda \in [0,\infty)$ we find
\[
\left(d^2 + 2g\lvert \tilde u\rangle\langle \tilde u\rvert+\lambda^2\right)^{-1} = \left(d^2 + \lambda^2 \right)^{-1} - \frac{2g}{1+2g \sum_{\alpha=1}^{I_k} \frac{u_\alpha^2 v_\alpha^2}{u_\alpha^4 + \lambda^2}} \left\lvert w \right\rangle\left\langle  w\right\rvert\;,
\]
with $w \in \Rbb^{\ik}$ defined by $w_\alpha = u_\alpha v_\alpha(u_\alpha^4 + \lambda^2)^{-1}$.                                                                                                                                                                                                                                                                                                                                     By functional calculus, for any non-negative operator $A$ we have the identity
\[
\sqrt{A}= \frac{2}{\pi}\int_0^\infty  \left(\id - \frac{\lambda^2}{A+\lambda^2} \right)\di\lambda\,.
\]
Using the integral identity twice we find
\begin{align*}
 \tr \left[ d^{1/2} (d+2b) d^{1/2}\right]^{1/2}
& = \frac{2}{\pi} \int_0^{\infty}
 \tr \left( \id - \frac{\lambda^2}{d^2 + \lambda^2}\right) \di\lambda+ \frac{2}{\pi} \int_0^{\infty}   \frac{\lambda^2\, 2g}{1+2g\sum_{\alpha=1}^{\ik} \frac{u_\alpha^2 v_\alpha^2}{u_\alpha^4+\lambda^2}}\norm{w}^2\di\lambda\\
 & = \tr d + \frac{2}{\pi} \int_0^\infty   \frac{\lambda^2}{1+2g\sum_{\alpha=1}^{\ik} \frac{u_\alpha^2 v_\alpha^2}{u_\alpha^4+\lambda^2}} 2g \sum_{\alpha=1}^{\ik} \frac{u_\alpha^2 v_\alpha^2}{(u_\alpha^4+\lambda^2)^2}\di \lambda
\,.
\end{align*}
Restoring the $k$-dependence, let 
\[
f_k (\lambda) := 1+2g(k)\sum_{\alpha=1}^{I_k} \frac{u_\alpha(k)^2 v_\alpha(k)^2}{u_\alpha(k)^4+\lambda^2}\, . 
\]
Integrating by parts (noting that the boundary terms vanish since $\log f_k(\lambda) \sim 1/\lambda^2$), we find 
\begin{align*}
\tr \left[ d^{1/2} (d+2b) d^{1/2}\right]^{1/2} = \frac{1}{2}\tr D - \frac{1}{\pi} \int_0^\infty  \lambda \frac{f_k'(\lambda)}{f_k (\lambda)}\di \lambda
& = \frac{1}{2}\tr D + \frac{1}{\pi} \int_0^\infty  \log f_k (\lambda)\di \lambda\,.
\end{align*}
Thus, inserting in \eqref{eq:trE} and then in \eqref{eq:final}, we obtain 
\begin{equation}\label{eq:fermigsenergy}\langle \xi,\mathcal{H}_{\textnormal{corr}} \xi\rangle = \hbar\kappa \sum_{k\in\north} \lvert k\rvert \left( \frac{1}{\pi}\int_0^\infty  \log f_k (\lambda)\di\lambda - g(k) \sum_{\alpha=1}^{I_k}v_\alpha(k)^2 \right) + \mathfrak{E} \end{equation}
where we used that according to \eqref{eq:Wdef} $\tr \W = 2 \tr b =2g\sum_{\alpha=1}^{\ik} v_\alpha^2$. 

\paragraph{Convergence to the Gell-Mann--Brueckner formula.} To conclude the proof of Theorem \ref{thm:main}, we show that \eqref{eq:fermigsenergy} reproduces the Gell-Mann--Brueckner formula as stated in the theorem. Let
\[
\tilde{f}_k (\lambda) := 1+ 4\pi g(k)\left(1-\lambda \arctan\left( \frac{1}{{\lambda}} \right) \right).
\]
We claim that
\begin{equation}\label{eq:gellmann} \begin{split}& \Big\lvert \Big( \frac{1}{\pi}\int_0^\infty  \log f_k (\lambda)\di\lambda  - g(k) \sum_{\alpha=1}^{I_k}v_\alpha(k)^2 \Big) - \Big( \frac{1}{\pi}\int_0^\infty  \log \tilde{f}_k (\lambda)\di\lambda - g(k)\pi \Big) \Big\rvert \\
& \hspace{6.7cm} \leq C \left( M^{1/4 }N^{-\frac{1}{6}+\frac{\delta}{2}} + N^{-\frac{\delta}{2}} + M^{-\frac{1}{4}}N^{\frac{\delta}{2}} \right)\,.\end{split}
\end{equation}
Since $\log \tilde{f}_k (\lambda) = g(k) = 0$ for all $\lvert k\rvert > R$, inserting \eqref{eq:gellmann} into \eqref{eq:fermigsenergy} we obtain 
\[
\langle \xi,\mathcal{H}_{\textnormal{corr}} \xi\rangle =  \hbar\kappa\!\!\sum_{k\in\north}\!\! \lvert k\rvert\! \left( \frac{1}{\pi}\!\int_0^\infty\! \log\! \left[ 1 + 4\pi g(k) \left(1- {\lambda} \arctan \left(\frac{1}{{\lambda}}\right)\right) \right]\di\lambda - g(k)\pi \right) + \tilde{\mathfrak{E}}
\]
with an error 
\[
\begin{split} \lvert \tilde{\mathfrak{E}}\rvert & \leq C \Big[ N^{-1} + M^{-1} + N^{-1+\delta} M\Big] + C\hbar \Big[  M^{1/4 }N^{-\frac{1}{6}+\frac{\delta}{2}} + N^{-\frac{\delta}{2}} + M^{-\frac{1}{4}}N^{\frac{\delta}{2}}  \Big]\,. \end{split}
\]
Recalling that $M = N^{1/3 + \epsilon}$ and optimizing over $0 < \epsilon < 1/3$, $0 < \delta < 1/6 - \epsilon /2$, we find (with $\epsilon = 1/27$ and $\delta = 2/27$), that $\lvert \tilde{\mathfrak{E}} \rvert \leq C \hbar N^{-1/27}$. Replacing the sum over $k \in \north$ by $1/2$ times the sum over $k \in \Zbb^3$, and replacing $\kappa = \kappa_0 + \Ocal(N^{-1/3})$ by $\kappa_0 = (3/4\pi)^{1/3}$ (using also the Lipschitz continuity of the logarithm), we arrive at \eqref{eq:mainresult}.

\medskip

We still have to show \eqref{eq:gellmann}. To this end, recall from Proposition \ref{prp:counting} that, in terms of the surface measure $\sigma$ of the patch $p_\alpha$ on the unit sphere, we have
\[
v_\alpha(k)^2 = \sigma(p_\alpha) u_\alpha(k)^2\Big( 1 + \mathcal{O}\Big(\sqrt{M}N^{-\frac{1}{3}+\delta}\Big) \Big)\,.
\]
Thus
\[
f_k(\lambda) = 1+2 g(k) \sum_{\alpha=1}^{I_k} \frac{u_\alpha(k)^2 v_\alpha(k)^2}{u_\alpha(k)^4+\lambda^2} = 1+2 g(k) \sum_{\alpha=1}^{I_k}\sigma(p_\alpha)\frac{u_\alpha(k)^4}{u_\alpha(k)^4+\lambda^2} + \mathcal{O}\left(\sqrt{M}N^{-\frac{1}{3}+\delta}\right)\,.
\]
We approximate this Riemann sum by the corresponding surface integral over a subset of $\Sbb^2$.
We write $\cos \theta_\alpha = \hat{k}\cdot\hat{\omega}_\alpha = u_\alpha(k)^2$ and $\varphi_\alpha$ for the azimuth of $\omega_\alpha$. We parametrize the surface integrals in the same spherical coordinate system\footnote{This is not the spherical coordinate system used to introduce patches in the first place, where inclination was measured with respect to $e_3$.} (i.\,e., the inclination $\theta$ is measured with respect to $k$, and the azimuth $\varphi$ in the plane perpendicular to $k$). We estimate every summand by
\begin{align*}
& \left\lvert \int_{p_\alpha} \frac{\cos^2 \theta}{\cos^2 \theta + \lambda^2} \di\sigma - \sigma(p_\alpha) \frac{\cos^2 \theta_\alpha}{\cos^2 \theta_\alpha +\lambda^2} \right\rvert\\
& \leq  \int_{p_\alpha} \left\lvert \frac{\cos^2 \theta}{\cos^2 \theta + \lambda^2}  - \frac{\cos^2 \theta_\alpha}{\cos^2 \theta_\alpha +\lambda^2} \right\rvert \di\sigma\\
&\leq  \iint_{\hat{\omega}(\theta,\varphi) \in p_\alpha} \left\lvert \frac{\cos^2 \theta}{\cos^2 \theta + \lambda^2}  - \frac{\cos^2 \theta_\alpha}{\cos^2 \theta_\alpha +\lambda^2} \right\rvert \lvert \sin \theta\rvert \di\theta \di\varphi\;.
\end{align*}
Bounding the difference using the supremum of the derivative
\begin{align*}
\left\lvert \int_{p_\alpha} \frac{\cos^2 \theta}{\cos^2 \theta + \lambda^2} \di\sigma - \sigma(p_\alpha) \frac{\cos^2 \theta_\alpha}{\cos^2 \theta_\alpha +\lambda^2} \right\rvert 
& \leq \sup_{\hat{\omega}(\theta,\varphi) \in p_\alpha} \left\lvert \frac{\di}{\di\theta} \frac{\cos^2 \theta}{\cos^2 \theta + \lambda^2} \right\rvert \frac{C}{\sqrt{M}} \sigma(p_\alpha)\, ,\tagg{supest}
\end{align*}
where we also used that, since the partition is diameter bounded,
$\sup_{(\theta,\varphi) \in p_\alpha} \lvert \theta-\theta_\alpha\rvert \leq C/\sqrt{M}$.
The derivative is bounded by
\[
\left\lvert \frac{\di}{\di\theta} \frac{\cos^2 \theta}{\cos^2 \theta + \lambda^2} \right\rvert \leq 2 \frac{\lambda^2}{\cos^2 \theta + \lambda^2 } \frac{\lvert\cos \theta\rvert \lvert \sin \theta\rvert}{\cos^2 \theta + \lambda^2 } \leq \frac{2}{\lvert \cos \theta\rvert}\,.
\]
Recall that $\alpha \in \{1,2,\ldots,I_k\}$, which by definition of the index set implies $\cos \theta_\alpha > N^{-\delta}$. The bound $\lvert \theta - \theta_\alpha \rvert \leq CM^{-1/2}$ implies that also $\cos \theta > CN^{-\delta}$. 
So \eqref{eq:supest} implies
\[
\left\lvert \int_{p_\alpha} \frac{\cos^2 \theta}{\cos^2 \theta + \lambda^2} \di\sigma - \sigma(p_\alpha) \frac{\cos^2 \theta_\alpha}{\cos^2 \theta_\alpha +\lambda^2} \right\rvert \leq C \frac{N^\delta}{M^{3/2}} \,.
\]
Since the number of patches is at most $M$ we conclude that
\[
\Big| \sum_{\alpha=1}^{I_k}\sigma(p_\alpha) \frac{u_\alpha(k)^4}{u_\alpha(k)^4+\lambda^2} - \int_{\Sbb^2_\textnormal{reduced}} \frac{\cos^2 \theta}{\cos^2 \theta + \lambda^2} \di\sigma \Big| \leq C \frac{N^{\delta}}{\sqrt{M}}\,.
\]
Here we wrote $\Sbb^2_\textnormal{reduced}$ for a unit half-sphere excluding the collar of width $N^{-\delta}$ and the corridors $p_\textnormal{corri}$.
Since $\cos^2 \theta/(\cos^2 \theta + \lambda^2) \leq 1$ we can compare to the integral over the whole unit half-sphere $\Sbb^2_\textnormal{half}$, 
\[
\left\lvert \int_{\Sbb^2_\textnormal{reduced}} \frac{\cos^2 \theta}{\cos^2 \theta + \lambda^2} \di\sigma  - \int_{\Sbb^2_\textnormal{half}} \frac{\cos^2 \theta}{\cos^2 \theta + \lambda^2} \di\sigma \right\rvert \leq C \left[ N^{-\delta} + M^{1/2} N^{-1/3} \right] \,.
\]
The surface integral over the unit half-sphere is easy to compute,
\begin{equation}\label{eq:integralidentity}\begin{split}\int_{\Sbb^2_\textnormal{half}}\!\! \frac{\cos^2 \theta}{\cos^2 \theta + \lambda^2} \di\sigma & = \int_{0}^{\pi/2}\!\! \di \theta \sin(\theta) \frac{\cos(\theta)^2}{\cos(\theta)^2+\lambda^2} \int_0^{2\pi}\!\! \di \varphi   = 2\pi \Big(1-{\lambda} \arctan\Big( \frac{1}{{\lambda}} \Big)\Big)\,.\end{split}\end{equation}
Since $g(k) = \kappa \hat{V}(k)$ is uniformly bounded (by assumption on $\hat{V}$), we conclude that
\[\left\lvert f(\lambda) - \tilde{f}(\lambda) \right\rvert \leq C \Big( \sqrt{M}N^{-\frac{1}{3}+\delta} + N^{-\delta} + \frac{N^\delta}{\sqrt{M}} \Big)\,.\]
Since for $x\geq 0$ the function $x \mapsto \log(1+x)$ has Lipschitz constant $1$ we get 
\[\left\lvert \log f(\lambda) - \log \tilde{f}(\lambda) \right\rvert \leq C \Big( \sqrt{M}N^{-\frac{1}{3}+\delta} + N^{-\delta} + \frac{N^\delta}{\sqrt{M}} \Big)\,.\]
It remains to compare the integrals over $\lambda$. Since $\log(1+x) \leq x$ for all $x \geq 0$, we have
\[\left\lvert \log f(\lambda) \right\rvert \leq 2g(k) \sum_{\alpha=1}^{I_k} \sigma(p_\alpha) \frac{u_\alpha(k)^4}{u_\alpha(k)^4 + \lambda^2} \leq 2g(k) \sum_{\alpha=1}^{I_k} \frac{C}{M} \frac{1}{\lambda^2} \leq \frac{C}{\lambda^2}\,,\]
where we used the two inequalities $0 \leq u_\alpha(k)^4 \leq 1$.
% , and the constant $C$ is independent of $\alpha$, $k$, and $N$.
Using the integral identity \eqref{eq:integralidentity} it is easy to see that also
\[\left\lvert \log \tilde{f}(\lambda) \right\rvert \leq 4\pi g(k) \Big| 1-{\lambda}\arctan\left(\frac{1}{{\lambda}}\right) \Big| \leq \frac{C}{\lambda^2}\,.\]
Using the last three estimates, by splitting the integration at some $\Lambda > 0$ to be optimized in the last step, we obtain
\begin{align*}
& \left\lvert \frac{1}{\pi} \int_0^\infty \log f(\lambda)\di\lambda -  \frac{1}{\pi} \int_0^\infty  \log \tilde f(\lambda)\di\lambda \right\rvert \\
& \leq \frac{1}{\pi} \int_0^{\Lambda} \left\lvert \log f(\lambda) - \log \tilde f(\lambda) \right\rvert\di\lambda + \frac{1}{\pi} \int_{\Lambda}^\infty  \frac{8\pi g(k)}{\lambda^2}\di\lambda \\
& \leq C \Lambda\left( \sqrt{M}N^{-\frac{1}{3}+\delta} + N^{-\delta} + \frac{N^\delta}{\sqrt{M}} \right) + C \Lambda^{-1}\\
& \leq C \left( M^{1/4 }N^{-\frac{1}{6}+\frac{\delta}{2}} + N^{-\frac{\delta}{2}} + M^{-\frac{1}{4}}N^{\frac{\delta}{2}} \right)\,.\tagg{err}
\end{align*}

By a similar (simpler) Riemann sum argument we obtain
\begin{align*}-g(k)\sum_{\alpha=1}^{I_k} v_\alpha^2(k) & = -g(k) \sum_{\alpha=1}^{I_k} \sigma(p_\alpha)  u_\alpha^2(k) \left( 1 + \mathcal{O}\left(\sqrt{M}N^{-\frac{1}{3}+\delta}\right) \right) \\
& = - g(k)\pi  + \Ocal\left(\sqrt{M}N^{-\frac{1}{3}+\delta} + N^{-\delta}\right)\end{align*}
where the error is obviously smaller than \eqref{eq:err}. This concludes the proof of \eqref{eq:gellmann}.
 \end{proof}

\section{Counting Particle-Hole Pairs in Patches}\label{sec:continuumapprox}

In this section we prove Proposition \ref{prp:counting}, which is concerned with estimating the number 
\begin{equation}
\label{eq:number}
n_{\alpha,k}^2 = \sum_{\substack{p \in B_\textnormal{F}^c \cap B_\alpha\\h \in B_\textnormal{F} \cap B_\alpha}} \delta_{p-h,k}
\end{equation}
of particle--hole pairs with momentum $p-h = k$ in patch $\alpha$ under the condition that $\hat{\omega}_\alpha\cdot \hat{k} \geq N^{-\delta}$. Recall that $p_\alpha$ is a patch on the unit sphere, and $P_\alpha = k_\textnormal{F} p_\alpha$.

%%%%%%%%%%%%%%%%%%%%%%%%%%%%%%%%%%%%%%%%%%%%%%%%
To illustrate the idea of the proof we first consider $k = e_3 = ( 0, 0, 1)$. Consider the lattice lines $L_{n} := \{n+tk: t\in\Rbb\}$, $n \in \Zbb e_1 + \Zbb e_2 \subset \Zbb^3$. For each lattice line $L_{n}$ intersecting $P_\alpha$ there is exactly one contribution to the sum  \eqref{eq:number}---
in fact, a simple geometric consideration shows that since $N^{-\delta} \gg M^{-1/2}$ (which is implied by the assumption $\delta \leq \frac{1}{6} - \frac{\epsilon}{2}$) a line never enters the Fermi ball at such a small angle (measured with respect to the tangent plane of the Fermi surface) that it would cross the surface immediately a second time and leave the Fermi ball without picking up a pair (i.\,e., the situation of Figure \ref{fig:smallangle} is excluded due to $\hat{\omega}_\alpha\cdot \hat{k} \geq N^{-\delta}$).
\begin{figure}
\centering
 \begin{minipage}[t]{0.45\textwidth}
 \centering
\begin{tikzpicture}
\clip (-3.4,3) rectangle (3.4,6.5);

\draw(0,-6) circle (11.9cm);
\draw[very thick] (0,-6) circle (11cm);
\draw (0,-6) circle (10.1cm);

\draw[->] (-2.4,4.85)--(1.5,5.05);
\draw[dashed] (-6.3,4.65)--(5.4,5.25);
	  \draw [fill]  (-2.4,4.85) circle [radius=0.04];
	  	  \draw [fill]  (1.5,5.05) circle [radius=0.04];
\node [above] at (+.9,5) {$k$};

\draw[thick,->] (0,-6)--(0,5);
   \node [right] at (0,4.7) {$\omega_\alpha$};

 \draw (0,-6)--(-3,6.5);
\draw (0,-6)--(3,6.5);
 \end{tikzpicture}
   \caption{Fermi surface in bold. A line (dashed) intersects the patch but no particle--hole pair is picked up because both ends of $k$ would be outside the Fermi ball. This could only happen if $k$ was very long (excluded due to $k \in \supp\hat{V}$) or almost tangent to the Fermi surface (excluded by $\hat{\omega}_\alpha\cdot \hat{k} \geq N^{-\delta}$).}\label{fig:smallangle}
 \end{minipage}\hfill
 %%%%%%%%%%%%%%%%%%%%%%%%%%%%%%%%%%%%%%%%%
 \begin{minipage}[t]{0.45\textwidth}
 \centering
\begin{tikzpicture}
\clip (-3.4,3) rectangle (3.4,6.5);

\draw(0,-6) circle (11.9cm);
\draw[very thick] (0,-6) circle (11cm);
\draw (0,-6) circle (10.1cm);

\draw[->] (2,4.2)--(2.9,5.3);
\draw[dashed] (1.1,3.1)--(3.8,6.4);
	  \draw [fill=white]  (2,4.2) circle [radius=0.04];
	  	  \draw [fill]  (2.9,5.3) circle [radius=0.04];
\node [above] at (+2.45,4.8) {$k$};

\draw[thick,->] (0,-6)--(0,5);
   \node [right] at (0,4.7) {$\omega_\alpha$};

 \draw (0,-6)--(-3,6.5);
\draw (0,-6)--(3,6.5);
 \end{tikzpicture} 
 \caption{Fermi surface in bold. A line (dashed) intersects the patch but no particle--hole pair is picked up because $k$ points from a hole momentum $h$ in the patch out into a corridor between patches. This can happen only for hole momenta near the boundary. Since the area of a patch grows faster with $N$ than its boundary length, the number of such lines is an error term of lower order.}
\label{fig:boundaryterm}
 \end{minipage}
\end{figure}
%%%%%%%%%%%%%%%%%%%%%%%%%%%%%%%%%%%%%%%%%%%%%%%
There is only one exception to this argument: A lattice line might cross the surface at a distance less than $R$ from a side of the patch. Depending on the angle it could then leave the patch to the side before picking up a pair, as represented in Figure \ref{fig:boundaryterm}. However, the number of such lines is of the same order as the length of the boundary. We can thus absorb this number in the circumference error from the Gauss argument (see next paragraph). 

So to leading order $n_{\alpha,k}^2$ is the number of lines $L_n$ intersecting $P_\alpha$. The number of such lines is equal to the number of lines intersecting the projection $P_\alpha^k$ of $P_\alpha$ to the plane spanned by $e_1$ and $e_2$; see Figure \ref{fig:projection}.
\begin{figure}
\centering
 \begin{minipage}[t]{0.45\textwidth}
 \centering
\begin{tikzpicture}
\draw [use as bounding box,white] (-1,-.5) rectangle (5,5.5);
  \draw[step=.5cm,white!85!black,thin] (-1,-.5) grid (5,5.5);
\coordinate (center) at (0,0);
\cercle{center}{4cm}{-7.9}{105.3};
\boldcercle{center}{4cm}{50}{30};
  \draw [thick,->] (0,-.5)--(0,5);
  \draw [thick,->] (-1,0)--(4.5,0);  
  \draw[dashed] (.7,4)--(.7,0);
  \draw[dashed] (2.6,3.1)--(2.6,0);
  \draw (1,5.5)--(1,-.5);
  \draw (1.5,5.5)--(1.5,-.5);
  \draw (2,5.5)--(2,-.5);
  \draw (2.5,5.5)--(2.5,-.5);  
  \draw[very thick] (.7,.05)--(2.6,0.05);  
   \draw [thick,->] (4,2.5)--(4,3);   
   \node [right] at (4,2.75) {$k$};
   \node [above] at (2,3) {$P_\alpha$};
 \end{tikzpicture}
   \caption{The number of lattice lines through the patch is the same as the number of lattice lines through the projection of the patch along $k$ onto the plane spanned by $e_1$ and $e_2$.}\label{fig:projection}
 \end{minipage}\hfill
 %%%%%%%%%%%%%%%%%%%%%%%%%%%%%%%%%%%%%%%%%
 \begin{minipage}[t]{0.45\textwidth}
 \centering
 \begin{tikzpicture} 
\draw [use as bounding box,white] (-1,-.5) rectangle (5,5.5);
  \draw[step=.5cm,white!85!black,thin] (-1,-.5) grid (5,5.5);
\coordinate (center) at (0,0);
\cercle{center}{4cm}{-7.9}{105.3};
\boldcercle{center}{4cm}{50}{30};
  \draw [thick,->] (0,-.5)--(0,5);
  \draw [thick,->] (-1,0)--(4.5,0);  
  \draw [thick,->] (4,3)--(4.5,4.5);
  \node [right] at (4.2,3.6) {$k$};  
  \coordinate (A) at (-1,-2/3);
  \coordinate (B) at (3,-2); 
  
\node [below] at (.8,0) {$P_\alpha^k$}; 
  
  \clip (-1,0) rectangle (6,6);
  
  \foreach \x in {4,4.5,5}
    \foreach \y in {1.5,2,2.5}
    {
	  \coordinate (C) at (\y,\x);
	  \draw (C)--($(A)!(C)!(B)$);
	  
	  \draw [fill] (C) circle [radius=0.04];
	  \coordinate (Cp) at ($(C)+(-.5,-1.5)$);
	  \draw [fill=white] (Cp) circle [radius=0.04];
    }
    \coordinate (C) at (1,4);
    \draw (C)--($(A)!(C)!(B)$);
    \draw [fill] (C) circle [radius=0.04];
    \coordinate (Cp) at ($(C)+(-.5,-1.5)$);
	  \draw [fill=white]  (Cp) circle [radius=0.04];
    
    \coordinate (C) at (1,4.5);
    \draw (C)--($(A)!(C)!(B)$);
    \draw [fill] (C) circle [radius=0.04];
    \coordinate (Cp) at ($(C)+(-.5,-1.5)$);
	  \draw [fill=white]  (Cp) circle [radius=0.04];
    
     \coordinate (C) at (2.5,3.5);
     \draw (C)--($(A)!(C)!(B)$);
     \draw [fill] (C) circle [radius=0.04];
     \coordinate (Cp) at ($(C)+(-.5,-1.5)$);
	  \draw [fill=white]  (Cp) circle [radius=0.04];
     
    \coordinate (C) at (3,4.5);
    \draw (C)--($(A)!(C)!(B)$);
    \draw [fill] (C) circle [radius=0.04];
    \coordinate (Cp) at ($(C)+(-.5,-1.5)$);
	  \draw [fill=white]  (Cp) circle [radius=0.04];
    
  \coordinate (Ap) at ($(A)!(.7,4)!(B)$);
  \coordinate (Bp) at ($(A)!(2.6,3.1)!(B)$);
  \draw[dashed] (.7,4)--($(A)!(.7,4)!(B)$);
  \draw[dashed] (2.6,3.1)--($(A)!(2.6,3.1)!(B)$);
   \draw[very thick] (-.64,.05)--(1.6,.05);
 \end{tikzpicture}
 \caption{Particles and holes are indicated by black and white dots, respectively; they are paired along lines parallel to $k$.
The number of pairs per line is given by the greatest common divisor $\gcd(k_1,k_2,k_3)$ (here $ = 1$).}
\label{fig:projectpatch}
 \end{minipage}
\end{figure}
%%%%%%%%%%%%%%%%%%%%%%%%%%%%%%%%%%%%%%%%%%%%%%%
To count we use Gauss' classical argument (in two dimensions):
\[\left\lvert \left\{ L_{n} : n \in \Zbb e_1 + \Zbb e_2\right\} \cap P_{\alpha}^{k} \right\rvert = \lebesgue\left( P_{\alpha}^{k} \right) + \mathcal{O}\left(\text{circumference of }P_{\alpha}^{k}\right)\,,\]
where $\lebesgue$ is the two-dimensional Lebesgue measure in the plane. Hence we conclude that to leading order, $n_{\alpha,k}^2 = \lebesgue\left( P_\alpha^k\right)$ if $k = e_3$.

If $k = (0, 0, k_3)$ then for every lattice line there are $k_3$ contributing pairs. As illustrated in Figure \ref{fig:projectpatch}, for the general case we have to take into account that the distance of lattice points along the lines changes, and the density of intersection points in the $e_1$-$e_2$-plane changes.
\begin{proof}[Proof of Proposition \ref{prp:counting}]
We are going to prove that, assuming $\delta \leq \frac{1}{6} - \frac{\epsilon}{2}$ and $\alpha \in \Ikp$, the number of particle--hole pairs with momentum $k$ in patch $B_\alpha$ is
\begin{equation}\label{eq:rescounting}n_{\alpha,k}^2 = u_\alpha(k)^2 k_\textnormal{F}^2 \sigma(p_\alpha) \lvert k\rvert \left( 1 + \mathcal{O}\left(\sqrt{M}N^{-\frac{1}{3}+\delta}\right) \right)\,.\end{equation}
The statement of the proposition then follows immediately.

\bigskip

Let $k = (k_1, k_2, k_3)$, and consider a patch $P_\alpha$. 
Possibly reflecting at coordinate planes, we can assume that $k_1$, $k_2$, and $k_3$ are all non-negative, and without loss of generality we assume $k_3 \neq 0$ (if $k_3=0$ we would project onto another coordinate plane). Let $P_\alpha^k$ the projection of $P_\alpha$ along $k$ onto $\Rbb^2 \times \{0\}$, the plane spanned by $e_1$ and $e_2$.

\medskip

First we calculate $\lebesgue\left(P_\alpha^k\right)$. Consider the lines $\{k_\textnormal{F} \hat{\omega}(\theta,\varphi) + t k: t \in \Rbb\}$; their intersection with $\Rbb^2 \times \{0\}$ is at $t = -k_\textnormal{F}\hat{\omega}(\theta,\varphi)_3/k_3$; so
\[P_\alpha^k := \left\{  (x(\theta,\varphi), y(\theta,\varphi) , 0) = k_\textnormal{F} \hat{\omega}(\theta,\varphi) - \frac{k_\textnormal{F}}{k_3} \hat{\omega}(\theta,\varphi)_3 \, k: \hat{\omega}(\theta,\varphi) \in p_\alpha\right\} \subset \Rbb^2 \times \{0\}\,.\]
Writing $\Phi(\theta,\varphi) = (x(\theta,\varphi), y(\theta,\varphi))$, we find that $P_\alpha^{k}$ has two-dimensional Lebesgue measure
\[
\lebesgue\left(P_\alpha^k\right) = \int_{P_\alpha^k} \di x \di y = \int_{p_\alpha} \lvert \det D\Phi(\theta,\varphi) \rvert \di\theta \di \varphi\,.
\]
Using $\hat{\omega}(\theta,\varphi) = (\sin \theta \cos \varphi , \sin\theta \sin \varphi , \cos \theta )$ it is easy to calculate the Jacobi determinant
\[\begin{split}\lvert \det D\Phi(\theta,\varphi) \rvert & = k_\textnormal{F}^2 \frac{\lvert \sin \theta\rvert}{k_3} \lvert k_1 \sin \theta \cos \varphi + k_2 \sin \theta \sin \varphi + k_3 \cos \theta \rvert  = k_\textnormal{F}^2 \frac{\lvert \sin \theta\rvert}{k_3} \lvert k\cdot \hat{\omega}(\theta,\varphi) \rvert\,.\end{split}\]
Since the patch is diameter bounded we have $\lvert k\cdot \hat{\omega}(\theta,\varphi) \rvert = \lvert k\cdot\hat\omega_\alpha\rvert + \Ocal(M^{-1/2})$; and using $\lvert k\cdot \hat{\omega}_\alpha \rvert \geq N^{-\delta}$ to convert the additive error into a multiplicative error, this implies
\begin{equation}\label{eq:mu}\lebesgue\left( P_\alpha^k \right) = \frac{k_\textnormal{F}^2}{k_3} \int_{p_\alpha} \lvert k\cdot \hat{\omega}(\theta,\varphi) \rvert\, \lvert \sin \theta\rvert \di \theta \di\varphi = \frac{k_\textnormal{F}^2}{k_3} \lvert k\cdot \hat\omega_\alpha \rvert \left( 1+\Ocal\left(M^{-1/2} N^{\delta}\right)\right) \sigma(p_\alpha)\,.\end{equation}

We now determine the distance between neighboring lattice points along every line
\[L_{n} := \left\{ n+tk: t \in \Rbb \right\}\quad \text{where } n \in \Zbb^3\,.\]
Let $p := \gcd(k_1,k_2,k_3)$ be the greatest common divisor of the components of $k$. It is not difficult to see that the distance between neighboring lattice points on each line $L_n$ is $\lvert k\rvert/p$. Given a line $L_n$ intersecting $P_\alpha$, let $h \in L_{n} \cap B_\textnormal{F}$ be the lattice point closest to $P_{\alpha}$. Then on the line segment $\left\{ h+tk: t \in (0,1] \right\}$ there are $p$ lattice points; by shifting along the line, these correspond to $p$ particle--hole pairs contributing to $n_{\alpha,k}^2$. We conclude that $n_{\alpha,k}^2$ is to leading order the number of lattice lines $L_{n}$ intersecting $P_\alpha^{k}$, multiplied with $\gcd(k_1,k_2,k_3)$.

\medskip

We now determine how many lattice lines run through $P_\alpha^{k}$. Intersecting $L := \bigcup_{n \in \Zbb^3} L_{n}$ with $\Rbb^2 \times \{0\}$ we find $t= -n_3/k_3$. So
\[L \cap \left( \Rbb^2\times \{0\} \right) = \left\{ \left( n_1 - n_3 \frac{k_1}{k_3} ,  n_2 - n_3 \frac{k_2}{k_3} , 0 \right) : n \in \Zbb^3 \right\}\,.\]
This can be seen as the two-dimensional square lattice $\Zbb^2$ (the translates of the unit square indexed by $n_1$ and $n_2$) and a point pattern repeated in every lattice translation of the unit square. As soon as $n_3 k_1/k_3$ and $n_3 k_2/k_3$ simultaneously become integer, we start repeating the point pattern in another translate of the unit square. So the number of points in the unit square is the smallest integer $n_3$ such that both $n_3 k_1/k_3$ and $n_3 k_2/k_3$ are integer. We claim that this is $k_3/p$.

To prove this claim, consider the fraction $k_1/k_3$. Obviously $n_3 k_1/k_3$ is integer if and only if $n_3$ is a multiple of $k_3 / \gcd(k_1,k_3)$. Similarly $n_3 k_2/k_3$ is integer if and only if $n_3$ is a multiple of $k_3 / \gcd(k_2,k_3)$. So the number of points in the unit square is given by the least common multiple,
\[\#\text{points in unit square} = \lcm\left(\frac{k_3}{\gcd(k_1,k_3)},\frac{k_3}{\gcd(k_2,k_3)} \right).\]
From the standard identity $\gcd(a,b)\lcm(a,b)=\lvert ab\rvert$ for all $a,b \in \Zbb$ we get
\begin{align*}& \lcm\left(\frac{k_3}{\gcd(k_1,k_3)},\frac{k_3}{\gcd(k_2,k_3)} \right)  = \frac{k_3^2}{ \gcd(k_1,k_3)\gcd(k_2,k_3)\gcd\left(\frac{k_3}{\gcd(k_1,k_3)} , \frac{k_3}{\gcd(k_2,k_3)} \right)}
\intertext{using twice the fact that $m \gcd(a,b) = \gcd(ma,mb)$ for all $m\in \Nbb$;  then the same fact in inverse direction with $m=k_3$; then the fact $\gcd(a,b,c) = \gcd(a,\gcd(b,c))$ and the analogous identity for four integers}
& = \frac{k_3^2}{ \gcd\left({k_3}{\gcd(k_2,k_3)} , {k_3}{\gcd(k_1,k_3)} \right)}
 = \frac{k_3}{ \gcd\left({\gcd(k_2,k_3)} , {\gcd(k_1,k_3)} \right)} = \frac{k_3}{\gcd(k_1,k_2,k_3)}\,.
\end{align*}

In extension of Gauss' argument, the number of lines intersecting $P_\alpha^{k}$ is equal to the Lebesgue measure of $P_\alpha^{k}$ times the number of intersection points per unit square.
We thus conclude that
\begin{equation}\label{eq:nalpha}n_{\alpha,k}^2 = \lebesgue\left(P_\alpha^{k}\right) k_3 + \mathfrak{e}_{\alpha,k}\,.\end{equation}
The error term $\mathfrak{e}_{\alpha,k}$ is proportional to the circumference of $P_\alpha^{k}$, times the number of lines per unit square. Consider a patch that is not a spherical cap (the estimate for the two spherical caps works analogously); its circumference consists of four pieces. The first piece is parametrized by $\gamma(\varphi) := \Phi(\theta_\alpha+\Delta\theta_\alpha,\varphi)$, and has length
\[\int_{\varphi_\alpha-\Delta\varphi_\alpha}^{\varphi_\alpha+\Delta\varphi_\alpha} \lvert \dot\gamma(\varphi)\rvert \di \varphi = 2\Delta\varphi_\alpha k_\textnormal{F} \lvert \sin\left(\theta_\alpha+\Delta\theta_\alpha \right) \rvert = \Ocal\left(\frac{k_\textnormal{F}}{\sqrt{M}}\right)\,.\]
The second piece, parametrized by $\varphi \mapsto \Phi(\theta_\alpha-\Delta\theta_\alpha,\varphi)$, is of the same order.
The third piece is parametrized by $\tilde{\gamma}(\theta) := \Phi(\theta,\varphi_\alpha+\Delta\varphi_\alpha)$. By straightforward estimates
\[\begin{split}& \lvert \dot{\tilde{\gamma}}(\theta)\rvert^2 \\
& = \frac{k_\textnormal{F}^2}{k_3^2} \Big\lvert k_3^2 \cos^2 \theta + (k_1^2 + k_2^2)\sin^2 \theta + 2 k_3 \cos \theta \sin \theta \big( k_1 \sin(\varphi_\alpha + \Delta\varphi_\alpha) + k_2 \cos(\varphi_\alpha + \Delta\varphi_\alpha) \big) \Big\rvert \\ & \leq 2 \frac{k_\textnormal{F}^2}{k_3^2} \lvert k \rvert^2\,.\end{split}\]
Integrating and recalling that $\Delta\theta_\alpha = \Ocal(M^{-1/2})$, the length of this piece is at most of order $k_\textnormal{F}/\sqrt{M}$. The fourth piece has length of the same order as the third piece. We conclude that $\lvert \mathfrak{e}_{\alpha,k}\rvert = \Ocal(k_F M^{-1/2})$. Combining \eqref{eq:nalpha} with \eqref{eq:mu}, and using $u_\alpha(k)^2 \geq N^{-\delta}$ to convert the additive error into a multiplicative error (the new contribution is the dominating error), we obtain \eqref{eq:rescounting}.
\end{proof}

\appendix

\section{The Bosonic Effective Theory}\label{sec:bosonicapproximation}
In this section, we start with the Sawada-type effective Hamiltonian given by \eqref{eq:eff-H} and \eqref{eq:heff}, now assuming the exact canonical commutation relations 
\[
[c_{\alpha} (k), c_{\beta} (l)] = 0 = [c^*_{\alpha} (k), c^*_{\beta} (l)], \qquad [c_{\alpha} (k), c^*_{\beta} (l)] = \delta_{\alpha,\beta} \delta_{k,l}\,.
\]
We show how to diagonalize $h_\textnormal{eff}(k)$ and therefore how to compute the ground state of $\Hcal_\textnormal{eff}$, which inspired the choice of the trial state \eqref{eq:bos-gs}. 

\subsection{Diagonalization of the Effective Hamiltonian}
We follow \cite{GS13}. Dropping the $k$-dependence where no confusion arises, we write the effective Hamiltonian in standard form,
\[h_\textnormal{eff}(k) = \mathbb{H} - \frac{1}{2} \tr (\D+\W)\,,
\]
with
\[\mathbb{H} = \frac{1}{2} \begin{pmatrix} (c^*)^T & c^T \end{pmatrix} \begin{pmatrix} \D+\W & \Wt \\ \Wt & \D+\W\end{pmatrix} \begin{pmatrix}c\\c^* \end{pmatrix},\quad  c = \begin{pmatrix} \vdots \\ c_\alpha \\ \vdots\end{pmatrix},\quad c^* = \begin{pmatrix} \vdots \\ c^*_\alpha \\ \vdots \end {pmatrix},\]
where $c^T = \begin{pmatrix} \cdots & c_\alpha & \cdots \end{pmatrix}$.
The $2I_k\times 2I_k$-matrices $\D$, $\W$, and $\Wt$ are defined in \eqref{eq:blocks2}; they are real and symmetric.  

The Segal field operators $\phi = \begin{pmatrix} \cdots & \phi_\alpha & \cdots \end{pmatrix}^T$ and $\pi = \begin{pmatrix} \cdots & \pi_\alpha & \cdots \end{pmatrix}^T$ are defined by 
\begin{equation}\label{eq:defsegalfields}\begin{pmatrix}
   c \\ c^*
  \end{pmatrix} := \Theta \begin{pmatrix} \phi \\ \pi\end{pmatrix}, \quad \Theta := \frac{1}{\sqrt{2}} \begin{pmatrix} 1 & i\\ 1& -i \end{pmatrix}.
\end{equation}
Notice that $\phi = \frac{1}{\sqrt{2}}(c+c^*) = \phi^*$ and $\pi = \frac{i}{\sqrt{2}}(c^*-c) = \pi^*$.
In terms of the Segal field operators we have
\[\begin{split}\mathbb{H} & = \begin{pmatrix}\phi^T & \pi^T \end{pmatrix} \mathfrak{M} \begin{pmatrix} \phi \\ \pi\end{pmatrix}, \\ \mathfrak{M} & = \frac{1}{2} \Theta^* \begin{pmatrix} \D+\W & \Wt \\ \Wt & \D+\W\end{pmatrix} \Theta = \frac{1}{2} \begin{pmatrix}                                                                                                                                                                                                    \D+\W+\Wt & 0\\ 0 & \D+\W-\Wt                                                                                                                                                                            \end{pmatrix} \in \Cbb^{4I_k\times 4I_k}.\end{split}
\]
The commutator relations of the Segal field operators are invariant under symplectic transformations (which correspond to Bogoliubov transformations of the bosonic creation and annihilation operators). We introduce
\[E := \left( (\D+\W-\Wt)^{1/2} (\D+\W+\Wt) (\D+\W-\Wt)^{1/2} \right)^{1/2} \in \Cbb^{2I_k\times 2I_k}\]
and the symplectic matrix\footnote{$S$ is symplectic means that $S^T J S = J$, with $J = \begin{pmatrix} 0 & \id\\-\id&0\end{pmatrix}$.} 
\[
% \label{eq:defS}
S  := \begin{pmatrix} S_1 & 0 \\ 0 & S_2 \end{pmatrix}, \quad 
S_1 := (\D+\W-\Wt)^{1/2} E^{-1/2}, \quad  S_2  := (\D+\W-\Wt)^{-1/2} E^{1/2}.\]
The square roots are well-defined thanks to Lemma \ref{lem:inverses}.
 Using $S$ we can symplectically blockdiagonalize $\mathfrak{M}$, i.\,e., 
\[S^T \mathfrak{M} S = \frac{1}{2}\begin{pmatrix} E & 0\\ 0 & E\end{pmatrix}.\]
We define transformed field operators $\tilde\phi$ and $\tilde\pi$ by
\[\begin{pmatrix} \phi \\ \pi\end{pmatrix} = S \begin{pmatrix} \tilde\phi \\ \tilde\pi\end{pmatrix}\,.\]
After a change of basis that diagonalizes $E$ into $\diag(e_\gamma: \gamma \in \Ik)$ we call them $\dbtilde\phi$ and $\dbtilde\pi$. Then
\[\begin{split}\mathbb{H} & = \begin{pmatrix}\tilde\phi^T & \tilde\pi^T \end{pmatrix} \frac{1}{2}\begin{pmatrix} E & 0\\ 0 & E\end{pmatrix} \begin{pmatrix} \tilde\phi \\ \tilde\pi\end{pmatrix} = \begin{pmatrix}\dbtilde\phi^T & \dbtilde\pi^T \end{pmatrix} \frac{1}{2}\begin{pmatrix} \diag(e_\gamma) & 0\\ 0 & \diag(e_\gamma)\end{pmatrix} \begin{pmatrix} \dbtilde\phi \\ \dbtilde\pi\end{pmatrix} \\
& = \sum_{\gamma\in\Ik} \frac{e_\gamma}{2} \left(\dbtilde{\phi}_\gamma^2 + \dbtilde{\pi}_\gamma^2 \right) \geq \sum_{\gamma\in \Ik} \frac{e_\gamma}{2} = \frac{1}{2}\tr E.\end{split}\]
We conclude that the ground state energy of the effective theory at momentum $k$ is
\[\begin{split}\inf \sigma(h_\textnormal{eff}(k)) &= \frac{1}{2} \tr \left( E - (\D+\W) \right) \\ & = \frac{1}{2} \tr E - \frac{1}{2}\sum_{\alpha\in\Ik} u_\alpha^2 - \frac{g}{2}\sum_{\alpha \in \Ikp} v_\alpha(k)^2 - \frac{g}{2}\sum_{\alpha\in\Ikm} v_\alpha(-k)^2\,.\end{split}\]
The minimum is attained by the bosonic Fock space vector $\xi_\textnormal{gs}(k)$ satisfying $\dbtilde{c}_\gamma(k) \xi_\textnormal{gs}(k) = 0$ for all $\gamma \in \Ik$. Since the operators $\tilde{c}(k)$ and $\dbtilde{c}(k)$ are related by a change of one-particle basis, this state is actually the same as the state annihilated by the operators $\tilde{c}_\gamma(k)$ for all $\gamma \in \Ik$.

\subsection{Construction of the Bosonic Ground State}
The construction of the bosonic ground state $\xi_\textnormal{gs}(k)$ follows \cite[Section 5.1]{GS13}. The ground state satisfies  $\tilde{c}_\gamma\xi_\textnormal{gs}(k) = 0$ for all $\gamma \in \Ik$, where $\tilde{c}$ is related to the new Segal field operators $\tilde \phi$, $\tilde \pi$ as in  \eqref{eq:defsegalfields}. We express $c$ and $c^*$ through $\tilde{c}$ and $\tilde{c}^*$, so
\[\begin{pmatrix} c \\ c^* \end{pmatrix} = \Theta \begin{pmatrix} \phi \\ \pi \end{pmatrix} = \Theta S \begin{pmatrix} \tilde\phi \\ \tilde\pi \end{pmatrix} = \Theta S \Theta^{-1} \begin{pmatrix} \tilde c \\ \tilde{c}^* \end{pmatrix}.\]
The relation is through the Bogoliubov map
\[\Vcal := \Theta S \Theta^{-1} = \frac{1}{2} \begin{pmatrix} S_1 + S_2 & S_1-S_2 \\ S_1 - S_2 & S_1 + S_2 \end{pmatrix},\]
or more explicitly, the annihilation and creation operators transform as
\begin{equation}\label{eq:trumpsucks}c = \frac{1}{2} (S_1 + S_2)\tilde{c} + \frac{1}{2} (S_1 - S_2)\tilde{c}^*\,, \quad c^* = \frac{1}{2}(S_1 - S_2)\tilde{c} + \frac{1}{2}(S_1+S_2)\tilde{c}^*\,.\end{equation}

\paragraph{Implementation of Bogoliubov Transformations.} Define the unitary operator
\[T_\lambda := e^{\lambda B}\,, \quad \text{with } \lambda \in \Rbb \text{ and }  
 B := \frac{1}{2}\sum_{\alpha,\beta \in \Ik} K_{\alpha,\beta} c^*_\alpha c^*_\beta - \hc\]
 Notice that, since $c^*_\alpha$ and $c^*_\beta$ commute, only the symmetric part of the matrix $K$ contributes. We also assume $K_{\alpha,\beta} \in \Rbb$. For short we write $T := T_1$. The operator $T$ acts as a Bogoliubov transformation, i.\,e., 
\[T^* c_\gamma T = \sum_{\alpha \in \Ik} \left(\cosh K\right)_{\gamma,\alpha} c_\alpha + \sum_{\alpha \in \Ik} \left(\sinh K \right)_{\gamma,\alpha} c^*_\alpha\,.\]
Since $\cosh K$ is a symmetric matrix but $S_1+S_2$ is not symmetric, it is not possible to pick $K$ such that $\cosh K = \frac{1}{2}(S_1+S_2)$. Instead we choose
\[K := \log \lvert S_1^T\rvert\,.\]
This is well-defined because $\lvert S_1^T\rvert$ is symmetric and strictly positive definite, according to Lemma \ref{lem:inverses}. Furthermore $K$ is real and symmetric, so we obtain
\begin{equation}\label{eq:bogob}T^* c_\gamma T = \sum_{\alpha \in \Ik} \frac{1}{2} \left( \lvert S_1^T\rvert + \lvert S_1 ^T\rvert^{-1} \right)_{\gamma,\alpha} c_\alpha + \sum_{\alpha \in \Ik} \frac{1}{2} \left( \lvert S_1^T \rvert - \lvert S_1^T\rvert^{-1} \right)_{\gamma,\alpha} c^*_\alpha.\end{equation}
Let us introduce the polar decomposition $S_1 = O \lvert S_1\rvert$ with some orthogonal matrix $O$. Then
\[ \frac{1}{2}\left(S_1 + S_2\right) = \frac{1}{2} \left( \lvert S_1^T \rvert + \lvert S_1^T\rvert^{-1} \right)O^T, \quad \frac{1}{2}\left(S_1 - S_2\right) = \frac{1}{2} \left( \lvert S_1^T \rvert - \lvert S_1^T\rvert^{-1} \right)O^T.\]
The orthogonal matrix $O^T$ acts as a change of the one-particle basis, so the vacuum transformed by the Bogoliubov transformation in \eqref{eq:bogob} is the same as the vacuum transformed by the Bogoliubov transformation in \eqref{eq:trumpsucks}.

We conclude that the ground state of the total system is given by
\[\xi_\textnormal{gs} = \bigotimes_{k\in \north} \xi_\textnormal{gs}(k)\,, \quad \xi_\textnormal{gs}(k) = T(k)\Omega\,,\]
where $\Omega$ is the vacuum vector in bosonic Fock space, and we restored the $k$-dependence in the notation. Since operators at different $k$ commute, we can take the tensor product into the exponent as a sum, yielding
\[\xi_\textnormal{gs} = \exp\Big( \sum_{k \in \north}\frac{1}{2}\sum_{\alpha,\beta \in \Ik} K_{\alpha,\beta}(k) c^*_\alpha(k) c^*_\beta(k) - \hc \Big) \Omega\,.\]

\section*{Acknowledgements}
We thank Christian Hainzl and Jan Philip Solovej for helpful discussion. N.~B.\ and R.~S.\ have received funding from the European Research Council (ERC) under the European Union’s Horizon 2020 research and innovation programme (grant agreement 694227). R.~S.\ was also supported by Austrian Science Fund (FWF), project Nr.\ P27533-N27. The work of M.~P.\ has been supported by the Swiss National Science Foundation via the grant ``Mathematical Aspects of Many-Body Quantum Systems''. B.~S.\ gratefully acknowledges support from the NCCR SwissMAP and from the 
Swiss National Science Foundation through the grant ``Dynamical 
and energetic properties of Bose-Einstein condensates''. 

\bibliographystyle{abbrv}
\bibliography{completepatches}{}

\end{document}